\documentclass[hyper]{prop2015}
\usepackage[english]{babel}
\usepackage[utf8]{inputenc}

\usepackage{amsmath, amsthm, amssymb, amsbsy, bm, mathrsfs}
\usepackage{enumerate}
\usepackage[toc,page]{appendix}
\usepackage{tikz-cd}
\usepackage{bbm}

\newtheorem{thm}{Theorem}[section]

\newtheorem{tvrz}[thm]{Proposition}

\newtheorem{theorem}[thm]{Theorem}
\theoremstyle{definition}

\theoremstyle{remark}
\newtheorem{rem}[thm]{Remark}
\theoremstyle{definition}
\newtheorem{example}[thm]{Example}

\def\<{\langle}
\def\>{\rangle}
\def\~{\widetilde}
\def\^{\wedge}

\def\tr{\triangleright}
\def\cD{\nabla}


\def\fb{\mathbf{b}}
\def\fk{\mathbf{k}}
\def\fa{\mathbf{a}}
\def\fc{\mathbf{c}}
\def\fh{\mathbf{h}}
\def\ftheta{\bm{\theta}}
\def\falpha{\bm{\alpha}}
\def\fA{\mathbf{A}}

\def\fE{\mathbf{E}}

\def\fM{\mathbf{M}}
\def\fg{\mathbf{g}}

\def\fB{\mathbf{B}}

\def\fK{\mathbf{K}}

\def\fPi{\bm{\Pi}}

\def\f1{\mathbf{1}}

\def\fPsi{\mathbf{\Psi}}

\def\gm{\mathbf{G}}

\def\g{\mathfrak{g}}

\def\d{\mathfrak{d}}

\def\su{\mathfrak{su}}

\def\X{\mathfrak{X}}

\def\frc{\mathfrak{c}}

\def\ti{\text{i}}
\def\tj{\text{j}}
\def\tq{\text{q}}
\def\io{\mathit{i}}

\def\E{\mathcal{E}}

\def\S{\mathcal{S}}
\def\RS{\mathcal{R}}

\def\R{\mathbbm{R}}

\def\gTM{\mathbbm{T}M}
\def\gTS{\mathbbm{T}S}

\newcommand{\bma}[4]{\begin{pmatrix} #1 & #2 \\ #3 & #4 \end{pmatrix}}

\newcommand{\Li}[1]{ \mathcal{L}_{#1}}


\DeclareMathOperator{\vol}{vol}
\DeclareMathOperator{\Ad}{Ad}
\DeclareMathOperator{\LC}{LC}
\DeclareMathOperator{\ad}{ad}
\DeclareMathOperator{\End}{End}

\DeclareMathOperator{\Hom}{Hom}

\DeclareMathOperator{\rk}{rk}
\DeclareMathOperator{\Aut}{Aut}

\DeclareMathOperator{\Sym}{Sym}
\DeclareMathOperator{\GL}{GL}
\DeclareMathOperator{\Ric}{Ric}

\DeclareMathOperator{\Lie}{Lie}

\DeclareMathOperator{\cyc}{cyclic}

\DeclareMathOperator{\Tr}{Tr}

\DeclareMathOperator{\Div}{div}

\category{Proceedings}
\keywords{(Quasi-)Poisson--Lie T-duality, effective actions, sigma models, Courant algebroids, generalized metric, dilaton, Levi-Civita connections}
\subtitle{\href{http://www.maths.dur.ac.uk/lms/109/index.html}{LMS/EPSRC Durham Symposium on Higher Structures in M-Theory}}
\title{Effective Actions for $\sigma$-Models of Poisson--Lie Type}
\author[B. Jurčo]{Branislav Jurčo\inst{a}}
\author[J. Vysoký]{Jan Vysoký\inst{b,}\footnote{Corresponding author e-mail:~\href{mailto:jan.vysoky@fjfi.cvut.cz}{\textsf{jan.vysoky@fjfi.cvut.cz}}}}
\address[1]{Mathematical Institute, Faculty of Mathematics and Physics, Charles University, Sokolovská 83, 18675 Prague 8, Czech Republic}
\address[2]{Faculty of Nuclear Sciences and Physical Engineering, Czech Technical University in Prague, Břehová 7, 11519 Prague 1, Czech Republic}
\begin{acknowledgements}
We would like to thank Pavol Ševera and Fridrich Valach for helpful discussions. The research of B.J. was supported by grant GAČR P201/12/G028. J.V. is grateful for a financial support from MŠMT under grant no. RVO 14000. He would also like to acknowledge the contribution of the COST Action MP1405.
\end{acknowledgements}

\begin{abstract}
(Quasi-)Poisson--Lie T-duality of string effective actions is described in the framework of generalized geometry of Courant algebroids. The approach is based on a generalization of Riemannian geometry in the context of Courant algebroids, including a proper version of a Levi-Civita connection. In our approach, the dilaton field is encoded in a Levi-Civita connection and its form is determined by the Courant algebroid geometry. Explicit examples of background solutions are provided using the approach developed in the paper.
\end{abstract}
\shortabstract
\begin{document}
\maketitle

\section{Introduction, Poisson--Lie T-duality}
In their original series of papers \cite{Klimcik:1995ux, Klimcik:1995dy, Klimcik:1995jn}, Klimčík and Ševera proposed a new kind of non-Abelian duality, the so called Poisson--Lie T-duality, between two-dimensional $\sigma$-models. Recently, this observation was interpreted in terms of geometry of Courant algebroids in \cite{Severa:2015hta,Severa:2016prq,Severa:2017kcs}. 

Let us recall the main statement of the Poisson--Lie T-duality in the case relevant for this paper. Details can be found in   \cite{Severa:2016prq, Severa:2017kcs} and, to some extent, in the following sections. A \textbf{$2$-dimensional $\sigma$-model} is a field theory given by the action functional
\begin{equation} \label{eq_SMaction}
\S_{\sigma}[\ell] = \int_{\Sigma} \<h, \ell^{\ast}(g)\>_{h} \cdot d\vol_{h} + \int_{\Sigma} \ell^{\ast}(B) + \int_{X} \ell^{\ast}(H),
\end{equation}
where the fields are smooth maps $\ell: \Sigma \rightarrow M$, and 
\begin{enumerate}[i)]
\item $(\Sigma,h)$ is a $2$-dimensional oriented pseudo-Rie- \linebreak mannian smooth manifold, the \textbf{worldsheet};
\item $M$ is a smooth manifold, the \textbf{target}, equipped with a metric $g$, a $2$-form $B$ and a closed $3$-form $H$;
\item $\Sigma$ is the boundary of a $3$-dimensional smooth manifold $X$ and $\ell$ in the last term is an arbitrary extension of the map $\ell: \Sigma \rightarrow M$ to $X$. 
\end{enumerate}
Using the Stokes theorem, the theory is manifestly invariant under the change $B \mapsto B - C$ and $H \mapsto H + dC$, where $C \in \Omega^{2}(M)$ is an arbitrary $2$-form. For this reason, the term with $B$ can be omitted without the loss of generality.

Now, for Poisson--Lie T-duality (in its simplified form), one assumes the target manifold to be a left coset space $D/G$, where we consider
\begin{enumerate}[i)]
\item $D$ being a connected Lie group whose Lie algebra $\d = \Lie(D)$ is equipped with a non-degenerate symmetric bilinear form $\<\cdot,\cdot\>_{\d}$;
\item $G \subset D$ being a connected closed Lie subgroup whose Lie algebra $\g = \Lie(G)$ is Lagrangian with respect to $\<\cdot,\cdot\>_{\d}$, that is $\g = \g^{\perp}$. 
\end{enumerate}
In accordance with \cite{Severa:2017kcs}, in this case, the action (\ref{eq_SMaction}) describes a \textbf{$\sigma$-model of Poisson--Lie type}. Now, one can construct a special class of background fields $(g,B,H)$ on the target $D/G$, using a fixed half-dimensional subspace $\E_{+} \subset \d$. See the text under equation (\ref{eq_V+asgB}) for details. 

Interestingly, if one repeats this procedure for a different Lagrangian subgroup $G' \subset D$ and constructs the respective fields $(g',B',H')$ starting from the same subspace $\E_{+}$, the corresponding $\sigma$-models on $D/G$ and $D/G'$ are equivalent. More precisely, there exists an (almost) symplectomorphism  of the respective phase spaces intertwining the Hamiltonians. See \cite{Severa:2017kcs} for details. This is the main statement of \textbf{Poisson--Lie T-duality}. 

On the other hand, for any $\sigma$-model (\ref{eq_SMaction}), one can consider the corresponding low-energy effective action 
\begin{equation} \label{eq_Seffaction}
\begin{split}
\S_{\text{eff}}[g,B,\phi] = \int_{M} e^{-2\phi} \{ & \RS(g) - \frac{1}{2}\<H + dB,H+dB\>_{g}\,+ \\
& + 4 \< d\phi,d\phi\>_{g} \} \cdot d\vol_{g},
\end{split}
\end{equation}
where $g$ and $B$ are now dynamical fields on $M$, and $\phi$ is a smooth function on $M$ called the dilaton field. Equivalently, this is a bosonic part of the type $\text{II}$ supergravity where the Ramond-Ramond fields are omitted. In this paper, however, the dimension of $M$ does not need to be ten. 

For some time, the theory of Courant algebroids, with a proper generalization of the Levi-Civita connection, seems to be the correct approach to a geometrical description of low-energy effective actions and various supergravities. See \cite{Coimbra:2011nw, Coimbra:2012af} for type \text{II} supergravities, \cite{Garcia-Fernandez:2013gja, Garcia-Fernandez:2016ofz} for the heterotic case, and our own work on this topic in \cite{Jurco:2015xra, Jurco:2015bfs, Jurco:2016emw, Vysoky:2017epf}. The same idea is pivotal in double field theory, for a comprehensive list of references see, e.g. the review in \cite{Hohm:2013bwa}. For a recent work on Poisson--Lie T-duality and related topics, see also \cite{Hassler:2016srl, Demulder:2018lmj, Hoare:2017ukq, Hoare:2018ebg}.

It is natural to combine the Courant algebroid approach to Poisson--Lie T-duality with the geometrical description of effective theories. 

We have done this in \cite{Jurco:2017gii} for a special case where $D$ is diffeomorphic to a product $G \times G^{\ast}$ of two mutually dual Poisson--Lie groups. The corresponding homogeneous spaces are $D/G \cong G^{\ast}$ and $D/G^{\ast} \cong G$ and this scenario in fact corresponds to the original setting of Poisson--Lie duality in \cite{Klimcik:1995ux}. Unfortunately, this paper contains a quite cumbersome derivation of the formula for the dilaton. Moreover, neither an explicit form of algebraic equations for the subspace $\E_{+} \subset \d$ nor examples are given. 

These issues are addressed in this paper. In Section \ref{sec_Maninpairs}, we recall the rich geometrical content of Manin pairs $(\d,\g)$ and their integration to Lie group pairs $(D,G)$. In particular, there is a natural structure of an exact Courant algebroid on the trivial vector bundle $D/G \times \d$ and a quasi-Poisson tensor on the coset space $D/G$. In Section \ref{sec_backgrounds}, we use these building blocks to construct background fields of the low-energy effective action corresponding to an arbitrary $\sigma$-model of Poisson--Lie type. In particular, one has to employ the apparatus of Courant algebroid connections to find the dilaton field $\phi$, c.f. Theorem \ref{thm_dilaton}. The proof of the resulting formula (\ref{eq_dilatonfinal}) is moved to Appendix \ref{ap_dilaton}. It is a great simplification and generalization to the quasi-Poisson case of the one presented in \cite{Jurco:2017gii}.

In Section \ref{sec_equations}, we prove the two main results of this paper. In Theorem \ref{thm_main2}, we show that equations of motion for the effective actions are, for background fields constructed in Section \ref{sec_backgrounds}, equivalent to a system of algebraic equations for the subspace $\E_{+} \subset \d$. As $\E_{+}$ is common to all $\sigma$-models related by (quasi-)Poisson--Lie $T$-duality, one immediately obtains Theorem \ref{thm_PLTdual}, which should be viewed as a proof of the consistency of Poisson--Lie T-duality with the induced low-energy theories. We prove the theorem using the language of Levi-Civita connections on Courant algebroids, see \cite{Jurco:2016emw}. The detailed derivation of the algebraic system of equations for $\E_{+}$ has been included as Appendix \ref{ap_eom}. 

At this moment, we must point out that recently a very interesting paper \cite{Severa:2018pag} appeared. Their main claim is very similar to our Theorem \ref{thm_PLTdual}. Moreover, Ramond-Ramond fields and more general $\sigma$-models are considered. They use methods developed in \cite{Severa:2016lwc}. In particular, instead of general Courant algebroid connections, they construct a (different) generalized Ricci scalar and a scalar curvature without using Levi-Civita connections. Instead they make use of a properly defined divergence operator $\Div: \Gamma(E) \rightarrow C^{\infty}(M)$. We believe that the approaches developed in \cite{Severa:2018pag} and in the present paper are complementary to each other; we write down explicit formulas in terms of the quasi-Poisson geometry on $D/G$, discuss the algebraic system of equations in more detail, and find some interesting non-trivial solutions for a type $\text{II}$ supergravity with a dilaton field in Section \ref{sec_examples}. 

\section{Manin pairs and their geometry} \label{sec_Maninpairs}
Let $\d$ be a Lie algebra together with a non-degenerate symmetric and invariant bilinear form $\<\cdot,\cdot\>_{\d}$. We will write $g_{\d} = \<\cdot,\cdot\>_{\d}$ whenever it is more convenient. For any subalgebra $\g \subseteq \d$, we say that $(\d,\g)$ is a \textbf{Manin pair} if $\g = \g^{\perp}$, that is, $\g$ is Lagrangian (maximally isotropic) with respect to $\<\cdot,\cdot\>_{\d}$. We assume that there are connected Lie groups $D$ and $G$ and such that $\d = \Lie(D)$ and $\g = \Lie(G)$, respectively, and such that $G \subset D$ forms a closed subgroup of $D$. For a detailed treatment of these topics, see \cite{Bursztyn:0310445,Bursztyn:0710.0639,Alekseev:0006168}

Let $\ti: \g \rightarrow \d$ denote the inclusion map, and let $\tq: \d \rightarrow \g^{\ast}$ be the quotient map $\d \rightarrow \d / \g$ composed with the canonical isomorphism $\d/\g \rightarrow \g^{\ast}$ induced by $\<\cdot,\cdot\>_{\d}$. By construction, we obtain a short exact sequence of vector spaces
\begin{equation} \label{eq_ManinSES}
\begin{tikzcd}
0 \arrow{r} & \g \arrow{r}{\ti} & \d \arrow{r}{\tq} & \g^{\ast} \arrow{r} \arrow[bend left=20, dashed]{l}{\tj} & 0.
\end{tikzcd}
\end{equation}
We say that $\tj: \g^{\ast} \rightarrow \d$ is an \textbf{isotropic splitting} of the sequence (\ref{eq_ManinSES}) if $\tq \circ \tj = 1$ and the subspace $\tj(\g^{\ast}) \subseteq \d$ is isotropic with respect to $\<\cdot,\cdot\>_{\d}$. There always exists such a splitting. Moreover, for any other isotropic splitting $\tj': \g^{\ast} \rightarrow \d$, there is a unique bivector $\theta \in \Lambda^{2} \g$ such that 
\begin{equation} \label{eq_splitchange}
\tj'(\xi) = \tj(\xi) + \ti( \theta(\xi)),
\end{equation}
where we always identify the bivector $\theta$ with the induced linear map $\xi \mapsto \theta(\cdot,\xi)$. The triple $(\d,\g,\tj)$ is called the \textbf{split Manin pair}. Each split Manin pair induces a unique \textbf{Lie quasi-bialgebra} $(\g,\delta,\mu)$ where $\delta: \g \rightarrow \Lambda^{2} \g$ and $\mu \in \Lambda^{3} \g$ are given by 
\begin{equation} \label{eq_defdeltamu}
\begin{split}
\delta(x)(\xi,\eta) = & \ \< [\tj(\xi), \tj(\eta)]_{\d}, \ti(x) \>_{\d}, \\ \mu(\xi,\eta,\zeta) = & \ \< [\tj(\xi),\tj(\eta)]_{\d}, \tj(\zeta) \>_{\d}. 
\end{split}
\end{equation}
Given a splitting $\tj$, one can construct a vector space isomorphism $\g \oplus \g^{\ast} \rightarrow \d$ and equip $\g \oplus \g^{\ast}$ with the structure of a \textbf{double of the Lie quasi-bialgebra $\g$}. It is uniquely determined by $(\g,\delta,\mu)$ and for all $(x,\xi), (y,\eta) \in \g \oplus \g^{\ast}$, it has the form
\begin{equation} \label{eq_doublebracket}
\begin{split}
[(x,\xi),(y,\eta)]_{\d} = \big( & [x,y]_{\g} + \ad^{\ast}_{\xi}(y) - \ad^{\ast}_{\eta}(x) + \mu(\xi,\eta,\cdot), \\
& [\xi,\eta]_{\g^{\ast}} + \ad^{\ast}_{x}(\eta) - \ad^{\ast}_{y}(\xi) \big),
\end{split}
\end{equation}
where $\< [\xi,\eta]_{\g^{\ast}}, x \> := \delta(x)(\xi,\eta)$ and $\ad^{\ast}_{\xi} := -[\xi,\cdot]_{\g^{\ast}}^{T}$. There are certain compatibility conditions among $(\g,\delta,\mu)$ which are most easily read out of the Jacobi identities for the bracket (\ref{eq_doublebracket}). Note that for $\mu \neq 0$, the Lagrangian subspace $\tj(\g^{\ast}) \subseteq \d$ is not a subalgebra and for general $\mu$, the skew-symmetric bracket $[\cdot,\cdot]_{\g^{\ast}}$ is not Lie. The Lie quasi-bialgebra $(\g,\delta',\mu')$corresponding to another isotropic splitting $\tj'$ can be expressed using $(\g,\delta,\mu)$ and the unique bivector $\theta$ by plugging (\ref{eq_splitchange}) into the definitions (\ref{eq_defdeltamu}).

As $G$ is assumed to be closed, there is a unique smooth manifold structure on the coset space $S = D / G$, making the quotient map $\pi_{0}: D \rightarrow S$ into a smooth surjective submersion and $D$ into a total space of a principal $G$-bundle with the Lie group $G$ acting on $D$ via the restriction of the right multiplication.

There is a natural transitive left action $\tr: D \times S \rightarrow S$ defined for all $d,k \in D$ by the formula
\begin{equation}
d \tr \pi_{0}(k) = \pi_{0}(dk), 
\end{equation}
called the \textbf{dressing action} of $D$ on $S$. Let $\#^{\tr}: \d \rightarrow \X(S)$ denote the corresponding infinitesimal action. It can be used to define a fiber-wise surjective vector bundle map $\rho: S \times \d \rightarrow TS$ given as $\rho(s,x) = \#^{\tr}_{s}(x)$. The trivial vector bundle $E = S \times \d$ can be equipped with a fiber-wise extension of the form $\<\cdot,\cdot\>_{\d}$, denoted by the same symbol. We may thus form a sequence of vector bundles
\begin{equation}
\begin{tikzcd} \label{eq_CourantSES}
0 \arrow{r} & T^{\ast}S \arrow{r}{\rho^{\ast}} & E \arrow{r}{\rho} & TS \arrow{r} \arrow[bend left=20, dashed]{l}{\sigma} & 0, 
\end{tikzcd}
\end{equation}
where $\rho^{\ast} = g_{\d}^{-1} \circ \rho^{T}$. Here $g_{\d}: E \rightarrow E^{\ast}$ denotes the vector bundle isomorphism induced by $g_{\d} = \<\cdot,\cdot\>_{\d}$. This sequence is exact, as this is in fact the Atiyah sequence for the principal bundle $\pi_{0}: D \rightarrow S$. See \cite{Jurco:2017gii} for details. $E$ can be equipped with a unique bracket $[\cdot,\cdot]_{E}$ extending the Lie bracket $-[\cdot,\cdot]_{\d}$ on constant sections such that $(E,\rho,\<\cdot,\cdot\>_{\d},[\cdot,\cdot]_{E})$ becomes an \textbf{exact Courant algebroid}. Set
\begin{equation} \label{eq_CourantbracketE}
\begin{split}
\<[\psi,\psi']_{E},\psi''\>_{\d} = & \ \< \Li{\rho(\psi)}(\psi') - \Li{\rho(\psi')}(\psi)\,- \\ 
& - [\psi,\psi']_{\d}, \psi''\>_{\d} + \< \Li{\rho(\psi'')}(\psi), \psi'\>_{\d},
\end{split}
\end{equation}
for all $\psi,\psi',\psi'' \in \Gamma(E) = C^{\infty}(S,\d)$. Note that the last term is the only difference between $[\cdot,\cdot]_{E}$ and the bracket corresponding to the respective Atiyah Lie algebroid.

For every exact Courant algebroid, there exists an isotropic splitting of the sequence (\ref{eq_CourantSES}), that is a vector bundle map $\sigma: TS \rightarrow E$ satisfying $\rho \circ \sigma = 1$ and $\sigma(TS) \subseteq E$ forming a Lagrangian subbundle with respect to $\<\cdot,\cdot\>_{\d}$. Every such $\sigma$ induces a Courant algebroid isomorphism $\fPsi_{\sigma}: \gTS \rightarrow E$ where $\gTS = TS \oplus T^{\ast}S$ is equipped with the \textbf{$H_{\sigma}$-twisted Dorfman bracket}
\begin{equation} \label{eq_HDorfman}
\begin{split}
[(X,\xi),(Y,\eta)]_{D}^{H_{\sigma}} = \big([X,Y], & \Li{X}(\eta) - \io_{Y}(d\xi)\,- \\
& - H_{\sigma}(X,Y,\cdot) \big),
\end{split}
\end{equation}
for all $X,Y \in \X(S)$ and $\xi,\eta \in \Omega^{1}(S)$. The anchor on $\gTS$ is the canonical projection onto $TS$ and the pairing $\<\cdot,\cdot\>_{\gTS}$ is the canonical one between $1$-forms and vector fields. The closed $3$-form $H_{\sigma}$ represents the so called \textbf{Ševera class of $E$} and it is obtained via the formula 
\begin{equation} \label{eq_severaclass}
H_{\sigma}(X,Y,Z) = -\< [\sigma(X),\sigma(Y)]_{E}, \sigma(Z) \>_{\d}, 
\end{equation}
for all $X,Y,Z \in \X(S)$. For any other isotropic splitting $\sigma'$ of (\ref{eq_CourantSES}), one has $[H_{\sigma}]_{dR} = [H_{\sigma'}]_{dR}$. For a good reference on the topic of exact Courant algebroids and their splittings, see e.g. \cite{Kotov:2010wr}. 

 Can some splittings $\tj: \g^{\ast} \rightarrow \d$ of (\ref{eq_ManinSES}) be of use in order to construct splittings $\sigma$ of the short exact sequence (\ref{eq_CourantSES})? As $\sigma$ is uniquely determined by its image, it is natural to consider a subbundle $S \times \tj(\g^{\ast}) \subseteq E$. This is a Lagrangian subbundle of a correct rank. One only has to show that it is complementary to $\ker(\rho)$. Unfortunately, this is not true for general $\tj$. First, note that for any $s \in S$, one can unambiguously define a subspace $\Ad_{s}(\g) := \Ad_{d}(\g)$ for any $d \in \pi_{0}^{-1}(s)$. One says that the isotropic splitting $\tj$ of (\ref{eq_ManinSES}) is \textbf{admissible at $s \in S$}, if 
\begin{equation} \label{eq_admissible}
\d = \Ad_{s}(\g) \oplus \tj(\g^{\ast}).
\end{equation}
Every splitting is admissible at $s_{0} = \pi_{0}(G)$. If $\tj$ is admissible at $s$, it is admissible at all points of some neighborhood of $s$. Finally, for every $s \in S$, there exists some splitting admissible at $s$. For the proof of the last assertion, see \cite{Alekseev:0006168}. We say that $(D,G)$ is a \textbf{complete group pair} if it admits an \textbf{everywhere admissible splitting}. Then the following statements are equivalent:
\begin{enumerate}[i)]
\item $\tj: \g^{\ast} \rightarrow \d$ is everywhere admissible;
\item The subbundle $S \times \tj(\g^{\ast})$ is complementary to $\ker(\rho)$. In other words, there exists a unique isotropic splitting $\sigma: TS \rightarrow E$ of (\ref{eq_CourantSES}), such that $\sigma(TS) = S \times \tj(\g^{\ast})$;
\item For each $s \in S$, the map $\xi \mapsto \#^{\tr}_{s}(\tj(\xi))$ is a linear isomorphism. In other words, the module $\X(S)$ is generated by vector fields $\xi^{\tr} := \#^{\tr}(\tj(\xi))$. The splitting $\sigma$ from the previous point can be then uniquely described by 
\begin{equation} \label{eq_sigmaasxitr}
\sigma(\xi^{\tr}) = \tj(\xi), 
\end{equation}
where we identify elements of $\d$ with the constant sections of $E$;
\item For a splitting $\tj$, one can write the adjoint action $\Ad$ of $D$ as a formal block matrix with respect to the isomorphism $\d \cong \g \oplus \g^{\ast}$ induced by the choice of $\tj$:
\begin{equation} \label{eq_Adblock}
\Ad_{d} = \bma{\fk(d)}{\fb(d)}{\fc(d)}{\fa(d)},
\end{equation}
where $\fk(d): \g \rightarrow \g$ and similarly for the other blocks. For everywhere admissible $j$, $\fk(d)$ \textit{is invertible} for all $d \in D$. 
\end{enumerate}
Note, it follows immediately from (\ref{eq_severaclass}) and (\ref{eq_sigmaasxitr}) that $H_{\sigma}$ is related to $\mu \in \Lambda^{3} \g$ by 
\begin{equation} \label{eq_Hinadmissiblesplit}
H_{\sigma}( \xi^{\tr}, \eta^{\tr}, \zeta^{\tr}) = \mu(\xi,\eta,\zeta),
\end{equation}
for all $\xi,\eta,\zeta \in \g^{\ast}$. This observation underlines the general principle - evaluate everything on the special vector fields $\xi^{\tr}$ to make the calculations easier.  

Finally, the choice of an isotropic splitting $\tj: \g^{\ast} \rightarrow \d$ allows one to induce additional structure on $D$. Indeed, consider the tensor $r_{\tj} \in \d \otimes \d$ defined by 
\begin{equation} \label{eq_standardr}
r_{\tj}( \xi,\eta) = \< \ti^{T}(\xi), \tj^{T}(\eta) \>,
\end{equation}
for all $\xi,\eta \in \d^{\ast}$. This is called the \textbf{standard $r$-matrix corresponding to $\tj$}. Let 
\begin{equation} \label{eq_PiDtensor}
\Pi^{D}_{\tj} := r_{\tj}^{L} - r_{\tj}^{R},
\end{equation}
where $L$ and $R$ denote the left and right translation of an element of $\d \otimes \d$ along the group $D$, respectively. One can show that $\Pi^{D}_{\tj} \in \X^{2}(D)$ forms a multiplicative bivector field on $D$, that is
\begin{equation}
( \Pi_{\tj}^{D})_{hk} = L_{h \ast}(\Pi_{\tj}^{D})_{k} + R_{k \ast}( \Pi^{D}_{\tj})_{h} ,
\end{equation}
for all $h,k \in D$. It is not a Poisson tensor though. Let $\mu_{\d} := (\Lambda^{3} \ti)(\mu)$ be the $3$-vector $\mu$ viewed as an element of $\Lambda^{3} \d$. If $[\cdot,\cdot]$ denotes the usual Schouten-Nijenhuis bracket, we obtain 
\begin{equation}
\frac{1}{2} [\Pi^{D}_{\tj}, \Pi^{D}_{\tj}] = \mu^{L}_{\d} - \mu^{R}_{\d}, \; \; [\Pi^{D}_{\tj}, \mu^{L}_{\d} ] = [\Pi^{D}_{\tj}, \mu^{R}_{\d}] = 0. 
\end{equation}
The tensor $\Pi^{D}_{\tj}$ depends on the choice of the splitting $\tj$. When $\tj'$ is related to $\tj$ as in (\ref{eq_splitchange}), one finds 
$\Pi_{\tj'}^{D} = \Pi_{\tj}^{D} + \theta_{\d}^{L} - \theta_{\d}^{R}$, where $\theta_{\d} = (\Lambda^{2} \ti)(\theta)$. There is a well-defined bivector $\Pi^{S}_{\tj} \in \X^{2}(S)$, such that $\pi_{0 \ast}( \Pi^{D}_{\tj}) = \Pi^{S}_{\tj}$. It can be written directly as $\Pi^{S}_{\tj} = - (\Lambda^{2} \#^{\tr})(r_{\tj})$. From the above equations for $\Pi_{D}$, one can directly derive the identities 
\begin{equation}
\frac{1}{2}[ \Pi^{S}_{\tj}, \Pi^{S}_{\tj}] = -(\Lambda^{3} \#^{\tr})(\mu_{\d}), \; \; [\Pi^{S}_{\tj}, (\Lambda^{3} \#^{\tr})(\mu_{\d})] = 0. 
\end{equation}
Moreover, under the change of splitting (\ref{eq_splitchange}), one has 
\begin{equation} \Pi^{S}_{\tj'} = \Pi^{S}_{\tj} - (\Lambda^{2} \#^{\tr})(\theta_{\d}). \end{equation}
In the following, we will assume a fixed splitting $\tj$ and omit the corresponding subscript. 

Now, suppose that $\tj$ is everywhere admissible. For each $x \in \g$, we may define a $1$-form $x^{\tr} \in \Omega^{1}(S)$ by requiring $x^{\tr}( \xi^{\tr}) = \xi(x)$ for all $\xi \in \g^{\ast}$. This determines it uniquely. The infinitesimal action $\#^{\tr}$ can be then written as 
\begin{equation}
\#^{\tr}( \ti(x) + \tj(\xi)) = \xi^{\tr} - \Pi^{S}(x^{\tr}). 
\end{equation}
Equivalently, $\Pi^{S}$ can be uniquely characterized by equation $\Pi^{S}(x^{\tr},y^{\tr}) = y^{\tr}(\#^{\tr}(\ti(x)))$. To save some space, define a function $\fPi \in C^{\infty}(S, \Lambda^{2} \g^{\ast})$ for all $x,y \in \g$ as 
\begin{equation} \label{eq_fPi}
\fPi(x,y) = \Pi^{S}(x^{\tr},y^{\tr}). 
\end{equation}
We will often view $\fPi$ as an $S$-dependent map from $\g$ to $\g^{\ast}$ defined by $\fPi(y) = \fPi(\cdot,y)$. The objects introduced in this paragraph satisfy some important relations which are in detail discussed in Appendix \ref{ap_relations}. There is a non-trivial observation relating $\fPi$ to the block form of the adjoint representation (\ref{eq_Adblock}). We formulate it as a proposition.
\begin{tvrz} \label{tvrz_fPiasck}
Let $\fk: D \rightarrow \Aut(\g)$ and $\fc: D \rightarrow$\linebreak $ \Hom(\g, \g^{\ast})$ be the smooth maps defined by the block decomposition (\ref{eq_Adblock}). 
Then the map $\fPi \in C^{\infty}(S, \Hom(\g,\g^{\ast}))$ introduced above can be written as $\fPi \circ \pi_{0} = \fc \cdot \fk^{-1}$.
\end{tvrz}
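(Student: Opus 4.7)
The plan is to fix a lift $d \in \pi_0^{-1}(s)$ and unwind the defining relation $\fPi(s)(x,y) = y^{\tr}|_s(\#^{\tr}_s(\ti(x)))$ directly in terms of $\Ad_d$. Writing $e^{tz}d = d \cdot e^{t\Ad_{d^{-1}}(z)}$ one obtains $\#^{\tr}_s(z) = d\pi_0|_d \circ L_{d*}(\Ad_{d^{-1}}(z))$. Since $L_{d*}(\g)$ is exactly the vertical tangent space at $d$, this map factors through an isomorphism $\d/\g \cong T_sS$; via the further identification $\d/\g \cong \g^*$ induced by $\tq$, the tangent vector $\#^{\tr}_s(z)$ corresponds to $\tq(\Ad_{d^{-1}}(z))$, and in particular $\xi^{\tr}|_s$ corresponds to $\tq(\Ad_{d^{-1}}(\tj(\xi)))$.

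The next step is to determine the unique $\eta \in \g^*$ with $\#^{\tr}_s(\ti(x)) = \eta^{\tr}|_s$, which amounts to $\ti(x) - \tj(\eta) \in \Ad_d(\g)$. From the block decomposition (\ref{eq_Adblock}) one has $\Ad_d(\ti(u)) = \ti(\fk(d)u) + \tj(\fc(d)u)$; admissibility ensures $\fk(d)$ is invertible, so $\Ad_d(\g) = \{\ti(v) + \tj(\fc(d)\fk(d)^{-1} v) : v \in \g\}$. Matching the $\ti$- and $\tj$-components forces $\eta = -\fc(d)\fk(d)^{-1}(x)$, and combined with $y^{\tr}(\eta^{\tr}) = \eta(y)$ this yields
\[ \fPi(s)(x,y) = -\bigl[\fc(d)\fk(d)^{-1}(x)\bigr](y). \]

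To match the stated formula I would then invoke the fact that $\Ad_d(\g) \subseteq \d$ is Lagrangian, since $\Ad_d$ preserves $\<\cdot,\cdot\>_{\d}$ and $\g$ is Lagrangian. Expanding the isotropy condition on the above parametrisation with $\<\ti(v),\tj(\alpha)\>_{\d} = \alpha(v)$ and $\<\ti,\ti\>_{\d} = \<\tj,\tj\>_{\d} = 0$ gives precisely the antisymmetry relation $[\fc(d)\fk(d)^{-1}(v)](v') + [\fc(d)\fk(d)^{-1}(v')](v) = 0$, turning the previous display into $\fPi(s)(x,y) = [\fc(d)\fk(d)^{-1}(y)](x)$. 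Under the convention $\fPi(s)(y) := \fPi(s)(\cdot,y)$ this is exactly $\fPi(s) = \fc(d)\fk(d)^{-1}$, and hence $\fPi \circ \pi_0 = \fc\cdot\fk^{-1}$.

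The main obstacle is less any single computation than careful bookkeeping of the identifications $T_sS \cong \d/\g \cong \g^*$ together with the two equivalent characterisations of $\Pi^S$, since a sloppy convention immediately flips a sign and wrecks the Lagrangian step. As a built-in sanity check one should verify that $\fc(d)\fk(d)^{-1}$ is constant on fibres of $\pi_0$: for $g \in G$ the subalgebra $\g$ is preserved by $\Ad_g$, so the $\fc$-block of $\Ad_g$ vanishes, and block multiplication of $\Ad_{dg} = \Ad_d\Ad_g$ gives $\fk(dg) = \fk(d)\fk(g)$ and $\fc(dg) = \fc(d)\fk(g)$, so that the $\fk(g)$ factor cancels in $\fc(dg)\fk(dg)^{-1}$.
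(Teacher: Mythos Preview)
Your argument is correct and takes a genuinely different route from the paper's proof. The paper establishes $G$-invariance of $\fc\cdot\fk^{-1}$ first (exactly as in your sanity check), then identifies the resulting function on $S$ with $\fPi$ by going through the bivector $\Pi^{D}$ on the group: it uses the pullback formula $\pi_{0}^{\ast}(x^{\tr}) = \{ \tq^{T}(\fk^{-1}x) \}_{L}$ together with $\Pi^{S} = \pi_{0\ast}(\Pi^{D})$ and the explicit $r$-matrix description (\ref{eq_standardr}, \ref{eq_PiDtensor}) of $\Pi^{D}$. You instead bypass $\Pi^{D}$ entirely and work directly from the characterisation $\fPi(x,y) = y^{\tr}(\#^{\tr}(\ti(x)))$, using that $\ker \#^{\tr}_{s} = \Ad_{d}(\g)$ to read off the answer from the block decomposition, with the Lagrangian property of $\Ad_{d}(\g)$ supplying the needed antisymmetry of $\fc\fk^{-1}$. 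Your approach is more self-contained---it needs none of the Poisson--Lie machinery on $D$---while the paper's route has the advantage of tying $\fPi$ visibly back to the multiplicative bivector $\Pi^{D}$, which is conceptually useful elsewhere.
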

\begin{proof}
The map $\fk$ can be point-wise inverted as $\tj$ is assumed to be everywhere admissible. As $\Ad$ is a group representation and $G$ is a subgroup, the function $\fc \cdot \fk^{-1} \in C^{\infty}(D, \Hom(\g,\g^{\ast}))$ is $G$-invariant. Hence $\fPi' \circ \pi_{0} = \fc \cdot \fk^{-1}$ for some $\fPi' \in C^{\infty}(S, \Hom(\g,\g^{\ast}))$. One only has to argue that $\fPi' = \fPi$. To achieve this, one has to observe that $\pi_{0}^{\ast}(x^{\tr}) = \{ \tq^{T}( \fk^{-1}x) \}_{L}$. The rest follows from the fact that $\Pi^{S} = \pi_{0 \ast}(\Pi^{D})$ and the definitions (\ref{eq_standardr}, \ref{eq_PiDtensor}). 
\end{proof}
\section{Constructing the background fields} \label{sec_backgrounds}
We will now construct $\sigma$-model backgrounds $(g,B,H)$ on the manifold $S$ starting from two pieces of the algebraical data on the Lie algebra $\d$. 

First, recall that by \textbf{generalized metric} on any orthogonal vector bundle $(E, \<\cdot,\cdot\>_{E})$ we mean a maximal positive subbundle $V_{+} \subseteq E$ with respect to $\<\cdot,\cdot\>_{E}$. Its rank equals to the positive index (which has to be constant on every connected component of the base manifold) of the form $\<\cdot,\cdot\>_{E}$. We define $V_{-} = V_{+}^{\perp}$. It follows that $V_{-}$ is a maximal negative subbundle of $E$ with respect to $\<\cdot,\cdot\>_{E}$ and we can write $E = V_{+} \oplus V_{-}$. 

In particular, on the vector bundle $\gTS = TS \oplus T^{\ast}S$ equipped with the canonical pairing $\<\cdot,\cdot\>_{\mathbb{T}}$, every generalized metric $V_{+} \subseteq \mathbb{T}S$ is uniquely determined by a pair $(g,B)$, where $g$ is a Riemannian metric on $S$ and $B \in \Omega^{2}(S)$ is a $2$-form. More precisely, one has 
\begin{equation} \label{eq_V+asgB}
\Gamma(V_{+}) = \{ (X,(g+B)(X)) \; | \; X \in \X(S) \}.
\end{equation}

We use this observation to construct the backgrounds. This is the original idea of Poisson--Lie T-duality \cite{Klimcik:1995ux}, explained in the language of Courant algebroids in \cite{Severa:2015hta}. Fix a maximal positive subspace $\E_{+} \subset \d$ with respect to $\<\cdot,\cdot\>_{\d}$. Note that the positive index of $\<\cdot,\cdot\>_{\d}$ is always $\dim(\g)$ as $\g \subset \d$ is assumed Lagrangian and thus $\g = \g^{\perp}$. Take the trivial positive subbundle $\E_{+}^{E} = S \times \E_{+} \subset E$. Obviously, $\E_{+}^{E}$ is a generalized metric on the orthogonal vector bundle $(E, \<\cdot,\cdot\>_{\d})$. 

One can fix any splitting $\sigma: TS \rightarrow E$ of the sequence (\ref{eq_CourantSES}) and obtain a vector bundle isomorphism $\fPsi_{\sigma}: \gTS \rightarrow E$. Let $V_{+}^{\sigma} = \fPsi_{\sigma}^{-1}( \E^{E}_{+})$. This is, by construction, a generalized metric on $(\gTS, \<\cdot,\cdot\>_{\mathbb{T}})$ and there is thus a unique pair $(g, B_{\sigma})$ of background fields determined by $V_{+}^{\sigma}$. The subscript $\sigma$ of $g$ is missing on purpose, the Riemannian metric obtained in this ways is in fact independent on the splitting. 

Recall that there is also a closed $3$-form $H_{\sigma} \in \Omega^{3}(S)$ representing the Ševera class of $E$ defined by (\ref{eq_severaclass}). Altogether, we obtain a triple $(g,B_{\sigma},H_{\sigma})$ of sigma model background fields on the target space $S$. These are precisely the ones discussed in \cite{Severa:2017kcs}. The choice of the splitting $\sigma$ is not especially important. Indeed, every other isotropic splitting $\sigma': TS \rightarrow E$ of (\ref{eq_CourantSES}) can be written as 
\begin{equation} \label{eq_changeofsplitting}
\sigma'(X) = \sigma(X) + \rho^{\ast}( C(X)) 
\end{equation}
for a unique $2$-form $C \in \Omega^{2}(S)$. For the corresponding $3$-form $H_{\sigma'}$, one has $H_{\sigma'} = H_{\sigma} + dC$, whereas the $2$-form $B_{\sigma'}$ is given by $B_{\sigma'} = B_{\sigma} - C$. Using the Stokes theorem, it is clear that the two sigma models with target space backgrounds $(g,B_{\sigma},H_{\sigma})$ and $(g, B_{\sigma} - C, H_{\sigma} + dC)$ are equivalent. We can thus choose $\sigma$ to our advantage in the following. Let us also henceforth write just $B$ for $B_{\sigma}$. 

Let $\tj$ be an everywhere admissible splitting. There is a unique invertible map $E_{0}: \g^{\ast} \rightarrow \g$ such that the positive definite subbundle $\E_{+} \subseteq \d$ can be written as its graph:
\begin{equation} \label{eq_defE0map}
\E_{+} = \{ \ti( E_{0}(\xi)) + \tj( \xi) \; | \; \xi \in \g^{\ast} \}.
\end{equation}
The subspace $\E_{+}$ is positive with respect to $\<\cdot,\cdot\>_{\d}$ if and only if $E_{0}$ has a positive definite symmetric part. Define $\fE \in C^{\infty}(S, \Hom(\g^{\ast},\g))$ for all $\xi,\eta \in \g^{\ast}$ using the equation
\begin{equation}
\< \xi, \fE(\eta) \> = \< \xi^{\tr}, (g + B)( \eta^{\tr}) \>.
\end{equation}
In other words, $\fE$ is the combination of the metric and $B$-field in the special frame. Again, it makes sense to view $\fE$ as an $S$-dependent map from $\g^{\ast}$ to $\g$. It follows that $\fE$ can be written in terms of the function $\fPi$ given by (\ref{eq_fPi}) and the constant map $E_{0}$ defined above as
\begin{equation} \label{eq_fEasfE0Pi}
\fE = ( E_{0}^{-1} - \fPi)^{-1}.
\end{equation}
To see this, note that $\rho^{\ast}(x^{\tr}) = \ti(x) + \tj( \fPi(x))$ for all $x \in \g$. The section $(\eta^{\tr}, \fE(\eta)^{\tr}) \in \Gamma(V^{\sigma}_{+})$ is by $\fPsi_{\sigma}: \gTS \rightarrow E$ mapped onto the section of $\E_{+}^{E}$. But
\begin{equation}
\begin{split}
\fPsi_{\sigma}( \eta^{\tr}, \fE(\eta)^{\tr}) = & \  \sigma(\eta^{\tr})  + \rho^{\ast}( \fE(\eta)^{\tr}) \\
= & \ \ti( \fE(\eta)) + \tj( \eta + \fPi( \fE(\eta))).
\end{split}
\end{equation}
By definition of $E_{0}$, we find $\fE(\eta) = E_{0}(\eta + \fPi(\fE(\eta)))$ for all $\eta \in \g^{\ast}$. It is not difficult to see that $\fE$ given by (\ref{eq_fEasfE0Pi}) is the unique solution of this equation.

Now, let us endeavor to find the last of the background fields in the effective action, the \textit{dilaton field} $\phi \in C^{\infty}(S)$. To do so, we have to introduce another concept. 

Let $(E,\rho, \<\cdot,\cdot\>_{E}, [\cdot,\cdot]_{E})$ be a Courant algebroid over a manifold $M$. We say that an $\R$-bilinear map  $\cD: \Gamma(E) \times \Gamma(E) \rightarrow \Gamma(E)$ is a \textbf{Courant algebroid connection}, if the $\R$-linear operator $\cD_{\psi} := \cD(\psi,\cdot)$ on $\Gamma(E)$ satisfies
\begin{equation}
\begin{split}
\cD_{\psi}(f \psi') = & \ f \cD_{\psi}(\psi') + \Li{\rho(\psi)}(f) \psi', \\
\cD_{f \psi}(\psi') = & \ f \cD_{\psi}(\psi'),
\end{split}
\end{equation}
for all sections $\psi,\psi' \in \Gamma(E)$ and $f \in C^{\infty}(M)$, and its extension to  tensors on $E$ satisfies $\cD_{\psi}(g_{E}) = 0$. We write $g_{E} = \<\cdot,\cdot\>_{E}$. One says that $\cD$ is \textbf{torsion-free} if its torsion $3$-form defined for all $\psi,\psi',\psi'' \in \Gamma(E)$ by 
\begin{equation}
\begin{split}
T_{\cD}(\psi,\psi',\psi'') = & \ \< \cD_{\psi}(\psi') - \cD_{\psi'}(\psi) - [\psi,\psi']_{E}, \psi''\>_{E} \\
& + \<\cD_{\psi''}(\psi),\psi'\>_{E} 
\end{split}
\end{equation}
vanishes identically. Finally, let $V_{+} \subseteq E$ be a generalized metric. We say that $\cD$ is a \textbf{Levi-Civita connection on $E$ with respect to $V_{+}$}, if it is torsion-free and for all $\psi \in \Gamma(E)$, one has $\cD_{\psi}(\Gamma(V_{+})) \subseteq \Gamma(V_{+})$. One writes $\cD \in \LC(E,V_{+})$. The space of such connections is non-empty and in general quite big, see e.g. \cite{Jurco:2016emw} or \cite{Garcia-Fernandez:2016ofz}. For any Courant algebroid connection, there is a natural divergence operator $\Div_{\cD}: \Gamma(E) \rightarrow C^{\infty}(M)$ defined as $\Div_{\cD}(\psi) = \Tr( \cD(\cdot,\psi))$. By $\Tr$ we mean the fiber-wise trace of the vector bundle endomorphism. 

Now, recall that our starting data in the construction of the background $(g,B)$ was a generalized metric $\E_{+} \subseteq \d$. On top of that, consider a Levi-Civita connection $\cD^{0} \in \LC(\d, \E_{+})$, where $\d$ is viewed as a Courant algebroid $(\d,0,\<\cdot,\cdot\>_{\d}, -[\cdot,\cdot]_{\d})$ over the point. Let $E = S \times \d$ be the Courant algebroid with the bracket (\ref{eq_CourantbracketE}). It restricts to the bracket $-[\cdot,\cdot]_{\d}$ on constant sections and we can thus extend $\cD^{0}$ to a unique Courant algebroid connection $\cD^{E} \in \LC(E, \E^{E}_{+})$. Finally, by fixing the splitting $\sigma: TS \rightarrow E$, one can use the isomorphism $\fPsi_{\sigma}: \gTS \rightarrow E$ to define a Levi-Civita connection $\cD^{\sigma} \in \LC(\gTS, V_{+}^{\sigma})$. 

For reasons explained in the following section, we \textit{define} $\phi \in C^{\infty}(S)$ to be the solution to the system of partial differential equations
\begin{equation} \label{eq_dilatonformula}
\Li{X}( \phi) = \frac{1}{2}(\Div_{\cD^{g}}(X) - \Div_{\cD^{\sigma}}(X,\xi)),
\end{equation}
for all $(X,\xi) \in \Gamma(\gTS)$. $\Div_{\cD^{g}}(X) = \Tr(\cD^{g}(\cdot,X))$ denotes the usual divergence of $X \in \X(S)$ induced by the ordinary Levi-Civita connection $\cD^{g}$ on $S$ corresponding to the metric $g$. Equivalently, this ensures the partial integration rule in the form
\begin{equation}
\begin{split}
\int_{S}  e^{-2\phi} \{ \Div_{\cD^{\sigma}}(X,\xi) \cdot f \} \; d\vol_{g}= \\
= - \int_{S} e^{-2\phi} \Li{X}(f) \; d\vol_{g}.
\end{split}
\end{equation}
If such $\phi$ exists, it is determined uniquely up to an additive constant, which is irrelevant for the effective action. It turns out that there are choices of $\cD^{0}$ for which there are no solutions. In fact, there even exist Manin pairs where there is no solution for any $\cD^{0}$. Fortunately, there are two reasonable restrictions on $(\d,\g)$ and $\cD^{0}$ where not only $\phi$ can be found, but it also can be written down explicitly. We state the result as a theorem. We have included its proof in Appendix \ref{ap_dilaton}.

\begin{theorem} \label{thm_dilaton}
Let $\g$ be a unimodular Lie algebra, that is $\Tr(\ad_{x}) = 0$ for all $x \in \g$. Suppose $\cD^{0} \in \LC(\d,\E_{+})$ is divergence-free. Then there exist a solution $\phi \in C^{\infty}(S)$ of the equation (\ref{eq_dilatonformula}), unique up to an additive constant. 

The equation (\ref{eq_dilatonformula}) is in fact independent of the used divergence-free connection $\cD^{0}$ as well as on the splitting $\sigma: TS \rightarrow E$. Moreover, if $\tj: \g^{\ast} \rightarrow \d$ is a splitting admissible on an open set $U \subseteq S$, the solution is on $U$ is given by the formula 
\begin{equation} \label{eq_dilatonfinal}
\phi = - \frac{1}{2} \ln( \det(\f1_{\g} - E_{0} \fPi)) - \frac{1}{2} \ln( \bm{\nu}).
\end{equation}
The function $\bm{\nu}: U \rightarrow \R^{+}$ is defined via the expression $\bm{\nu} \circ \pi_{0} = \det(\fk)$, where $\fk: \pi_{0}^{-1}(U) \rightarrow \Aut(\d)$ is the top-left block in the decomposition (\ref{eq_Adblock}). In fact, the right-hand side of (\ref{eq_dilatonfinal}) does not depend on the used admissible splitting. In particular, it can be used to define a globally well-defined solution $\phi \in C^{\infty}(S)$. 
\end{theorem}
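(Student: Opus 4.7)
The plan is to reduce the problem to a local verification on open sets $U \subseteq S$ equipped with an everywhere admissible splitting $\tj$ and then check directly that the candidate function (\ref{eq_dilatonfinal}) satisfies (\ref{eq_dilatonformula}). Since the formula is claimed to be independent of the admissible splitting, this yields a globally well-defined $\phi$ on $S$, and uniqueness up to a constant is automatic from connectedness of $S$ once existence on $U$ is established. On $U$, the frame $\{\xi^{\tr} : \xi \in \g^{\ast}\}$ trivializes $TU$, with dual coframe $\{x^{\tr} : x \in \g\}$, and (\ref{eq_sigmaasxitr}) together with $\rho^{\ast}(x^{\tr}) = \ti(x) + \tj(\fPi(x))$ gives an explicit frame for $E$ compatible with $\fPsi_{\sigma}$, reducing everything to computations with constant sections of $E$.

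The main calculation is to evaluate both sides of (\ref{eq_dilatonformula}) with $X = \xi^{\tr}$ and $\xi = x^{\tr}$. On the left, $\Div_{\cD^{\sigma}}(\xi^{\tr}, x^{\tr}) = \Tr(\cD^{\sigma}_{\cdot}\fPsi_{\sigma}^{-1}(\tj(\xi) + \ti(x) + \tj(\fPi(x))))$ unfolds via the definition of $\cD^{E}$ as the unique extension of $\cD^{0}$ to $E = S \times \d$: on constant sections $\cD^{E}_{\psi}\psi'$ equals the $\cD^{0}$-derivative plus correction terms forced by compatibility with $\<\cdot,\cdot\>_{\d}$ and the bracket (\ref{eq_CourantbracketE}). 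Taking the fiber-wise trace, the divergence-free assumption on $\cD^{0}$ kills the contribution from the $\cD^{0}$-term, leaving only the bracket corrections together with the derivative $\Li{\xi^{\tr}}(\fPi)$. On the right, $\Div_{\cD^{g}}(\xi^{\tr})$ is computed by writing $\vol_{g}$ in the $\{x^{\tr}\}$ coframe: the metric components in this frame are the symmetric part of $\fE$, and using (\ref{eq_fEasfE0Pi}) one identifies $\sqrt{\det g}$ in the frame with $\det(\f1_{\g} - E_{0}\fPi)^{-1}$ times a constant. Thus $\Li{\xi^{\tr}}\bigl(-\tfrac{1}{2}\ln\det(\f1_{\g} - E_{0}\fPi)\bigr)$ exactly reproduces half of the required right-hand side. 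The remaining piece of the RHS must then match $\Li{\xi^{\tr}}\bigl(-\tfrac{1}{2}\ln\bm{\nu}\bigr)$, where $\bm{\nu} = \det(\fk) \circ \pi_{0}^{-1}$ enters because $\fk$ governs the Jacobian between the left-invariant frame on $D$ and the push-down of the $\xi^{\tr}$-frame through $\pi_{0}$; here Proposition~\ref{tvrz_fPiasck} is used to translate $\Li{\xi^{\tr}}(\fPi)$ into variations of the $\fc, \fk$-blocks of $\Ad$. The unimodularity of $\g$ enters precisely when handling the non-holonomicity of the frame $\{\xi^{\tr}\}$: the commutator $[\xi^{\tr}, \eta^{\tr}]$ produces traces of $\ad$-type maps restricted to $\g$, and $\Tr(\ad_{x}) = 0$ for $x \in \g$ makes these unwanted terms vanish.

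Independence of $\cD^{0}$ follows because the space of divergence-free Levi-Civita connections on $(\d, \E_{+})$ is an affine space over a subspace of tensors whose fiber-wise traces vanish, and these tensors extend to $\cD^{E}$ producing no change in $\Div_{\cD^{\sigma}}$. Independence of the Courant-algebroid splitting $\sigma$ uses the $B$-shift (\ref{eq_changeofsplitting}): under $\sigma \mapsto \sigma'$, $B$ changes by $-C$ and $H_{\sigma}$ by $dC$, while $\cD^{\sigma'}$ differs from $\cD^{\sigma}$ by a tensor whose divergence contribution cancels against the corresponding $C$-shift of the lift $\xi$ in $(X,\xi)$. Independence of the explicit formula (\ref{eq_dilatonfinal}) under changes of admissible $\tj \to \tj'$ is a direct algebraic check using (\ref{eq_splitchange}): $E_{0}$ and $\fPi$ transform compatibly so that $\det(\f1 - E_{0}\fPi)$ shifts by a factor that is exactly compensated by the transformation of $\bm{\nu}$ under the corresponding change in the $\fk$-block.

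The principal obstacle will be the bookkeeping in the divergence calculation: tracking how the Courant bracket corrections in (\ref{eq_CourantbracketE}), the non-trivial action of $\cD^{E}$ on non-constant sections, and the derivatives $\Li{\xi^{\tr}}(\fPi)$ combine. The precise way in which both the divergence-free assumption on $\cD^{0}$ and the unimodularity of $\g$ enter must be pinpointed, since without either hypothesis the RHS of (\ref{eq_dilatonformula}) fails to be closed and no solution $\phi$ exists. Once the traces are organized and Proposition~\ref{tvrz_fPiasck} is applied, the two logarithmic terms in (\ref{eq_dilatonfinal}) emerge as the unique antiderivative structure, and the remaining independence statements follow by essentially formal manipulations.
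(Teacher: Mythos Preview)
Your overall strategy matches the paper's: work locally with an admissible splitting $\tj$, use the frame $\{\xi^{\tr}\}$, reduce $\Div_{\cD^{\sigma}}$ to $\Div_{\cD^{E}}$ on explicit sections, exploit that $\cD^{E}$ restricts to the divergence-free $\cD^{0}$ on constant sections, and then match against $\tfrac{1}{2}\Div_{\cD^{g}}(\xi^{\tr})$. The independence checks you sketch (for $\cD^{0}$, for $\sigma$, and for $\tj$) are also what the paper does.

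There is, however, a structural step you are glossing over which in the paper carries real content. The equation (\ref{eq_dilatonformula}) must hold for \emph{all} $(X,\xi)$, while its left-hand side depends only on $X$. The paper therefore splits it into two independent conditions: first, $\Div_{\cD^{E}}(\rho^{\ast}(\xi)) = 0$ for all $\xi \in \Omega^{1}(S)$, and second, $\Li{X}(\phi) = \tfrac{1}{2}(\Div_{\cD^{g}}(X) - \Div_{\cD^{E}}(\sigma(X)))$. The first of these is a genuine integrability condition, not an identity, and the paper proves it by a specific argument: one writes $\Div_{\cD^{E}}(\rho^{\ast}(df))$ as $\Li{\#^{\tr}(t_{\mu})}\Li{\#^{\tr}(t_{\d}^{\mu})}(f)$ plus the $\cD^{0}$-divergence term, observes this is a vector field $Y$ representing the quadratic Casimir of $\frU(\d)$, hence is invariant under the transitive dressing action, and computes its value at the single point $s_{0}$ to be $-\#^{\tr}_{s_{0}}(\tj(\falpha))$ with $\falpha(x) = \Tr(\ad_{x})$. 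Unimodularity of $\g$ is exactly $\falpha = 0$, which kills $Y$. Your description ``unimodularity enters via the commutator $[\xi^{\tr},\eta^{\tr}]$ producing traces of $\ad$-type maps'' does not capture this mechanism; in particular, without the Casimir/invariance argument you would have to verify a differential identity at every point rather than at one.

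Once the first condition is established, the second equation simplifies drastically because $\Div_{\cD^{E}}(\sigma(\xi^{\tr})) = \Div_{\cD^{E}}(\tj(\xi)) = \Div_{\cD^{0}}(\tj(\xi)) = 0$, so one is left with $\Li{\xi^{\tr}}(\phi) = \tfrac{1}{2}\Div_{\cD^{g}}(\xi^{\tr})$. From there the paper differentiates the two logarithms in (\ref{eq_dilatonfinal}) directly (using $\pi_{0}^{\ast}(x^{\tr}) = \{\tq^{T}(\fk^{-1}x)\}_{L}$ and Proposition~\ref{tvrz_fPiasck} for the $\bm{\nu}$-term, and the identity $\Li{X}(\fg^{-1}) = -\Li{X}(\fPi)\cdot \fB\fg^{-1} - \fg^{-1}\fB\cdot\Li{X}(\fPi)$ for the $E_{0}\fPi$-term) and matches them against the two pieces of $\tfrac{1}{2}\Div_{\cD^{g}}(\xi^{\tr})$. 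Your plan to plug in $(X,\xi) = (\xi^{\tr}, x^{\tr})$ simultaneously would work, but entangles these two independent computations and obscures the role of the consistency condition.
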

\begin{rem}
There are several ways how to rewrite the formula (\ref{eq_dilatonfinal}). Recall that we have assumed that $G$ is unimodular. Moreover, $\d$ is a quadratic Lie algebra, which implies that also $D$ is unimodular. This ensures that there exists a unique (up to a constant) $D$-invariant volume form $\mu$ on $S$. If $\tj$ is a splitting admissible on some open set $U \subseteq S$, one can write this volume form as
\begin{equation} \label{eq_muform}
\mu = \bm{\nu} \cdot t_{1}^{\tr} \^ \ldots \^ t_{\dim(\g)}^{\tr},
\end{equation}
where $(t_{k})_{k=1}^{\dim(\g)}$ is some basis of $\g$. The form (\ref{eq_muform}) is indeed invariant with respect to the dressing action. To see this, note that $\mu$ is $D$-invariant if and only if its pullback $\pi_{0}^{\ast}(\mu) \in \Omega^{1}(D)$ is left-invariant. But we have already noted in the proof of Proposition (\ref{tvrz_fPiasck}) that 
\begin{equation}
\pi_{0}^{\ast}(x^{\tr}) = \{ \tq^{T}( \fk^{-1}x) \}_{L},
\end{equation}
for all $x \in \g$. Recall that $\fk: D \rightarrow \End(\g)$ is defined by (\ref{eq_Adblock}). Moreover, recall that $\bm{\nu} \circ \pi_{0} = \det(\fk)$. Altogether, one has 
\begin{equation}
\begin{split}
\pi_{0}^{\ast}(\mu) = & \ \det(\fk) \cdot \{ \tq^{T}(\fk^{-1}t_{1}) \}_{L} \^ \ldots \^ \{ \tq^{T}(\fk^{-1}t_{\dim(\g)}) \}_{L} \\
= & \ \{ \tq^{T}(t_{1}) \}_{L} \^ \ldots \^ \{ \tq^{T}(t_{\dim(\g)}) \}_{L}. 
\end{split}
\end{equation}
This form is indeed left-invariant. Moreover, note that $\mu$ does not depend on the choice of the admissible isotropic splitting $\tj$. In other words, the local definitions (\ref{eq_muform}) patch well to give a global form $\mu$ on $S$. Now, let $\fE = \fg + \fB$ be the decomposition of the map (\ref{eq_fEasfE0Pi}) into its symmetric and skew-symmetric part. Slightly abusing the notation, we use the same notation for the corresponding matrices in the basis we have used to define $\mu$. Then
\begin{equation}
\det(\fE) = \det(\fg)^{\frac{1}{2}} \det( \fg - \fB \fg^{-1} \fB)^{\frac{1}{2}}.
\end{equation}
Now, note that (\ref{eq_fEasfE0Pi}) implies $\fg - \fB \fg^{-1} \fB = g_{0}^{-1} - \theta_{0} g_{0} \theta_{0}$, where $E_{0} = g_{0}^{-1} + \theta_{0}$. Using the same formula for the determinant, we also have
\begin{equation}
\det(E_{0}) = \det(g_{0})^{-\frac{1}{2}} \det(g_{0}^{-1} - \theta_{0} g_{0} \theta_{0})^{\frac{1}{2}}. 
\end{equation}
Hence, the combination of these two expressions yields
\begin{equation}
\det(\f1_{\g} - E_{0} \fPi) = \det(E_{0} \fE^{-1}) = \det(g_{0})^{-\frac{1}{2}} \det(\fg)^{-\frac{1}{2}}. 
\end{equation}
Now, observe that the volume form $\vol_{g}$ can be written using the matrix $\fg$ of functions as 
\begin{equation}
\vol_{g} = \det(\fg)^{\frac{1}{2}} \cdot t_{1}^{\tr} \^ \ldots \^ t_{\dim(\g)}^{\tr}. 
\end{equation}
We can thus rewrite the formula (\ref{eq_dilatonfinal}) in the form 
\begin{equation} \label{eq_phialternative}
\phi = \frac{1}{2} \ln( \frac{\vol_{g}}{\mu}) + \frac{1}{4} \ln(\det(g_{0})). 
\end{equation}
Note that $\mu$ depends on the choice of the basis $(t_{k})_{k=1}^{\dim(\g)}$ which is compensated by the constant term in (\ref{eq_phialternative}). Moreover, none of the objects depend on the choice of the admissible splitting $\tj$.
\end{rem}

\section{Equations of motion for effective actions} \label{sec_equations}

Now, suppose that we have constructed the background fields $(g,B,\phi,H)$ as described in the previous section. We will now prove that the equations of motion of the action
\begin{equation} \label{eq_effaction}
\begin{split}
\S_{\text{eff}}[g,B,\phi] = \int_{S} e^{-2 \phi} \{ & \RS(g) - \frac{1}{2} \<H+dB,H+dB\>_{g}\,+ \\
& + 4\<d\phi,d\phi\>_{g} \} \cdot d\vol_{g}
\end{split}
\end{equation}
can be in this case formulated as a set of algebraic equations for the subspace $\E_{+} \subseteq \d$ we have used to construct this special class of target space fields. We will rely on the geometrical description of equations of motion in terms of Courant algebroid connections. 

Let us recall the necessary ingredients. Suppose we have a Courant algebroid $(E, \rho, \<\cdot,\cdot\>_{E}, [\cdot,\cdot]_{E})$ over a manifold $M$. Let $\cD$ be any Courant algebroid connection on $E$. For all $\psi,\psi',\phi,\phi' \in \Gamma(E)$, define
\begin{equation}
\begin{split}
R^{(0)}_{\cD}(\phi',\phi,\psi,\psi') = \< \phi', & \cD_{\psi}(\cD_{\psi'}(\phi)) - \cD_{\psi'}(\cD_{\psi}(\phi))\,- \\
& - \cD_{[\psi,\psi']_{E}}(\phi) \>_{E}.
\end{split}
\end{equation}
Then the \textbf{generalized Riemann tensor} $R_{\cD}$ is defined by
\begin{equation}
\begin{split}
2 R_{\cD}(\phi',\phi,\psi,\psi') = & \ R_{\cD}^{(0)}(\phi',\phi,\psi,\psi')\,+  \\
& + R_{\cD}^{(0)}(\psi',\psi,\phi,\phi')\,+ \\
& + \< \fK(\psi,\psi'), \fK(\phi,\phi') \>_{E},
\end{split}
\end{equation}
where $\fK(\psi,\psi') \in \Gamma(E)$ is defined by $\< \fK(\psi,\psi'), \phi \>_{E} = \< \cD_{\phi}(\psi),\psi'\>_{E}$. It turns out that $R_{\cD}$ is a well-defined tensor on $E$ with many useful symmetries, see \cite{Jurco:2016emw} for the detailed exposition. Most importantly, it allows one to unambiguously define the \textbf{generalized Ricci tensor} $\Ric_{\cD}$ 
\begin{equation}
\Ric_{\cD}(\psi,\psi') = R_{\cD}( \psi^{\mu}_{E}, \psi, \psi_{\mu}, \psi'),
\end{equation}
for all $\psi,\psi' \in \Gamma(E)$, where $(\psi_{\mu})_{\mu=1}^{\rk(E)}$ is some local frame on $E$ and $\< \psi^{\mu}_{E}, \psi_{\nu} \>_{E} = \delta^{\mu}_{\nu}$. Now, the choice of a generalized metric $V_{+} \subseteq E$ allows one to write $E = V_{+} \oplus V_{-}$. It follows that the fiber-wise metric $\gm = g_{E}|_{V_{+}} - g_{E}|_{V_{-}}$ is positive-definite. We can use it to define the \textbf{generalized scalar curvature} $\RS_{\cD}^{+} \in C^{\infty}(M)$ with respect to $V_{+}$ as a trace 
\begin{equation} \label{eq_pluscurvature}
\RS_{\cD}^{+} = \Tr_{\gm}( \Ric_{\cD}). 
\end{equation}
There is also a canonical scalar curvature $\RS_{\cD} = \Tr_{g_{E}}(\Ric_{\cD})$ which can be used for various consistency checks. These are precisely the ingredients required to describe the equations of motion of (\ref{eq_effaction}). We formulate it as a theorem. Its proof can be found in \cite{Jurco:2016emw}. 
\begin{theorem} \label{thm_main}
Let $g$ be any Riemannian metric on an arbitrary manifold $M$, let $B \in \Omega^{2}(M)$ be a $2$-form and $H \in \Omega^{3}(M)$ a closed $3$-form. Let $E = \mathbb{T}M$ be equipped with the Courant algebroid structure of $H$-twisted Dorfman bracket (\ref{eq_HDorfman}). Let $V_{+} \subseteq \gTM$ be a generalized metric encoding $(g,B)$ like in (\ref{eq_V+asgB}). 

Let $\cD \in \LC(\gTM,V_{+})$ be a Levi-Civita connection on $\gTM$ with respect to $V_{+}$. Suppose $\cD$ satisfies the additional condition
\begin{equation} \label{eq_dilatonformulajj}
\Li{X}( \phi) = \frac{1}{2}(\Div_{\cD^{g}}(X) - \Div_{\cD}(X,\xi)),
\end{equation}
for a given smooth function $\phi \in C^{\infty}(M)$ and all $(X,\xi) \in \Gamma(\gTM)$. We write $\cD \in \LC(\gTM,V_{+},\phi)$. 

Then the backgrounds $(g,B,\phi)$ solve the equations of motion obtained by the  variation of the action\footnote{The integration is now assumed over $M$.} (\ref{eq_effaction}) if and only if the connection $\cD$ satisfies the conditions $\RS_{\cD}^{+} = 0$ and $\Ric_{\cD}(V_{+},V_{-}) = 0$. 
\end{theorem}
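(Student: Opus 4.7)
The plan is to translate both sides of the equivalence into concrete tensorial statements on $M$ by fixing one convenient Levi-Civita connection $\cD \in \LC(\gTM, V_{+}, \phi)$ and computing its generalized Ricci tensor and positive scalar curvature directly. Before doing any computation, the first thing I would verify is that the two outputs $\RS_{\cD}^{+}$ and $\Ric_{\cD}(V_{+},V_{-})$ are insensitive to the ambiguity inside $\LC(\gTM, V_{+}, \phi)$. Any two members of that space differ by a tensor $\Sigma$ that is antisymmetric in the appropriate slots, preserves the generalized metric, and has vanishing divergence (to keep condition (\ref{eq_dilatonformulajj}) intact). The required invariance then reduces to a purely algebraic statement about the contractions of $\Sigma$ that show up in $R_{\cD}$, which follows from the symmetries of $R_{\cD}$ established in \cite{Jurco:2016emw}.

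Once this reduction is in place, I would fix a natural representative built from the ordinary Levi-Civita connection $\cD^{g}$ of $g$, the three-form $H + dB$, and the dilaton one-form $d\phi$, arranged so that the divergence condition (\ref{eq_dilatonformulajj}) is automatically enforced. Writing $\cD = \cD^{g} + \Phi$ in block form with respect to the decomposition $\gTM = V_{+} \oplus V_{-}$ induced by the generalized metric, the generalized Riemann tensor $R_{\cD}$ unfolds into the classical Riemann tensor of $g$, quadratic corrections in $H + dB$, and Hessians of $\phi$, so the two relevant traces can be read off explicitly.

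The off-diagonal block $\Ric_{\cD}(V_{+},V_{-})$ should split into a symmetric piece reproducing the $g$-equation, schematically $\Ric(g) + 2\, \cD^{g} d\phi - \tfrac{1}{4}(H+dB)^{2}_{g} = 0$, and an antisymmetric piece reproducing the $B$-equation, namely a $d\phi$-twisted covariant divergence of $H + dB$. The positive scalar $\RS_{\cD}^{+}$ should coincide with the dilaton Euler--Lagrange equation up to a linear combination of the previous two that vanishes on shell; this is the standard scalar-equation-modulo-trace-identity phenomenon of NS--NS supergravity. Comparison with the Euler--Lagrange equations obtained by varying (\ref{eq_effaction}) in $g$, $B$, and $\phi$ --- where the measure $e^{-2\phi} d\vol_{g}$ is handled via the partial-integration rule enforced by (\ref{eq_dilatonformulajj}) --- then closes the argument in both directions.

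The main obstacle I anticipate is the combinatorial bookkeeping in the curvature computation: constructing a representative $\cD$ explicitly enough that all traces can be evaluated, while simultaneously arguing that the pieces I do \emph{not} need --- the diagonal blocks $\Ric_{\cD}(V_{\pm},V_{\pm})$ and the ordinary scalar $\RS_{\cD} = \Tr_{g_{E}}(\Ric_{\cD})$ --- do not impose additional hidden constraints. The symmetry properties of $R_{\cD}$ from \cite{Jurco:2016emw}, together with the torsion-freeness and divergence condition built into $\LC(\gTM, V_{+}, \phi)$, are the key tools preventing this from happening, and they must be used carefully at the very last step where the algebraic conditions on $\cD$ are matched with the three physical field equations.
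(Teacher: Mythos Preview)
Your proposal is sound and outlines exactly the standard route to this result, but note that the paper itself does not prove Theorem \ref{thm_main} at all: immediately before the statement it says ``Its proof can be found in \cite{Jurco:2016emw}.'' The theorem is quoted from that earlier work, and the present paper uses it as a black box. Your sketch --- reduce to a single representative via the divergence-dependence observation, build that representative from $\cD^{g}$, $H+dB$, and $d\phi$, compute the off-diagonal Ricci block and $\RS_{\cD}^{+}$, and match against the Euler--Lagrange equations of (\ref{eq_effaction}) --- is precisely the strategy carried out in \cite{Jurco:2016emw}, so there is nothing to compare against here beyond that citation.
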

It is of course important that for any $V_{+}$ and any $\phi \in C^{\infty}(M)$, we have $\LC(\gTM, V_{+},\phi) \neq \emptyset$. It follows directly from the statement that nothing depends on the particular choice of $\cD \in \LC(\gTM,V_{+},\phi)$. In fact, one can prove that  for any Levi-Civita connection, the quantities $\RS_{\cD}^{+}$, $\RS_{\cD}$ and $\Ric_{\cD}^{+-}$ depend on $\cD$ only through its divergence operator. As we fix its form by condition (\ref{eq_dilatonformulajj}), this explains the above observation.

Now, recall that we have constructed the dilaton function $\phi$ as a solution to the partial differential equation (\ref{eq_dilatonformula}). Looking at the condition (\ref{eq_dilatonformulajj}), we just made our Levi-Civita connection $\cD^{\sigma}$ satisfy the assumptions of Theorem \ref{thm_main}. Recall, we have constructed it from a given Levi-Civita connection $\cD^{0} \in \LC(\d,\E_{+})$. Hence, it is not surprising that their respective induced quantities are closely related. 

\begin{theorem} \label{thm_Requivalences}
Let $\cD^{0} \in \LC(\d,\E_{+})$ be any Levi-Civita connection on the Courant algebroid $(\d,0,\<\cdot,\cdot\>_{\d},-[\cdot,\cdot]_{\d})$ with respect to the generalized metric $\E_{+} \subseteq \d$. Let $\cD^{\sigma} \in \LC(\gTS, V_{+}^{\sigma})$ be the Levi-Civita connection constructed from it using the splitting $\sigma$, cf. the paragraph above (\ref{eq_dilatonformula}). 

Then $\RS^{+}_{\cD^{\sigma}} = \RS_{\cD^{0}}^{+}$ and $\Ric_{\cD^{\sigma}}(V_{+}^{\sigma},V_{-}^{\sigma}) = 0$, if and only if $\Ric_{\cD^{0}}(\E_{+},\E_{-}) = 0$. 
\end{theorem}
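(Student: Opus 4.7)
The natural strategy is to bypass $\gTS$ entirely and work on the Courant algebroid $E = S \times \d$, using the isomorphism $\fPsi_{\sigma}: \gTS \to E$ that carries $\cD^{\sigma}$ to $\cD^{E}$ and $V_{+}^{\sigma}$ to $\E_{+}^{E} := S \times \E_{+}$. Since the generalized Riemann tensor, Ricci tensor, and scalar curvature $\RS^{+}$ are built purely from the Courant algebroid structure, the pairing, the connection, and the choice of generalized metric, they are preserved by $\fPsi_{\sigma}$. Consequently, $\RS^{+}_{\cD^{\sigma}} = \RS^{+}_{\cD^{E}}$ as functions, and $\Ric_{\cD^{\sigma}}(V_{+}^{\sigma}, V_{-}^{\sigma}) = 0$ holds iff $\Ric_{\cD^{E}}(\E_{+}^{E}, \E_{-}^{E}) = 0$ holds pointwise on $S$. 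The problem thus reduces to comparing $\cD^{E}$-curvature on $E$ with $\cD^{0}$-curvature on the point Courant algebroid $(\d,0,\<\cdot,\cdot\>_{\d}, -[\cdot,\cdot]_{\d})$.

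The key observation is that the constant sections, viewed as the copy of $\d$ inside $C^{\infty}(S,\d) = \Gamma(E)$, suffice to determine any $C^{\infty}(S)$-multilinear tensor on $E$ fibre-wise. I would therefore verify the following three facts for constant sections $x,y,z,w \in \d$:
\begin{enumerate}[i)]
\item $[x,y]_{E} = -[x,y]_{\d}$ at every $s \in S$, which is immediate from (\ref{eq_CourantbracketE}) since all Lie-derivative terms annihilate constant sections;
\item $\cD^{E}_{x}(y) = \cD^{0}_{x}(y)$ as a constant section, by the defining extension of $\cD^{0}$ to $\cD^{E}$;
\item double covariant derivatives iterate as in $\d$, because the first derivative is already constant and no anchor contributions enter.
\end{enumerate}
These three ingredients imply that, on constants, $R^{(0)}_{\cD^{E}}(z,w,x,y)|_{s} = R^{(0)}_{\cD^{0}}(z,w,x,y)$ and $\fK_{\cD^{E}}(x,y)|_{s} = \fK_{\cD^{0}}(x,y)$. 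Thus $R_{\cD^{E}}|_{s} = R_{\cD^{0}}$ as tensors on $\d$, for every $s \in S$.

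Tracing this pointwise identity with any basis of $\d$ yields $\Ric_{\cD^{E}}|_{s} = \Ric_{\cD^{0}}$, and tracing once more with the fibre-wise positive metric $\gm = g_{\d}|_{\E_{+}} - g_{\d}|_{\E_{-}}$ (which is also constant in $s$) gives $\RS^{+}_{\cD^{E}}|_{s} = \RS^{+}_{\cD^{0}}$. The first assertion of the theorem, $\RS^{+}_{\cD^{\sigma}} = \RS^{+}_{\cD^{0}}$, follows. For the second, the pointwise identification $\Ric_{\cD^{\sigma}}(V_{+}^{\sigma}, V_{-}^{\sigma})|_{s} = \Ric_{\cD^{0}}(\E_{+}, \E_{-})$ shows that the vanishing of the left-hand side on all of $S$ is literally equivalent to the vanishing of the right-hand side, proving the equivalence.

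The only delicate point, and the step I would be most careful with, is the passage from ``equality on constant sections'' to ``equality as tensors at every point'': this uses genuinely the $C^{\infty}(S)$-linearity of the generalized Riemann tensor in each of its four slots, which is the standard tensoriality result for Courant algebroid curvatures established in \cite{Jurco:2016emw}. Once this is invoked, everything else is the natural extension-and-transport mechanism built into the construction of $\cD^{\sigma}$ from $\cD^{0}$.
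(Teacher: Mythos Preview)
Your proof is correct and follows essentially the same route as the paper's own argument: transport everything via the Courant algebroid isomorphism $\fPsi_{\sigma}$ to $E$, then use that $\cD^{E}$ and $\E_{+}^{E}$ restrict to $\cD^{0}$ and $\E_{+}$ on the constant sections, together with the $C^{\infty}(S)$-tensoriality of $R_{\cD^{E}}$, to conclude that all curvature quantities on $E$ coincide pointwise with those on $\d$. Your write-up is in fact more explicit than the paper's, which simply cites naturality under Courant algebroid isomorphisms and the tensoriality of $R_{\cD^{E}}$ without spelling out points (i)--(iii).
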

\begin{proof}
This is a slight generalization of Proposition 6.1 in \cite{Jurco:2017gii}. 
Note that in particular, the scalar curvature $\RS^{+}_{\cD^{\sigma}} \in C^{\infty}(S)$ is just a real constant. For the proof recall, that both $V_{+}^{\sigma}$ and $\cD^{\sigma}$ were constructed from $\E_{+}^{E}$ and $\cD^{E}$ using the Courant algebroid isomorphism $\fPsi_{\sigma}: \gTS \rightarrow E$. 

It is easy to see that both $\Ric$ and $\RS^{+}$ transform naturally with respect to any Courant algebroid isomorphism, respectively. Again, see \cite{Jurco:2016emw} for details. 

Finally, we have defined $\cD^{E}$ and $\E_{+}^{E}$ to restrict to $\cD^{0}$ and $\E_{+}$ on constant sections which (globally) generate $E$. As $R_{\cD^{E}}$ is a tensor on $E$, it (and all its traces) can be evaluated on the generators and the conclusion of the theorem follows immediately. 
\end{proof}
We can now formulate and easily prove the two main results of this paper. 

\begin{theorem} \label{thm_main2}
Let $\cD^{0} \in \LC(\d,\E_{+})$ be a divergence-free Levi-Civita connection. Suppose $\g \subseteq \d$ is a unimodular Lie algebra. Let $(g,B,\phi)$ be the background on $S$ constructed in Section \ref{sec_backgrounds}, cf. (\ref{eq_fEasfE0Pi}, \ref{eq_dilatonfinal}).

Then $(g,B,\phi)$ solve the equations of motion given by the action (\ref{eq_effaction}), if and only if $\RS^{+}_{\cD^{0}} = 0$ and $\Ric_{\cD^{0}}(\E_{+},\E_{-}) = 0$. These two equations do not depend on the choice of the divergence-free connection $\cD^{0}$ and they can be viewed as a system of algebraic equations for the maximal positive subspace $\E_{+} \subseteq \d$. 
\end{theorem}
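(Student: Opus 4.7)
The plan is to deduce the theorem by chaining Theorems \ref{thm_dilaton}, \ref{thm_main}, and \ref{thm_Requivalences}. Since $\g$ is unimodular and $\cD^{0}$ is divergence-free, Theorem \ref{thm_dilaton} guarantees the existence of a smooth function $\phi$ on $S$ solving the defining equation (\ref{eq_dilatonformula}); this is precisely the dilaton used to build our background. By the very construction of $\cD^{\sigma}$ from $\cD^{0}$ through extension to constant sections of $E = S \times \d$ and transport along the Courant algebroid isomorphism $\fPsi_{\sigma} : \gTS \rightarrow E$, the condition (\ref{eq_dilatonformulajj}) required of $\cD^{\sigma}$ coincides verbatim with the defining equation (\ref{eq_dilatonformula}) for $\phi$. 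Hence $\cD^{\sigma} \in \LC(\gTS, V_{+}^{\sigma}, \phi)$, and Theorem \ref{thm_main} applies.

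From that application, $(g, B, \phi)$ solves the equations of motion of (\ref{eq_effaction}) if and only if $\RS^{+}_{\cD^{\sigma}} = 0$ and $\Ric_{\cD^{\sigma}}(V_{+}^{\sigma}, V_{-}^{\sigma}) = 0$. Theorem \ref{thm_Requivalences} then rewrites this pair of conditions equivalently as $\RS^{+}_{\cD^{0}} = 0$ and $\Ric_{\cD^{0}}(\E_{+}, \E_{-}) = 0$, which establishes the main equivalence in one move.

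For the remaining assertions, I would appeal to the general fact recalled right after Theorem \ref{thm_main}: for any Levi-Civita connection on a Courant algebroid, both the generalized scalar curvature $\RS^{+}_{\cD}$ and the off-diagonal block $\Ric_{\cD}(V_{+}, V_{-})$ depend on $\cD$ only through its divergence operator. Once the divergence of $\cD^{0}$ is fixed to zero, the quantities $\RS^{+}_{\cD^{0}}$ and $\Ric_{\cD^{0}}(\E_{+}, \E_{-})$ thus become independent of the particular $\cD^{0} \in \LC(\d, \E_{+})$ chosen. Since $\d$ is a Courant algebroid over a single point, the resulting conditions involve only the subspace $\E_{+} \subseteq \d$ together with the Lie bracket $[\cdot,\cdot]_{\d}$ and the pairing $\<\cdot,\cdot\>_{\d}$, and are therefore purely algebraic equations for $\E_{+}$.

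The main obstacle, if any, lies not in the logical chaining itself (which is essentially bookkeeping) but in being confident that the divergence produced by the extension $\cD^{0} \rightsquigarrow \cD^{E} \rightsquigarrow \cD^{\sigma}$ indeed matches the right-hand side of (\ref{eq_dilatonformula}). This is precisely where the hypotheses of unimodular $\g$ and divergence-free $\cD^{0}$ enter: without them, the traces appearing in $\Div_{\cD^{\sigma}}$ would pick up extra contributions from the dressing action and Theorem \ref{thm_dilaton} would fail to produce a well-defined $\phi$. Once these two assumptions are in force, no further calculation beyond the already-established theorems is required.
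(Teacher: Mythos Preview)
Your proposal is correct and follows essentially the same approach as the paper's own proof: both chain Theorems \ref{thm_dilaton}, \ref{thm_main}, and \ref{thm_Requivalences} in the same order, and both invoke the remark following Theorem \ref{thm_main} to conclude independence of the choice of divergence-free $\cD^{0}$. Your write-up is slightly more explicit about why $\cD^{\sigma} \in \LC(\gTS, V_{+}^{\sigma}, \phi)$ and why the resulting conditions are algebraic, but the logical skeleton is identical.
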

\begin{proof}
We consider $\cD^{0}$ to be divergence-free and $\g$ unimodular to satisfy the assumptions of Theorem \ref{thm_dilaton}. This ensures that we find $\phi$ such that $\cD^{\sigma}$ can be used in Theorem \ref{thm_main}. This together with Theorem \ref{thm_Requivalences} shows that $(g,B,\phi)$ solve the equations of motion, if and only if $\Ric^{+}_{\cD^{0}} = 0$ and $\Ric_{\cD^{0}}(\E_{+},\E_{-}) = 0$. This is clearly a system of algebraic equations. Finally, we have already noted in the paragraph under Theorem \ref{thm_main} that the two involved quantities $\RS_{\cD^{0}}^{+}$ and $\Ric_{\cD^{0}}^{+-}$ depend on $\cD^{0}$ only through its divergence, which is assumed to be trivial. 
\end{proof}
The algebraic system for $\E_{+} \subseteq \d$ is in detail derived and written down in Appendix \ref{ap_eom}. In particular, see Theorem \ref{thm_EOMalgebraic} for the summary of the results. 

\begin{theorem}[\textbf{Poisson--Lie T-duality of effective actions}] \label{thm_PLTdual}
The algebraic system of equations for $\E_{+}$, by construction, does not depend on the choice of the subalgebra $\g$. 

In particular, given a solution of such a system, we may produce a solution of the equations of motion given by the low-energy effective action functional (\ref{eq_effaction}) on \textit{every} coset space $D / G$, i.e. corresponding to different choices of $G$.
\end{theorem}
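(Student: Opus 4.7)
The plan is to unpack Theorem \ref{thm_main2} and exhibit the $\g$-independence explicitly. First, I would observe that the two conditions
\begin{equation}
\RS^{+}_{\cD^{0}} = 0, \qquad \Ric_{\cD^{0}}(\E_{+}, \E_{-}) = 0
\end{equation}
are formulated entirely on the Courant algebroid $(\d, 0, \<\cdot,\cdot\>_{\d}, -[\cdot,\cdot]_{\d})$ over a point, with generalized metric $\E_{+} \subseteq \d$ and a Levi-Civita connection $\cD^{0} \in \LC(\d, \E_{+})$. None of these ingredients refer to a Lagrangian subalgebra $\g$; moreover, the divergence-free condition on $\cD^{0}$ is also intrinsic to $\d$. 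The paragraph after Theorem \ref{thm_main2} already records that the two quantities depend on $\cD^{0}$ only through its divergence, so any divergence-free representative yields the same algebraic conditions. This establishes the first claim: the system is an intrinsic condition on the triple $(\d, \<\cdot,\cdot\>_{\d}, \E_{+})$.

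For the second claim, I would fix a solution $\E_{+} \subseteq \d$ of this algebraic system and an arbitrary Lagrangian subalgebra $\g' \subseteq \d$ satisfying the running hypotheses needed to apply our machinery -- namely, that $\g' = \Lie(G')$ for a closed connected subgroup $G' \subseteq D$ with $(D, G')$ a complete group pair and $\g'$ unimodular. Running the construction of Section \ref{sec_backgrounds} on $S' = D/G'$ produces the background $(g', B')$ via the generalized-metric construction based on $\E_{+}$, while $H_{\sigma'}$ is fixed by the Ševera class of the induced exact Courant algebroid $E' = S' \times \d$. Theorem \ref{thm_dilaton} then provides the dilaton $\phi'$ explicitly through formula (\ref{eq_dilatonfinal}), where the function $\bm{\nu}$ and the map $\fPi$ of course depend on $\g'$, but are guaranteed to exist under our assumptions.

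Finally, I would invoke Theorem \ref{thm_main2} on $S'$: with the same $\E_{+}$ and the same $\cD^{0}$ (which lives on $\d$ and is insensitive to the choice of $\g'$), the induced fields $(g', B', \phi')$ solve the equations of motion of (\ref{eq_effaction}) on $S'$ if and only if the very same algebraic system holds. Since by hypothesis this system is satisfied, the conclusion follows immediately.

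The only real subtlety -- and the place I would expect some care -- is the verification that the different choices of $\g'$ actually satisfy the hypotheses required to invoke Theorem \ref{thm_main2}, in particular the unimodularity of $\g'$ and the completeness of the pair $(D, G')$ (so that an everywhere admissible splitting $\tj'$ exists and the formula (\ref{eq_dilatonfinal}) patches to a global $\phi' \in \cif$ with $M = S'$). These are not automatic and must be imposed on each $\g'$ one wishes to dualize into; however, they are precisely the standard assumptions already active throughout the paper, and the statement of the theorem is to be read in this context. No further computation is needed, as the heavy lifting is already done in Theorems \ref{thm_dilaton}, \ref{thm_main}, \ref{thm_Requivalences}, and \ref{thm_main2}.
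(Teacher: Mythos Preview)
Your proposal is correct and matches the paper's approach: the paper does not even supply a separate proof environment for this theorem, treating it as an immediate corollary of Theorem \ref{thm_main2} (the phrase ``by construction'' in the statement is the whole argument). Your write-up simply spells out this immediate step and correctly flags the standing hypotheses (unimodularity of $\g'$, completeness of $(D,G')$) that must be assumed for each dual model.
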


This should be understood as a proof of consistence of Poisson--Lie T-duality on the level of $\sigma$-models with the corresponding low-energy theories (type II supergravity).
\begin{rem} \label{rem_indefinite}
We have constructed and proved everything using the positive subspace $\E_{+} \subseteq \d$. Consequently, the metric $g$ is always Riemannian. In order to allow for any (e.g. Lorentzian) signature, one has to relax the conditions on $\E_{+}$. Namely, one can consider any half-dimensional subspaces $\E_{+} \subseteq \d$ such that the restriction of $\<\cdot,\cdot\>_{\d}$ to $\E_{+}$ remains non-degenerate. This is possible as long as the involved objects make sense. Some maps may become singular or negative. For example, in formula (\ref{eq_dilatonfinal}) we may be forced to add absolute values to the arguments of the logarithms. One only has to be more careful to cope with such issues. We will comment more on this in the examples in the following section.
\end{rem}
\section{Some examples} \label{sec_examples}
Let us now examine some non-trivial examples of Manin pairs which can be explicitly integrated to a group pair. We will first look at the algebraic system of equations of motion obtained in Theorem \ref{thm_EOMalgebraic}. In particular, we will choose some non-trivial Lie quasi-bialgebra $(\g,\delta,\mu)$.
\subsection{Lie algebra $\g$ is Abelian}
First, note that an Abelian $\g$ is certainly unimodular. All terms containing the bracket $[\cdot,\cdot]_{\g}$ disappear. Moreover, the coboundary operator $\Delta$ in the Lie algebra complex $\frc^{\bullet}(\g, \Lambda^{2}(\g))$ acts trivially and thus $\delta' = \delta$, whence $[\cdot,\cdot]'_{\g^{\ast}} = [\cdot,\cdot]_{\g^{\ast}}$, c.f. Appendix \ref{ap_eom}. In fact, in this case $[\cdot,\cdot]_{\g^{\ast}}$ actually is a Lie bracket and $d_{\ast}$ defined by (\ref{eq_dastalmostdif}) is a true differential. Moreover, one has $\mu' = \mu - d_{\ast}(\theta_{0})$. 

Next, as we want $\RS_{\cD^{0}}^{+} = 0$ together with $\Ric_{\cD^{0}}(\E_{+},\E_{-}) = 0$, the equation (\ref{eq_EOMscalar2}) must hold. In this case, it reduces to
\begin{equation} \label{eq_muprimenaquadrat}
\<\mu',\mu'\>_{g_{0}} = 0. 
\end{equation}
If we assume that $g_{0}$ is positive definite, this equation forces $\mu' = 0$. In other words, we require $\theta_{0}$ to be the potential for the $3$-vector $\mu$, namely 
\begin{equation}
d_{\ast}(\theta_{0}) = \mu.
\end{equation}
This provides a first algebraic obstruction to the existence of some solution $(g_{0},\theta_{0})$, namely the Chevalley--Eilenberg cohomology class $[\mu] \in \mathfrak{H}^{3}(\g^{\ast}, \R)$ must be trivial. Note that this condition makes sense as $\mu$ is by construction (it is one of the Lie quasi-bialgebra conditions) $d_{\ast}$-closed. We can thus set $\mu' = 0$ in all the remaining equations. In particular, note that the skew-symmetric tensor (\ref{eq_EOMtensor2}) vanishes for any Lie algebra $\g^{\ast}$. Moreover, $\RS_{\cD^{+}}^{0}$ is proportional to the trace of $\Ric_{s}^{0}(x,y)$ and we are left with 
\begin{equation}
\begin{split}
0 = \Ric^{0}_{s}(x,y) = & \ - \frac{1}{4}\< [t^{i},t^{j}]_{\g^{\ast}}, x \> \cdot \<[e_{i},e_{j}]_{\g^{\ast}}, y \>\,+ \\ 
& + \frac{1}{2} g_{0}^{-1}([g_{0}(x),t^{i}]_{\g^{\ast}}, [g_{0}(y),e_{i}]_{\g^{\ast}}) \,+\\
& + \frac{1}{2} c_{\g^{\ast}}( g_{0}(x), g_{0}(y)).
\end{split}
\end{equation}
It is still quite difficult to solve this equation. So far, we have found a single non-trivial Lie algebra $\g^{\ast}$ where we were able to find the solution.

\begin{example}[\textbf{$\g^{\ast}$ is the Heisenberg algebra}] \label{ex_heisenberg1} Suppose $\g^{\ast}$ is a $3$-dimensional Lie algebra given in some basis $(t^{1},t^{2},t^{3})$ by the commutation relations 
\begin{equation} \label{eq_heisenberg}
[t^{1},t^{2}]_{\g^{\ast}} = 0, \; \; [t^{2},t^{3}]_{\g^{\ast}} = t^{1}, \; \; [t^{3},t^{1}]_{\g^{\ast}} = 0.
\end{equation}
This Lie algebra has a zero Killing form, $c_{\g^{\ast}} = 0$, and a non-trivial third cohomology group $\mathfrak{H}^{3}(\g,\R) = \R$. In fact, we have $d_{\ast}(\theta_{0}) = 0$ for all $\theta_{0} \in \Lambda^{2}(\g)$ and our only choice for $\mu$ is thus $\mu = 0$. The bivector $\theta_{0} \in \Lambda^{2} \g$ can be chosen arbitrarily. Note that in the following, we will allow for $g_{0}$ of any signature. In general, there could be thus some non-zero $\mu' \in \Lambda^{3}(\g)$ satisfying (\ref{eq_muprimenaquadrat}). However, in this case $\dim(\Lambda^{3}\g) = 1$ and there are no such elements. 

One can approach the problem as follows. The most general automorphism of $(\g^{\ast}, [\cdot,\cdot]_{\g^{\ast}})$ has the matrix in the above basis given by
\begin{equation}
\fA = \begin{pmatrix}
\det(\fM) & x^{T} \\
0 & \fM
\end{pmatrix},
\end{equation}
where $\fM \in \GL(2,\R)$ and $x \in \R^{2}$ are arbitrary. We can examine the action of this group on the space of symmetric non-degenerate matrices $\fg_{0}^{-1}$. In general, we have
\begin{equation}
\fg_{0}^{-1} = \bma{h_{0}}{a^{T}}{a}{\fh}, 
\end{equation}
where $h_{0} \in \R$, $\fh \in \Sym(2,\R)$ and $a \in \R^{2}$ take values constrained by the condition $\det(\fg_{0}^{-1}) \neq 0$. For $h_{0} \neq 0$, $\fg_{0}^{-1}$ lies in the orbit uniquely represented by one of the following symmetric non-degenerate matrices:
\begin{equation}
\fg_{0}^{-1}(\epsilon,\lambda_{1},\lambda_{2}) = \begin{pmatrix}
\epsilon & 0 & 0 \\
0 & \lambda_{1} & 0 \\
0 & 0 & \lambda_{2}
\end{pmatrix},
\end{equation}
where $\epsilon \in \{-1,1\}$ and $\lambda_{1} \geq \lambda_{2}$ are two non-zero numbers. The subset $h_{0} = 0$ contains exactly two orbits uniquely represented by
\begin{equation} \label{eq_fg0inveps}
\fg_{0}^{-1}(\epsilon) = \begin{pmatrix}
0 & 0 & 1 \\
0 & \epsilon & 0 \\
1 & 0 & 0 
\end{pmatrix},
\end{equation}
where $\epsilon \in \{-1,1\}$. Note that no metric in these two classes is positive definite, as $g_{0}^{-1}(t^{1},t^{1}) = 0$. It is easy to see that the condition $\Ric_{s}^{0} = 0$ has to be verified only for a single representative of each orbit. We can thus evaluate it on $g_{0}^{-1}$ defined in the basis $(t^{1},t^{2},t^{3})$ by matrices $\fg_{0}^{-1}(\epsilon,\lambda_{1},\lambda_{2})$ and $\fg_{0}^{-1}(\epsilon)$ above. In the first case, the matrix of $\Ric_{s}^{0}$ has the form
\begin{equation}
\Ric_{s}^{0} = \begin{pmatrix}
- \frac{1}{2 \lambda_{1} \lambda_{2}} & 0 & 0 \\
0 & - \frac{1}{2 \lambda_{1}^{2} \lambda_{2}} & 0 \\
0 & 0 & - \frac{1}{2 \lambda_{1} \lambda_{2}^{2}}
\end{pmatrix},
\end{equation}
which is non-zero. The family $\fg_{0}^{-1}(\epsilon,\lambda_{1},\lambda_{2})$ thus does not solve the equations. On the other hand, it is easy to see that $\fg_{0}^{-1}(\epsilon)$ both \textit{do solve} the equations. Note that $g_{0}$ has signature $(2,1)$ for $\epsilon = 1$ and $(1,2)$ for $\epsilon = -1$. There is no positive-definite solution.

In this example, everything can be calculated very explicitly. Indeed, note that $\g^{\ast}$ is the Heisenberg algebra realized by a subalgebra of matrices 
\begin{equation}
t^{1} = \begin{pmatrix}
0 & 0 & 1 \\
0 & 0 & 0 \\
0 & 0 & 0 
\end{pmatrix}, \; \; 
t^{2} = \begin{pmatrix}
0 & 1 & 0 \\
0 & 0 & 0 \\
0 & 0 & 0 
\end{pmatrix}, \; \; 
t^{3} = \begin{pmatrix}
0 & 0 & 0 \\
0 & 0 & 1 \\
0 & 0 & 0 
\end{pmatrix}.
\end{equation}
This Lie algebra integrates to the continuous Heisenberg group $\text{H}_{3}(\R)$; a closed subgroup of $\GL(3,\R)$ consisting of upper-triangular matrices
\begin{equation}
[\alpha_{1},\alpha_{2},\alpha_{3}] = \begin{pmatrix}
1 & \alpha^{2} & \alpha^{1} \\
0 & 1 & \alpha^{3} \\
0 & 0 & 1
\end{pmatrix},
\end{equation}
where $\alpha^{i} \in \R$ are arbitrary real numbers. As manifolds, we have $\text{H}_{3}(\R) \cong \R^{3}$. We may view $\alpha^{i}$ as global coordinate functions on $\text{H}_{3}(\R)$. The matrix of the adjoint representation of $\text{H}_{3}(\R)$ in the above basis of $\g^{\ast}$ is 
\begin{equation}
\Ad_{[\alpha^{1},\alpha^{2},\alpha^{3}]} = \begin{pmatrix}
1 & - \alpha^{3} & \alpha^{2} \\
0 & 1 & 0 \\
0 & 0 & 1 
\end{pmatrix}
\end{equation}
Now, as $\g$ is Abelian, we can choose $G = (\g,+)$. The Lie group $D$ can be then chosen as 
\begin{equation}
D = \g \rtimes \text{H}_{3}(\R),
\end{equation}
where $\text{H}(3)$ acts on $\g$ via the coadjoint representation. As both $G = \g$ and $\text{H}_{3}(\R)$ are subalgebras of $D$, we are in the Manin triple scenario described in \cite{Jurco:2017gii}. In particular, the coset space $S$ can be identified with $\text{H}_{3}(\R)$ and the projection $\pi: D \rightarrow S$ is in this case indeed just a projection
\begin{equation}
\pi(x,[\alpha^{1},\alpha^{2},\alpha^{3}]) = [\alpha^{1},\alpha^{2},\alpha^{3}]. 
\end{equation}
Moreover, we have $\d = \g \oplus \g^{\ast}$, the canonical inclusion $\tj: \g^{\ast} \rightarrow \g \oplus \g^{\ast}$ is everywhere admissible and for any $\xi \in \g^{\ast}$, we have $\xi^{\tr} = \xi^{R}$, a right-invariant vector field on $\text{H}_{3}(\R)$ generated by $\xi \in \g^{\ast} = \Lie(\text{H}_{3}(\R))$. Naturally, the dual generators $x^{\tr}$ are right-invariant $1$-forms $x_{R}$ generated by $x \in \g$. Explicitly, one has 
\begin{equation}
\xi^{R} = (\xi_{1} + \xi_{2} \alpha_{3}) \frac{\partial}{\partial \alpha^{1}} + \xi_{2} \frac{\partial}{\partial \alpha^{2}} + \xi_{3} \frac{\partial}{\partial \alpha^{3}},
\end{equation}
where $\xi_{i} = \xi(t_{i})$ for all $i \in \{1, 2, 3\}$. Next, we must find the map $\fPi \in C^{\infty}(\text{H}_{3}(\R), \Hom(\g,\g^{\ast}))$. By \ref{tvrz_fPiasck}, it may be found using the blocks of the adjoint representation. One has
\begin{equation}
\Ad_{(0,[\alpha^{1},\alpha^{2},\alpha^{3}])} = \bma{ \Ad^{\ast}_{[\alpha^{1},\alpha^{2},\alpha^{3}]}}{0}{0}{ \Ad_{[\alpha^{1},\alpha^{2},\alpha^{3}]}}.
\end{equation}
We see that $\fPi = 0$. This is hardly surprising as $\fPi$ is (up to a sign) the unique multiplicative Poisson bivector on $\text{H}_{3}(\R)$ whose Lie derivative at the unit should give a cocycle defining the trivial Lie algebra structure on $\g$. From (\ref{eq_fEasfE0Pi}), we thus get 
\begin{equation}
\fE = (E_{0}^{-1} - \fPi)^{-1} = E_{0} = g_{0}^{-1} + \theta_{0}. 
\end{equation}
This shows that the background field $B$ is the right-invariant $2$-form on $\text{H}_{3}(\R)$ generated by $\theta_{0} \in \R$. As $\mu = 0$, the formula (\ref{eq_Hinadmissiblesplit}) implies that $H = 0$. The metric $g$ is right-invariant and generated by $g_{0}^{-1} \in S^{2}(\g)$. It follows from the above discussion that $g_{0}^{-1}$ can be arbitrary metric on $\g^{\ast}$ such that $t^{1}$ is its isotropic vector, $g_{0}^{-1}(t^{1},t^{1}) = 0$. For example, if $g_{0}^{-1}$ has the matrix (\ref{eq_fg0inveps}), we find 
\begin{equation}
\begin{split}
g = & \ \epsilon \; {\rm d}\alpha^{2} \otimes {\rm d}\alpha^{2} + {\rm d}\alpha^{1} \otimes {\rm d}\alpha^{3} + {\rm d}\alpha^{3} \otimes {\rm d}\alpha^{1}\,- \\
& - \alpha_{3}( {\rm d}\alpha^{2} \otimes {\rm d}\alpha^{3} + {\rm d}\alpha^{3} \otimes {\rm d}\alpha^{2}). 
\end{split}
\end{equation}
Finally, we can plug into the formula (\ref{eq_dilatonfinal}). The Lie group $\text{H}_{3}(\R)$ is unimodular and thus $\bm{\nu} = 1$. We thus find a quite boring formula for dilaton, namely $\phi = 0$. 
\end{example}
\subsection{The class of $\delta$ is trivial}
Let $\g$ be any unimodular Lie algebra. Suppose we are given an isotropic splitting $\tj: \g^{\ast} \rightarrow \d$ and calculate the corresponding Chevalley--Eilenberg $1$-cocycle $\delta: \g \rightarrow \Lambda^{2} \g$. In fact, its class in the Chevalley--Eilenberg cohomology $[\delta]$ is independent the choice of $\tj$. This is in fact explained in the paragraph under the equation (\ref{eq_deltachangeofsplitting}). Suppose that this class is trivial, $[\delta] = 0$. 

This implies that we may choose the splitting $\tj: \g^{\ast} \rightarrow \d$, so that $\delta = 0$. Correspondingly, one has $[\cdot,\cdot]_{\g^{\ast}} = 0$. The bracket $[\cdot,\cdot]'_{\g^{\ast}}$ can however be highly non-trivial, as it is equal to
\begin{equation}
[\xi,\eta]'_{\g^{\ast}} = \ad^{\ast}_{\theta_{0}(\xi)}(\eta) - \ad^{\ast}_{\theta_{0}(\eta)}(\xi). 
\end{equation}
For example, one has $\fa' = [\theta_{0}(t^{i}),t_{i}]_{\g}$ which is in general non-zero. The Killing form $c'_{\g^{\ast}}$ gives
\begin{equation}
\begin{split}
c'_{\g^{\ast}}(\xi,\eta) = & \ c_{\g}( \theta_{0}(\xi), \theta_{0}(\eta)) - \< \xi, [\theta_{0}(t^{i}), [\theta_{0}(\eta), t_{i}]_{\g}]_{\g} \> \,- \\
& - \< \eta, [\theta_{0}(t^{i}), [\theta_{0}(\xi), t_{i}]_{\g}]_{\g} \> \,+\\
& + \< \xi, [t_{i}, \theta_{0}(t^{j})]_{\g} \> \cdot \< \eta, [t_{j}, \theta_{0}(t^{i})]_{\g} \>.
\end{split}
\end{equation}
Nevertheless, there are examples where we can solve the equations of motion completely. 
\begin{example}[\textbf{$\g$ is now the Heisenberg algebra}]
Suppose $\g$ is a $3$-dimensional Lie algebra given in the basis $(t_{1},t_{2},t_{3})$ by its commutation relations
\begin{equation}
[t_{1},t_{2}]_{\g} = 0, \; \; [t_{2},t_{3}]_{\g} = t_{1}, \; \; [t_{3},t_{1}]_{\g} = 0.
\end{equation}
Using arguments parallel to previous subsection, it is sufficient to verify the equations of motion for $g_{0}$ having the matrix, in this basis, in one of the following forms:
\begin{equation}
\fg_{0}(\epsilon,\lambda_{1},\lambda_{2}) = \begin{pmatrix}
\epsilon & 0 & 0 \\
0 & \lambda_{1} & 0 \\
0 & 0 & \lambda_{2}
\end{pmatrix}, \; \; 
\fg_{0}(\epsilon) = \begin{pmatrix}
0 & 0 & 1 \\
0 & \epsilon & 0 \\
1 & 0 & 0 
\end{pmatrix},
\end{equation}
where $\epsilon \in \{-1,1\}$ and $\lambda_{1} \geq \lambda_{2}$ are arbitrary non-zero numbers. Moreover, let us parametrize the matrix $\ftheta_{0}$ of the bivector $\theta_{0}$ in the above basis as
\begin{equation} \label{eq_theta0param}
\ftheta_{0}(b_{1},b_{2},b_{3}) = \begin{pmatrix}
0 & b_{3} & -b_{2} \\
-b_{3} & 0 & b_{1} \\
b_{2} & -b_{1} & 0 
\end{pmatrix}.
\end{equation}
In other words, we have $\ftheta_{0}^{ij}(b_{1},b_{2},b_{3}) = \epsilon^{ijk} b_{k}$ for a triple of unknown constants $(b_{1},b_{2},b_{3})$. The best approach is calculate first the bracket $[\cdot,\cdot]'_{\g^{\ast}}$ and all the related quantities. One finds that the commutation relations in the dual basis $(t^{1},t^{2},t^{3})$ of $\g^{\ast}$ are 
\begin{equation}
[t^{1},t^{2}]'_{\g^{\ast}} = b_{1}t^{2}, \; \; [t^{2},t^{3}]'_{\g^{\ast}} = 0, \; \; [t^{3},t^{1}]'_{\g^{\ast}} = - b_{1}t^{3}. 
\end{equation}
Note that in this basis, the bracket depends on $\theta_{0}$ only through the parameter $b_{1}$. For the induced quantities, we find their coordinate expressions
\begin{equation}
\fa' = \begin{pmatrix}
2b_{1} \\
0 \\
0
\end{pmatrix}, \; \; c'_{\g^{\ast}} = \begin{pmatrix}
2b_{1}^{2} & 0 & 0 \\
0 & 0 & 0 \\
0 & 0 & 0 
\end{pmatrix}.
\end{equation}
In fact, for $b_{1} \neq 0$, the Lie algebra $(\g^{\ast}, [\cdot,\cdot]'_{\g^{\ast}})$ lies in the isomorphism class Bianchi $\mathbf{5}$ (see e.g. Appendix A of \cite{Snobl:2002kq}) of the standard Bianchi classification. Suppose $\mu$ is in the above basis written as $\mu = \mu_{0} \cdot t_{1} \^ t_{2} \^ t_{3}$ for some given constant $\mu_{0} \in \R$. Then 
\begin{equation}
\mu' = (\mu_{0} + b_{1}^{2}) \cdot t_{1} \^ t_{2} \^ t_{3}. 
\end{equation}
This shows that the entire system of algebraic equations in Theorem \ref{thm_EOMalgebraic} does depend (in this basis) on $\theta_{0}$ only through $b_{1}$. Now, we can treat the two above classes of metrics separately.
\begin{enumerate}[i)]
\item $\fg_{0}(\epsilon,\lambda_{1},\lambda_{2})$: Let $\nu_{0} = \mu_{0} + b_{1}^{2}$. The best way to start is to solve the two scalar equations (\ref{eq_EOMscalar2}, \ref{eq_EOMscalar3}). This ensures that the scalar equation (\ref{eq_EOMscalar1}) holds and it renders $\Ric_{s}^{0}$ traceless. We obtain two equations 
\begin{equation}
\nu_{0}^{2} \lambda_{1}\lambda_{2} - \frac{1}{\lambda_{1}\lambda_{2}} = 0, \; \; \nu_{0} - \nu_{0}^{2} \lambda_{1}\lambda_{2} - 2 b_{1}^{2} = 0. 
\end{equation}
The first of the two conditions forces $\nu_{0} = \pm \frac{1}{\lambda_{1}\lambda_{2}}$. Plugging this into the second equation, we find the condition on $b_{1}$ in the form
\begin{equation}
b_{1}^{2} = \frac{\pm 1 - 1}{2\lambda_{1}\lambda_{2}}.
\end{equation}
One thus has the following two possibilities. 
\begin{enumerate}[(a)]
\item $\nu_{0} = \frac{1}{\lambda_{1}\lambda_{2}}$: This implies $b_{1} = 0$. In particular, $\nu_{0} = \mu_{0}$. We have $[\cdot,\cdot]'_{\g^{\ast}} = 0$. Consequently, $\Ric_{a}^{0} = 0$. It is straightforward to see that also $\Ric_{s}^{0} = 0$. We have thus found a solution. Suppose $\mu_{0} \neq 0$ is arbitrary. We can write $\lambda_{2} = \frac{1}{\lambda_{1}\mu_{0}}$. For a given $\mu_{0} \neq 0$ we thus have a solution $E_{0} = (g_{0}^{-1} + \theta_{0})^{-1}$. Writing the matrices in the above basis, we find
\begin{equation}
\begin{split}
\fg_{0}(\epsilon,\lambda_{1}) = & \ \begin{pmatrix}
\epsilon & 0 & 0 \\
0 & \lambda_{1} & 0 \\
0 & 0 & \frac{1}{\lambda_{1} \mu_{0}}
\end{pmatrix}, \\
\ftheta_{0}(b_{2},b_{3}) = & \ \begin{pmatrix}
0 & b_{3} & -b_{2} \\
-b_{3} & 0 & 0 \\
b_{2} & 0 & 0
\end{pmatrix},
\end{split}
\end{equation}
where $\epsilon \in \{-1,1\}$, $\lambda_{1} \neq 0$ and $b_{2},b_{3} \in \R$ are arbitrary parameters. 
\item $\nu_{0} = - \frac{1}{\lambda_{1}\lambda_{2}}$: This forces $b_{1}^{2} = - \frac{1}{\lambda_{1}\lambda_{2}}$. In particular, this can work only for $\lambda_{1}\lambda_{2} < 0$. We also see that $\mu_{0} = 0$. One can immediately argue that we have a solution. Indeed, this case corresponds to $\mu_{0} < 0$ in the (a) case, where we twist the splitting to get $\mu_{0} = 0$ at the expense of non-zero $b_{1}$. However, one can also verify everything by an explicit calculation. We indeed obtain a solution whose matrices are
\begin{equation}
\begin{split}
\fg_{0}(\epsilon,\lambda_{1},\lambda_{2}) = & \ \begin{pmatrix}
\epsilon & 0 & 0 \\
0 & \lambda_{1} & 0 \\
0 & 0 & \lambda_{2}
\end{pmatrix}, \\
\ftheta_{0}(\lambda_{1},\lambda_{2},b_{2},b_{3}) = & \ \begin{pmatrix}
0 & b_{3} & -b_{2} \\
-b_{3} & 0 & \frac{1}{\sqrt{-\lambda_{1}\lambda_{2}}} \\
b_{2} & \frac{-1}{\sqrt{-\lambda_{1}\lambda_{2}}} & 0
\end{pmatrix}.
\end{split}
\end{equation}
The parameters $\epsilon \in \{-1,1\}$, $\lambda_{1},\lambda_{2} \neq 0$ and $b_{2},b_{3} \in \R$ are arbitrary except of the condition $\lambda_{1}\lambda_{2} < 0 $. Here $\ftheta_{0}$ is not independent of the choice of $\fg_{0}$. 
\end{enumerate}
\item $\fg_{0}(\epsilon)$: The scalar equation (\ref{eq_EOMscalar3}) in this case immediately implies $\mu' = 0$, that is $\mu_{0} + b_{1}^{2} = 0$. We thus have to assume $\mu_{0} \leq 0$. The other scalar equation (\ref{eq_EOMscalar1}) gives no restrictions. One can show that $\Ric_{a}^{0} = 0$ and $\Ric_{s}^{0}$ has the matrix
\begin{equation}
\Ric_{s}^{0} = \begin{pmatrix}
0 & 0 & 0 \\
0 & 0 & 0 \\
0 & 0 & 2 b_{1}^{2}
\end{pmatrix}.
\end{equation}
In other words, the only possibility is $b_{1} = 0$. We obtain a class of solutions
\begin{equation}
\fg_{0}(\epsilon) = \begin{pmatrix}
0 & 0 & 1 \\
0 & \epsilon & 0 \\
1 & 0 & 0
\end{pmatrix}, \; \; \ftheta_{0}(b_{2},b_{3}) = \begin{pmatrix}
0 & b_{3} & -b_{2} \\
-b_{3} & 0 & 0 \\
b_{2} & 0 & 0
\end{pmatrix},
\end{equation}
where $\epsilon \in \{-1,1\}$ and $b_{2},b_{3} \in \R$ are arbitrary parameters. 
\end{enumerate}
There are some remarks in order. For $\mu_{0} = 0$, the Lie algebra $\d$ is isomorphic to the one in Example \ref{ex_heisenberg1}. All the corresponding solutions in (i)-(b) and (ii) can be thus obtained via Theorem \ref{thm_PLTdual} from those in Example \ref{ex_heisenberg1}. For $\mu_{0} < 0$, we can choose another isotropic splitting $\tj': \g^{\ast} \rightarrow \d$ to make $\mu'_{0} = 0$. The Manin pair $(\d,\g)$ thus allows a decomposition into the Manin triple $(\d,\g,\g^{\ast})$. This leads precisely to the Manin triple isomorphic to the dual to $(\mathbf{5}|\mathbf{2.i})$ listed in Appendix $B$ of \cite{Snobl:2002kq}. But in fact, this Manin triple (see Theorem $1$ of the same reference) is the decomposition of the same Drinfel'd double as the semi-Abelian case $(\mathbf{5}|\mathbf{1})$. The only solution that stands out is thus $\mu_{0} > 0$ case in (i)-(a). Note that for $\epsilon = 1$ and $\lambda_{1} > 0$ the metric $g_{0}$ is positive-definite. This indirectly proves that $\d$ is not isomorphic to the semi-Abelian case with $\mu_{0} = 0$, as there is no positive-definite solution. 

For $\mu_{0} = 0$, we can explicitly integrate $(\d,\g)$ to $(D,G)$. Indeed, we can set
\begin{equation}
D = H_{3}(\R) \ltimes \g^{\ast},
\end{equation}
where $\g = \Lie(H_{3}(\R))$ is the Lie algebra discussed in this example. Here $H_{3}(\R)$ acts on $\g^{\ast}$ by its coadjoint representation. The coset space $S$ can be identified with $\g^{\ast}$. The quotient map $\pi: H_{3}(\R) \ltimes \g^{\ast} \rightarrow \g^{\ast}$ is identified with the map defined for all $[\alpha^{1},\alpha^{2},\alpha^{3}] \in H_{3}(\R)$ and $\eta \in \g^{\ast}$ as 
\begin{equation}
\pi( [\alpha^{1},\alpha^{2},\alpha^{3}], \eta) = \Ad^{\ast}_{[\alpha^{1},\alpha^{2},\alpha^{3}]}(\eta).
\end{equation}
Here, we used the notation introduced in Example \ref{ex_heisenberg1}. One can show that the bivector $\Pi_{S}$ on $S \cong \g^{\ast}$ corresponding to the canonical splitting $\tj: \g^{\ast} \rightarrow \g \oplus \g^{\ast}$ is minus the Kirillov-Kostant-Souriau Poisson structure $\Pi_{\g^{\ast}}$ on $\g^{\ast}$. Recall that we can view the basis $(t_{k})_{k=1}^{\dim(\g)}$ as coordinates on $\g^{\ast}$. The bivector $\Pi_{\g^{\ast}}$ is then given by 
\begin{equation}
\Pi_{\g^{\ast}}(\xi) = \frac{1}{2}\<\xi, [t_{i},t_{j}]_{\g} \> \frac{\partial}{\partial t_{i}} \^ \frac{\partial}{\partial t_{j}}
\end{equation}
The right-invariant vector fields on the Abelian Lie group $(\g^{\ast},+)$ generated by $t_{i} \in \g^{\ast}$ coincide with the coordinate vector fields $\frac{\partial}{\partial t_{i}}$. In other words, the function $\fPi \in C^{\infty}(\g^{\ast}, \Lambda^{2} \g^{\ast})$ can be for all $x,y \in \g$ written as 
\begin{equation}
(\fPi(x,y))(\xi) = - \<\xi, [x,y]_{\g} \>. 
\end{equation}
Equivalently, its $\g^{\ast}$-dependent matrix in the basis $(t_{1},t_{2},$ $t_{3})$ evaluated at $\xi \in \g^{\ast}$ takes the form 
\begin{equation}
\fPi(\xi) = \begin{pmatrix}
0 & 0 & 0 \\
0 & 0 & -\xi_{1} \\
0 & \xi_{1} & 0 
\end{pmatrix}
\end{equation}
As an example, consider the solution in the part (i)-(b) above. The matrix $\fE_{0}$ of the map $E_{0}: \g^{\ast} \rightarrow \g$ is then
\begin{equation}
\fE_{0} = \begin{pmatrix}
\epsilon & b_{3} & -b_{2} \\
-b_{3} & \frac{1}{\lambda_{1}} & \frac{1}{\sqrt{-\lambda_{1}\lambda_{2}}} \\
b_{2} & - \frac{1}{\sqrt{-\lambda_{1}\lambda_{2}}} & \frac{1}{\lambda_{2}}
\end{pmatrix}.
\end{equation}
One has $\det(\fE_{0}) = \frac{b_{2}^{2}}{\lambda_{1}} + \frac{b_{3}^{2}}{\lambda_{2}}$. This means that $\fE_{0}$ can be singular for some values of $b_{2}$ and $b_{3}$. Recall that $\lambda_{1}$ and $\lambda_{2}$ have opposite signs, so $\det(\fE_{0}) = 0$ can happen also for $(b_{2},b_{3}) \neq (0,0)$. These kinds of singularities appear as $\fg_{0}(\epsilon,\lambda_{1},\lambda_{2})$ is now indefinite. We will not go into the calculation of $\fE$. Instead, let us look at the dilaton $\phi$. As $\g^{\ast}$ is a unimodular Lie algebra, one has $\bm{\nu} = 1$. Moreover, as $g_{0}$ is indefinite, we will add the absolute values into the formula (\ref{eq_dilatonfinal}), see Remark \ref{rem_indefinite}. One has 
\begin{equation}
\begin{split}
\phi(\xi) = & \ - \frac{1}{2} \ln| \det(\f1_{\g} - E_{0} \fPi(\xi))| \\
= & \ - \frac{1}{2}\ln|\det \begin{pmatrix} 1 & b_{2}\xi_{1} & b_{3}\xi_{1} \\
0 & 1 - \frac{\xi_{1}}{\sqrt{-\lambda_{1}\lambda_{2}}} & \frac{\xi_{1}}{\lambda_{1}} \\
0 & - \frac{\xi_{1}}{\lambda_{2}} & 1 - \frac{\xi_{1}}{\sqrt{-\lambda_{1}\lambda_{2}}} 
\end{pmatrix}|
\\
= & \ - \frac{1}{2} \ln| 1 - \frac{2\xi_{1}}{\sqrt{-\lambda_{1}\lambda_{2}}}|.
\end{split}
\end{equation}
We see that this formula has a singularity at $\xi_{1} =$\linebreak $\frac{1}{2} \sqrt{-\lambda_{1}\lambda_{2}}$, another consequence of the indefiniteness of $g_{0}$. This concludes this example. 
\end{example}

In general, it is quite difficult to find some solutions. In fact, there is a (majority) of cases, where there is no solution at all. Let us demonstrate this on the following example. 

\begin{example}[\textbf{$\g$ is the compact algebra $\su(2)$}] This case is potentially very interesting as it includes the case where the geometry is easy to handle. 

Indeed, let $G$ be any Lie group, such that its Lie algebra $\g$ is quadratic, that is it comes equipped with a non-degenerate symmetric invariant bilinear form $g_{\g} \equiv \<\cdot,\cdot\>_{\g}$. Consider the direct product of Lie groups $D = G \times G$. We can view $G$ as a closed subgroup of $D$ if we identify it with its image with respect to the diagonal map $\Delta_{G}: G \rightarrow G \times G$. The inclusion map $\ti: \g \rightarrow \d$ is then equal to the diagonal map $\Delta_{\g}: \g \rightarrow \g \oplus \g$. As $\Delta_{\g}(\g) \subseteq \d$ must be Lagrangian, we define
\begin{equation}
\<(x,x'),(y,y')\>_{\d} = \<x,y\>_{\g} - \<x',y'\>_{\g}. 
\end{equation}
Clearly $(\d,\g)$ then forms a Manin pair. Define the isotropic splitting $\tj: \g^{\ast} \rightarrow \d$ for all $\xi \in \g^{\ast}$ by 
\begin{equation}
\tj(\xi) = \frac{1}{2} ( g_{\g}^{-1}(\xi), - g_{\g}^{-1}(\xi)). 
\end{equation}
It is easy to see that $\delta: \g \rightarrow \Lambda^{2} \g$ corresponding to this splitting vanishes. The $3$-vector $\mu$ can be then calculated to give
\begin{equation}
\begin{split}
\mu(\xi,\eta,\zeta) = & \ \frac{1}{4} \< [g_{\g}^{-1}(\xi), g_{\g}^{-1}(\eta)]_{\g}, \zeta \> \\
= & \ \frac{1}{4} (\Lambda^{3} g_{\g}^{-1})(\chi_{\g}) (\xi,\eta,\zeta),
\end{split}
\end{equation} 
where $\chi_{\g}(x,y,z) = \<[x,y]_{\g},z\>_{\g}$ is the Cartan $3$-form corresponding to $(\g,\<\cdot,\cdot\>_{\g})$. We see that this example fits into this subsection. Now, it is natural to consider $\g = \su(2)$ and choose $\<\cdot,\cdot\>_{\g}$ to be the Killing form $c_{\g}$ of $\g$. We find the following statement:
\begin{tvrz}
For this Manin pair, there is no $\E_{+} \subseteq \d$ solving the equations in Theorem \ref{thm_EOMalgebraic}.  This is true even if we allow arbitrary $\mu$ and the metric $g_{0}$ of any signature. 
\end{tvrz}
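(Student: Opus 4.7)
The plan is to apply Theorem~\ref{thm_EOMalgebraic} to this specific Manin pair, reduce the algebraic system by the remaining symmetry, and show that in each canonical form the system is inconsistent.

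First, I would fix an orthonormal basis $(e_i)_{i=1}^{3}$ of $\su(2)$ with respect to $c_{\g}$, with structure constants $[e_i,e_j]_{\g} = \varepsilon_{ij}{}^{k} e_{k}$, and use the dual basis $(e^i)$ on $\g^{\ast}$. With the prescribed splitting $\tj$, formula~(\ref{eq_defdeltamu}) gives $\delta = 0$, hence $[\cdot,\cdot]_{\g^{\ast}} = 0$. Consequently the parameter space reduces to $\mu \in \Lambda^{3}\g \cong \R$ with single coefficient $\mu_{0}$, a symmetric non-degenerate $g_{0} \in S^{2}\g$, and $\theta_{0} \in \Lambda^{2}\g \cong \g^{\ast}$ parametrised by a triple $(b_{1},b_{2},b_{3})$ as in~(\ref{eq_theta0param}).

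Second, I would exploit the action of $\Aut(\su(2)) = SO(3)$, which transforms $g_{0}$ and $\theta_{0}$ simultaneously while leaving $\mu_{0}$ fixed (the volume is $SO(3)$-invariant). As in the Heisenberg example, this reduces $g_{0}$ to finitely many canonical forms: diagonal representatives $\fg_{0}(\epsilon_{1}\lambda_{1}, \epsilon_{2}\lambda_{2}, \epsilon_{3}\lambda_{3})$ with $\epsilon_{i} \in \{-1,+1\}$, plus a couple of null-type representatives analogous to $\fg_{0}(\epsilon)$. The residual stabiliser then restricts $(b_{1},b_{2},b_{3})$.

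Third, I would compute the modified bracket $[\cdot,\cdot]'_{\g^{\ast}}$, the induced $\mu'$, the structure vector $\fa'$ and the modified Killing form $c'_{\g^{\ast}}$ via the formulas in Appendix~\ref{ap_eom}, and then extract, from the scalar equation $\RS^{+}_{\cD^{0}} = 0$ and from suitable components (and traces) of $\Ric_{\cD^{0}}(\E_{+},\E_{-}) = 0$, a small over-determined subsystem for each canonical $g_{0}$. The key structural input is that $\g = \su(2)$ is simple with a definite-sign Killing form; hence the contributions coming from $[\cdot,\cdot]_{\g}$ produce a \emph{non-cancellable} term proportional to $c_{\g}$ on top of the quadratic contributions in $\theta_{0}$ and $\mu_{0}$ that are familiar from the Abelian case. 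I expect the critical combination to take the schematic shape ``a strictly positive constant coming from $c_{\g}$, plus a quadratic form in $(\mu_{0}, b_{i}, \lambda_{j})$'', which cannot vanish for any real choice of the parameters, regardless of the signs of $\lambda_{j}$.

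The main obstacle is \emph{not} the high-level strategy but the detailed case analysis: indefinite signatures of $g_{0}$ preclude a naive sum-of-squares argument, so in each canonical form one must identify by hand the right linear combination of the scalar and tensor equations whose vanishing is obstructed by the $\su(2)$ contribution. The $SO(3)$-reduction keeps the combinatorics manageable, but the careful bookkeeping in the null-type and indefinite diagonal branches is where the real work lies.
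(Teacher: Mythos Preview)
Your overall strategy matches the paper's: reduce by $\Aut(\su(2)) = SO(3)$, parametrise $\theta_{0}$ by a vector $\fb = (b_{1},b_{2},b_{3})$, and then analyse the equations of Theorem~\ref{thm_EOMalgebraic} in the resulting normal form. Two points are worth sharpening.

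First, there are no ``null-type'' representatives here. Because $SO(3)$ acts by orthogonal transformations in the standard basis, the real spectral theorem applies and \emph{every} non-degenerate symmetric $g_{0}$ is $SO(3)$-conjugate to a diagonal one. So the case analysis consists only of diagonal $\fg_{0} = \mathrm{diag}(\lambda_{1},\lambda_{2},\lambda_{3})$ with $\lambda_{i} \neq 0$, which is exactly what the paper uses.

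Second, your heuristic that the Killing-form contribution produces a ``strictly positive constant plus quadratic form'' is too optimistic in indefinite signature, as you yourself note. The paper avoids this by a specific two-step route: first combine the scalar equation (\ref{eq_EOMscalar1}) with the skew-symmetric tensor equation (\ref{eq_EOMtensor2}) to force $\fb = 0$; with $\theta_{0}$ gone, the bracket $[\cdot,\cdot]'_{\g^{\ast}}$ and $c'_{\g^{\ast}}$ vanish, and the two derived scalar equations (\ref{eq_EOMscalar2}) and (\ref{eq_EOMscalar3}) become relations between $\<\mu',\mu'\>_{g_{0}}$ and purely $\g$-side quantities; these then contradict the symmetric tensor equation (\ref{eq_EOMtensor1}). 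This route never needs a global positivity argument and handles all signatures at once. If you follow your plan but replace the sign heuristic by this specific combination of equations, you recover the paper's argument verbatim.
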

\begin{proof}Let us only sketch the approach leading to this statement. Let $(t_{1},t_{2},t_{3})$ be the usual $\su(2)$ basis where the commutation relations have the form
\begin{equation}
[t_{1},t_{2}]_{\g} = t_{3}, \; \; [t_{2},t_{3}]_{\g} = t_{1}, \; \; [t_{3},t_{1}]_{\g} = t_{2}.
\end{equation}
The matrix $\fA$ of any automorphism of $\su(2)$ in the above basis lies in the group $\text{SO}(3)$. We can thus change the basis such that the above commutation relations remain valid and the matrix $\fg_{0}$ of $g_{0}$ is diagonal. This simplifies calculations significantly. Moreover, one can parametrize $\theta_{0}$ by a vector $\fb = (b_{1},b_{2},b_{3})^{T}$ as in (\ref{eq_theta0param}). One can then combine (\ref{eq_EOMscalar1}) and (\ref{eq_EOMtensor2}) to show that necessarily $\fb = 0$. The compatibility of the two scalar equations (\ref{eq_EOMscalar2}) and (\ref{eq_EOMscalar3}) then forces conditions on $g_{0}$ which directly contradict the equations coming from the equation (\ref{eq_EOMtensor1}). 
\end{proof}
\end{example}

\begin{appendices}
\section{Important relations} \label{ap_relations}
Let us spell out some relations between the objects defined in the paragraph above (\ref{eq_fPi}). We assume that $\tj$ is an everywhere admissible (otherwise the formulas work only locally) splitting of (\ref{eq_ManinSES}). Let us start with the commutator $[\xi^{\tr},\eta^{\tr}]$ of the vector fields generating $\X(S)$. For all $\xi,\eta \in \g^{\ast}$, one has
\begin{equation}
\begin{split}
[\xi^{\tr}, \eta^{\tr}] = & \ [\#^{\tr}(\tj(\xi)), \#^{\tr}(\tj(\eta))] = - \#^{\tr}([\tj(\xi), \tj(\eta)]_{\d}) \\
= & \ -\#^{\tr}( \ti(\mu(\xi,\eta,\cdot)) + \tj([\xi,\eta]_{\g^{\ast}}) ) \\
= & \ -[\xi,\eta]_{\g^{\ast}}^{\tr} + \fPi(\mu(\xi,\eta,\cdot))^{\tr}.
\end{split}
\end{equation} 
Analogously, one can calculate the further relations and we conclude that
\begin{subequations}
\begin{align}
\label{eq_genrel1} [\xi^{\tr}, \eta^{\tr}] = & \ -[\xi,\eta]^{\tr}_{\g^{\ast}} + \fPi( \mu(\xi,\eta,\cdot))^{\tr}, \\
\label{eq_genrel2} [\xi^{\tr}, \fPi(y^{\tr})] = & \ -(\ad^{\ast}_{y}(\xi))^{\tr} - \fPi(\ad^{\ast}_{\xi}(y))^{\tr}, \\
\label{eq_genrel3} [\fPi(x)^{\tr},\fPi(y)^{\tr}] = & \ \fPi([x,y]_{\g})^{\tr}.
\end{align}
\end{subequations}
The corresponding Lie derivatives are
\begin{subequations}
\begin{align}
\Li{\xi^{\tr}}( y^{\tr}) = & \  - (\ad^{\ast}_{\xi}(y))^{\tr} - \mu(\xi,\fPi(y),\cdot)^{\tr}, \\
\Li{\fPi(x)}( y^{\tr}) = & \  [x,y]^{\tr}_{\g} - (\ad^{\ast}_{\fPi(y)}(x))^{\tr}. 
\end{align}
\end{subequations}
Using those partial results, we can derive the identity
\begin{equation} \label{eq_Picrucial}
\begin{split}
\Li{\xi^{\tr}}( \fPi(x,y)) = & \ - \xi([x,y]_{\g}) - \fPi( \ad^{\ast}_{\xi}(x), y)\,- \\
& - \fPi(x, \ad^{\ast}_{\xi}(y)) - \mu(\xi,\fPi(x),\fPi(y)). 
\end{split}
\end{equation}
\section{Proof of the dilaton formula} \label{ap_dilaton}
\begin{proof}[The proof of Theorem \ref{thm_dilaton}]
First, note that the equation (\ref{eq_dilatonformula}) has in fact two independent components, which can be rewritten using the definition of the connection $\cD^{\sigma}$ as
\begin{subequations}
\begin{align}
\label{eq_dilatonequation1} \Li{X}(\phi) = & \ \frac{1}{2}( \Div_{\cD^{g}}(X) - \Div_{\cD^{E}}(\sigma(X))), \\
\label{eq_dilatonequation2} 0 = & \ \Div_{\cD^{E}}(\rho^{\ast}(\xi)).
\end{align}
\end{subequations}
The second equation certainly does not depend on $\sigma$. It follows from (\ref{eq_changeofsplitting}) that if (\ref{eq_dilatonequation2}) stands true, the right-hand side of (\ref{eq_dilatonequation1}) is also independent of the splitting. Moreover, it is clear that $\Div_{\cD^{E}}$ depends on $\cD^{0}$ only through its divergence operator which is assumed to be trivial. 

Next, as noted under the definition (\ref{eq_admissible}), there exists an isotropic splitting $\tj: \g^{\ast} \rightarrow \d$ of the sequence (\ref{eq_ManinSES}) admissible on a neighborhood $U$ of each point. We can thus locally define $\phi$ via the formula (\ref{eq_dilatonfinal}). First, let us note that it is well defined. We can write $\f1_{\g} - E_{0}\fPi = E_{0} \fE^{-1}$. As $E_{0}$ and $\fE$ have positive-definite symmetric parts, we have $\det( E_{0} \fE^{-1}) > 0$ on the whole $S$ and the first part of $\phi$ thus makes sense. Now, if $U$ is connected and contains $s_{0} = \pi_{0}(G)$, we have $\bm{\nu}(s_{0}) = 1$ and $\bm{\nu}(s) \neq 0$ for all $s \in U$. Whence, $\bm{\nu}(U) \subseteq \R^{+}$ and the second part of $\phi$ is well-defined on $U$. Now, suppose that $\tj': \g^{\ast} \rightarrow \d$ is an isotropic splitting of (\ref{eq_ManinSES}) admissible on the neighborhood $U'$, such that $U \cap U' \neq \emptyset$. There is thus a unique bivector $\theta \in \Lambda^{2} \g$ satisfying (\ref{eq_splitchange}). As both $\tj$ and $\tj'$ are admissible an $U \cap U'$, it follows that the map $\f1_{\g^{\ast}} - \fPi \theta$ is invertible on $U \cap U$', and the maps $\fPi'$ and $E'_{0}$ associated to $\tj'$ can be written as 
\begin{equation} \label{eq_splitchangedilaton}
\fPi' =  \fPi(\f1_{\g} - \theta \fPi)^{-1},\; \; E'_{0} = E_{0} - \theta.
\end{equation}
We thus get the expression valid on $U \cap U'$:
\begin{equation}
\begin{split}
\f1_{\g} - E'_{0} \fPi' = & \ \f1_{\g} - (E_{0} - \theta) \fPi (\f1_{\g} - \theta \fPi)^{-1} \\
= & \ ( \f1_{\g} - \theta \fPi - (E_{0} - \theta) \fPi))(\f1_{\g} - \theta \fPi)^{-1} \\
= & \ (\f1_{\g} - E_{0} \fPi)(\f1_{\g} - \theta \fPi)^{-1}.
\end{split}
\end{equation}
In particular, this implies $\det( \f1_{\g} - \theta \fPi) > 0$ on $U \cap U'$ and we have the relation
\begin{equation}
\begin{split}
- \frac{1}{2} \ln( \det(\f1_{\g} - E'_{0} \fPi')) = & \ - \frac{1}{2} \ln( \det(\f1_{\g} - E_{0} \fPi)) \\
& + \frac{1}{2} \ln( \det(\f1_{\g} - \theta \fPi)). 
\end{split}
\end{equation}
Next, for all $d \in \pi_{0}^{-1}(U \cap U')$, one finds the relation $\fk'(d) = (\f1_{\g} - \theta \fc(d) \fk(d)^{-1}) \cdot \fk(d)$. Using Proposition \ref{tvrz_fPiasck}, we can now on $U \cap U'$ express the function $\bm{\nu'}$ as 
\begin{equation}
\bm{\nu'} = \det( \f1_{\g} - \theta \fPi) \cdot \bm{\nu}.
\end{equation}
Note that this also implies that $\bm{\nu'}(U \cap U') \subseteq \R^{+}$ and if $U'$ is connected, then $\bm{\nu'}(U') \subseteq \R^{+}$. As $S$ is by definition connected (because $D$ is connected), we may use this to show that $\bm{\nu}$ in the formula (\ref{eq_dilatonfinal})  defined using a splitting admissible on any connected open set $U \subseteq S$ satisfies $\bm{\nu}(U) \subseteq \R^{+}$ and the second term in (\ref{eq_dilatonfinal}) is well-defined. Moreover, on $U \cap U'$, we find
\begin{equation}
- \frac{1}{2} \ln( \bm{\nu'}) = - \frac{1}{2} \ln( \bm{\nu}) - \frac{1}{2} \ln( \det(\f1_{\g} - \theta \fPi)),
\end{equation}
which shows that the two formulas for $\phi$ coincide on the intersection $U \cap U'$. We can thus use the local expressions to construct $\phi$ globally. 

In particular, we can without the loss of generality assume in the remainder of the proof that $\tj: \g^{\ast} \rightarrow \d$ is an everywhere admissible isotropic splitting of (\ref{eq_ManinSES}). Let us now show that under the assumptions of the theorem, the equation (\ref{eq_dilatonequation2}) holds. Let $(t_{\mu})_{\mu=1}^{\dim(\d)}$ be a fixed basis of $\d$. It suffices to prove it for $\xi = df$ and $f \in C^{\infty}(S)$. One finds 
\begin{equation} \label{eq_ficEonrhoastdf}
\Div_{\cD^{E}}( \rho^{\ast}(df)) = \Li{\#^{\tr}(t_{\mu})} \Li{\#^{\tr}(t^{\mu}_{\d})}(f) + \Div_{\cD^{0}}( \rho^{\ast}(df)),
\end{equation}
where $(t^{\mu}_{\d})_{\mu=1}^{\dim(\d)}$ is the basis of $\d$ satisfying $\< t_{\mu}, t^{\nu}_{\d} \>_{\d} = \delta_{\mu}^{\nu}$. In fact, one can show from the properties of $E$ that the left-hand side of (\ref{eq_ficEonrhoastdf}) viewed as an operator on $C^{\infty}(S)$ is a vector field. In particular, there is a unique vector field $Y \in \X(S)$, such that $\Li{Y}(f) = \Li{\#^{\tr}(t_{\mu})} \Li{\#^{\tr}(t^{\mu}_{\d})}(f)$. It follows that $Y$ represents the quadratic Casimir $t_{\mu}  t^{\mu}_{\d} \in \mathfrak{U}(\d)$, hence it must commute with all the generators $\#^{\tr}(x)$. This implies that $Y$ is invariant with respect to the transitive dressing action $\tr$ and it is determined by its value at the point $s_{0} = \pi_{0}(G)$. One can write
\begin{equation}
Y = -\Li{t^{k \tr}}( \fPi(t_{q},t_{k})) \cdot t^{q \tr},
\end{equation}
where $(t_{k})_{k=1}^{\dim(\g)}$ is an arbitrary basis of $\g$. This expression can be evaluated using (\ref{eq_Picrucial}). In particular, at $s_{0}$ there is $\fPi_{s_{0}} = 0$ and one obtains 
\begin{equation}
\begin{split}
Y_{s_{0}} = & \ - t^{k}( [t_{q},t_{k}]_{\g}) \cdot \#^{\tr}_{s_{0}}( \tj(t^{q})) \\
= & \ - \#^{\tr}_{s_{0}}( \tj( \Tr( \ad_{t_{q}}) \cdot t^{q})) = - \#^{\tr}_{s_{0}}( \tj(\falpha)),
\end{split}
\end{equation}
where $\falpha \in \g^{\ast}$ is defined by $\falpha(x) = \Tr( \ad_{x})$ for all $x \in \g$. At any $s \in S$ and $\pi_{0}(d) = s$, one has 
\begin{equation}
Y_{s} = - \#^{\tr}_{s}( \Ad_{d}( \tj(\falpha))).
\end{equation}
But our assumptions were $\falpha = 0$ (unimodularity of $\g$) and $\Div_{\cD^{0}} = 0$. It follows that the right-hand side of (\ref{eq_ficEonrhoastdf}) vanishes and (\ref{eq_dilatonequation2}) indeed holds. 

Now, assume that $\sigma: TS \rightarrow E$ is the splitting (\ref{eq_sigmaasxitr}) corresponding to the everywhere admissible splitting $\tj$. For $X = \xi^{\tr}$, one finds that the second term on the right-hand side of (\ref{eq_dilatonequation1}) vanishes:
\begin{equation}
\Div_{\cD^{E}}( \sigma(\xi^{\tr})) = \Div_{\cD^{E}}( \tj(\xi)) = \Div_{\cD^{0}}( \tj(\xi)) = 0.
\end{equation}
We thus obtain the simplified equation for $\phi$, namely, for all $\xi \in \g^{\ast}$, one has
\begin{equation} \label{eq_dilatonequation3}
\Li{\xi^{\tr}}(\phi) = \frac{1}{2} \Div_{\cD^{g}}( \xi^{\tr}).
\end{equation}

Let us proceed by calculating the differential of the function $\phi$ given by (\ref{eq_dilatonfinal}). First, denote the first of the two summands as $\phi_{0} = - \frac{1}{2} \ln(\det(\f1_{\g} - E_{0} \fPi))$ . Using the standard formulas for differentiation of logarithms and determinants, one arrives to the formula
\begin{equation}
d\phi_{0} = \frac{1}{2} \Tr( \fE \cdot d \fPi).
\end{equation}
Write $\fE = \fg + \fB$ for the decomposition of $\fE$ into its symmetric and skew-symmetric part. As $\fPi$ is skew-symmetric, only $\fB$ contributes to the sum above. Whence we get
\begin{equation} \label{eq_dphi0}
(d\phi_{0})(\xi) = \frac{1}{2} \Tr( \fB \cdot \Li{\xi^{\tr}}(\fPi)).
\end{equation}
For the second summand, let $\phi_{1} = -\frac{1}{2} \ln( \bm{\nu})$. Now, recall that $\xi^{\tr}$ is $\pi_{0}$-related to the right-invariant vector field $\tj(\xi)^{R} \in \X(D)$. Whence,
\begin{equation}
\begin{split}
(d\phi_{1})(\xi^{\tr}) \circ \pi_{0} = & \ -\frac{1}{2} \Li{\tj(\xi)^{R}}(\ln(\det(\fk))) \\
= & \ -\frac{1}{2} \Tr( \fk^{-1} \Li{\tj(\xi)^{R}}(\fk)) \\
= & \ -\frac{1}{2} \Tr\{ \ad^{\ast}_{\xi} + \mu(\xi, (\fc \cdot \fk^{-1})(\star), \cdot) \},
\end{split}
\end{equation}
where $\star$ denotes the input of the map whose trace we are taking. Now, we have to apply Proposition \ref{tvrz_fPiasck} to see that $\fc \cdot \fk^{-1} = \fPi \circ \pi_{0}$. Moreover, define $\fa \in \g$ for all $\xi \in \g^{\ast}$ by $\xi(\fa) = \Tr(\ad^{\ast}_{\xi})$. We can thus rewrite the resulting expression as 
\begin{equation} \label{eq_dphi1}
(d\phi_{1})(\xi^{\tr}) = \frac{1}{2} \< \xi,  \fa - \mu(\fPi(t_{k}), t^{k}, \cdot) \>. 
\end{equation}
To proceed, let us derive the way to differentiate the function $\fg$ along the vector fields. First, decompose the constant map $E_{0}^{-1} = g'_{0} + B'_{0}$ into its symmetric and skew-symmetric part. Then $\fE = (g'_{0} + (B'_{0} - \fPi))^{-1}$. Using the standard formulas for the symmetric and skew-symmetric parts of the inverse, we find
\begin{equation}
\begin{split}
\fg = & \ (g'_{0} - (B'_{0} - \fPi) g'^{-1}_{0} (B'_{0} - \fPi))^{-1}, \\
\fB = & \ -g'^{-1}_{0}  (B'_{0} - \fPi) \fg.
\end{split}
\end{equation}
Using that $g'_{0}$ and $B'_{0}$ are constant, it is not difficult to arrive to the equation
\begin{equation} \label{eq_Liofginverse}
\Li{X}(\fg^{-1}) = - \Li{X} (\fPi) \cdot \fB \fg^{-1} - \fg^{-1} \fB \cdot \Li{X}(\fPi),
\end{equation}
for all $X \in \X(S)$. From the definition of Levi-Civita connection and divergence, one finds directly
\begin{equation}
\frac{1}{2}\Div_{\cD^{g}}(\xi^{\tr}) = \frac{1}{2}\< [t^{k \tr}, \xi^{\tr}], t^{\tr}_{k} \> - \frac{1}{4} \Tr( \Li{\xi^{\tr}}(\fg^{-1}) \cdot \fg)
\end{equation}
Using (\ref{eq_genrel1}), the first term can be rewritten as
\begin{equation}
\frac{1}{2}\< [t^{k\tr}, \xi^{\tr}], t^{\tr}_{k} \> = \frac{1}{2} \< \xi, \fa - \mu( \fPi(t_{k}), t^{k},\cdot) \>. 
\end{equation} 
But this is exactly the contribution (\ref{eq_dphi1}) of $\phi_{1}$. On the other hand, the contribution of the second term can be evaluated using (\ref{eq_Liofginverse}) and the usual properties of trace. One gets
\begin{equation}
- \frac{1}{4} \Tr( \Li{\xi^{\tr}}( \fg^{-1}) \cdot \fg) = \frac{1}{2} \Tr( \fB \cdot \Li{\xi^{\tr}}(\fPi) ).
\end{equation}
But this is precisely the contribution (\ref{eq_dphi0}). We conclude that the equation (\ref{eq_dilatonequation3}) holds true and we have thus finished the proof.
\end{proof}
\section{Deriving the equations of motion} \label{ap_eom}
\subsection{General expressions}
Let $\cD^{0} \in \LC(\d,\E_{+})$ be any divergence-free Levi-Civita connection on the Courant algebroid $(\d,0,\<\cdot,\cdot\>_{\d},-[\cdot,\cdot]_{\d})$. We write down the system of equations equivalent to $\RS_{\cD^{0}}^{+} = 0$ and $\Ric_{\cD^{0}}(\E_{+},\E_{-}) = 0$ explicitly . First, let us define a tensor $\fk \in (\d^{\ast})^{\otimes 3}$ using $\cD^{0}$ as 
\begin{equation}
\fk(x,y,z) = \< \cD^{0}_{x}(y), z \>_{\d}, 
\end{equation}
for all $x,y,z \in \d$. The condition $\cD^{0}(g_{\d}) = 0$ is equivalent to $\fk \in \d^{\ast}  \otimes \Lambda^{2} \d^{\ast}$. The compatibility with $\E_{+}$ then implies $\fk(x,y_{+},z_{-}) = 0$ for all $x,y,z \in \d$, where $x_{\pm}$ denotes the projections onto $\E_{\pm}$. Next, let $\chi(x,y,z) = \<[x,y]_{\d}, z \>_{\d}$ be the canonical Cartan $3$-form on $(\d,[\cdot,\cdot]_{\d})$. It follows that $\cD^{0}$ is torsion-free if and only if $\fk$ satisfies
\begin{equation} \label{eq_kpluscyclic}
\fk(x,y,z) + \cyc(x,y,z) = - \chi(x,y,z),
\end{equation}
or equivalently $\fk_{a} = - \frac{1}{3} \chi$, where $\fk_{a} \in \Lambda^{3} \d^{\ast}$ is the complete skew-symmetrization of $\fk$. Finally, we have assumed that $\cD^{0}$ is divergence-free. Let $(t_{\mu})_{\mu=1}^{\dim(\d)}$ be an arbitrary basis of $\d$, and let $(t^{\mu}_{\d})_{\mu=1}^{\dim(\d)}$ be the basis defined by $\< t^{\mu}_{\d}, t_{\nu} \>_{\d} = \delta^{\mu}_{\nu}$. We find the expression
\begin{equation}
0 = \Div_{\cD^{0}}(z) = \fk(t_{\mu},z,t^{\mu}_{\d}) \equiv - \fk'(z),
\end{equation}
for all $z \in \d$. We may now proceed to the calculation of the generalized Ricci tensor $\Ric_{\cD^{0}} \in S^{2}(\d^{\ast})$. From the definition, we obtain the generalized Riemann tensor 
\begin{equation}
\begin{split}
R_{\cD^{0}}(w,z,x,y) = & \ \frac{1}{2} \{ \fk(x,g_{\d}^{-1}\fk(y,z,\cdot),w)\,- \\
& - \fk(y, g_{\d}^{-1}\fk(x,z,\cdot),w)\,+ \\
& + \fk(z, g_{\d}^{-1} \fk(w,x,\cdot),y)\,- \\
& - \fk(w, g_{\d}^{-1}\fk(z,x,\cdot),y)\,+ \\
& + \fk(g_{\d}^{-1}\fk(\cdot,x,y),z,w)\,+ \\
& + \fk([x,y]_{\d},z,w) + \fk([z,w]_{\d},x,y) \}.
\end{split}
\end{equation}
It follows that the generalized Ricci tensor reads
\begin{equation}
\begin{split}
\Ric_{\cD^{0}}(x,y) = & \ \frac{1}{2}\{ -\fk'(t_{\mu}) ( \fk(x,y,t^{\mu}_{\d}) + \fk(y,x,t^{\mu}_{\d}) )\,+ \\
& + \fk( g_{\d}^{-1} \fk(\cdot,t_{\mu},y),x,t^{\mu}_{\d})\,- \\
& - \fk(y, g_{\d}^{-1} \fk(t_{\mu},x,\cdot),t^{\mu}_{\d})\,- \\
& - \fk(t^{\mu}_{\d}, g_{\d}^{-1}\fk(x,t_{\mu},\cdot),y)\,+ \\
& + \fk([t_{\mu},x]_{\d},y,t^{\mu}_{\d}) + \fk([t_{\mu},y]_{\d}, x, t^{\mu}_{\d}) \}.
\end{split}
\end{equation}
The first term proportional to $\fk'(t_{\mu})$ vanishes. The remaining five terms can be recast using (\ref{eq_kpluscyclic}) to give a relatively simple expression 
\begin{equation} \label{eq_Ric0eqfinal}
\Ric_{\cD^{0}}(x,y) = \frac{1}{2} \fk(t_{\lambda},t_{\mu},y) ( \fk(t^{\lambda}_{\d},t^{\mu}_{\d},x) - 2 \fk(t^{\mu}_{\d}, t^{\lambda}_{\d}, x) ).
\end{equation}
For the record, we may now easily calculate the canonical Ricci scalar $\RS_{\cD^{0}}$. One finds
\begin{equation} \begin{split}
\RS_{\cD^{0}} \equiv & \ \Tr_{g_{\d}}(\Ric_{\cD^{0}}) \\
= & \ \frac{1}{2}  \fk(t_{\lambda},t_{\mu},t_{\nu}) ( \fk(t^{\lambda}_{\d}, t^{\mu}_{\d}, t^{\nu}_{\d}) - 2 \fk( t^{\mu}_{\d}, t^{\lambda}_{\d}, t^{\nu}_{\d}) ) \\
= & \ - \frac{1}{2} \fk( t_{\lambda}, t_{\mu}, t_{\nu}) ( \fk(t^{\nu}_{\d}, t^{\lambda}_{\d}, t^{\mu}_{\d})\,+ \\
&\kern.5cm + \fk(t^{\mu}_{\d}, t^{\lambda}_{\d}, t^{\nu}_{\d}) + \chi(t^{\lambda}_{\d}, t^{\mu}_{\d}, t^{\nu}_{\d})) \\
= & \ - \frac{1}{2}\fk(t_{\lambda}, t_{\mu},t_{\nu}) \chi(t^{\lambda}_{\d}, t^{\mu}_{\d}, t^{\nu}_{\d}) \\
= & \ \frac{1}{6} \chi(t_{\lambda},t_{\mu},t_{\nu}) \chi(t^{\lambda}_{\d}, t^{\mu}_{\d}, t^{\nu}_{\d}) \equiv  \< \chi, \chi \>_{\d}. 
\end{split}
\end{equation}
In other words, $\RS_{\cD^{0}}$ is the square of the norm of the canonical Cartan $3$-form $\chi$ with respect to the metric $\<\cdot,\cdot\>_{\d}$. We will show in the moment that for unimodular $\g$ it actually vanishes. This is in fact necessary for $\cD^{\sigma}$ to satisfy the assumption of theorem \ref{thm_main}. Note that $\<\chi,\chi\>_{\d}$ does not depend on the generalized metric $\E_{+} \subset \d$ and its vanishing is ensured solely by the algebraic structure of $(\d,\g)$. 

Now, let $\gm$ denote the positive definite metric on $\d$ induced by the generalized metric $\E_{+}$ as above (\ref{eq_pluscurvature}), so that $\RS_{\cD^{0}}^{+} = \Tr_{\gm}(\Ric_{\cD^{0}})$. Let $(t^{\mu}_{\gm})_{\mu=1}^{\dim(\d)}$ be defined by $\gm( t^{\mu}_{\gm}, t_{\nu}) = \delta^{\mu}_{\nu}$. We thus find the expression
\begin{equation}
\RS_{\cD^{0}}^{+} = \frac{1}{2} \fk(t_{\lambda},t_{\mu}, t_{\nu}) ( \fk(t^{\lambda}_{\d}, t^{\mu}_{\d}, t^{\nu}_{\gm}) - 2 \fk( t^{\mu}_{\d}, t^{\lambda}_{\d}, t^{\nu}_{\gm})).
\end{equation}
At this moment, the calculation becomes unpleasant. Recall that $\d = \E_{+} \oplus \E_{-}$. Choose the basis $(t_{\mu})_{\mu=1}^{\dim(\d)}$ to be adapted to this splitting, that is fix the two bases $(t_{k}^{+})_{k=1}^{\dim{(\g})}$ and  $(t_{k}^{-})_{k=1}^{\dim(\g)}$ of $\E_{+}$ and $\E_{-}$, respectively, and combine them into a basis of $\d$. 

There are again induced bases $(t^{k\pm}_{\gm})_{k=1}^{\dim(\g)}$ of $\E_{\pm}$ defined by $\gm( t^{k \pm}_{\g}, t^{\pm}_{l}) = \delta^{k}_{l}$. The advantage of this basis is that it also satisfies the relation $\< t^{k \pm}_{\gm}, t^{\pm}_{l} \>_{\d} = \pm \delta^{k}_{l}$. There is nothing interesting on the evaluation of $\RS_{\cD^{0}}^{+}$ and we take the liberty to omit the routine manipulations leading to the following expression. We use the similar tricks as to deal with the $(+++)$ and $(---)$ components as in the calculation of $\RS_{\cD^{0}}$. The mixed ones are treated using the consequences of the compatibility of $\cD^{0}$ with $\E_{+}$ and the torsion-free property (\ref{eq_kpluscyclic}) in the form 
\begin{equation} 
\begin{split}
\fk(x_{+},y_{-},z_{-}) = & \ - \chi(x_{+},y_{-},z_{-}), \\
\fk(x_{-},y_{+},z_{+}) = & \ - \chi(x_{-},y_{+},z_{+}), 
\end{split}
\end{equation}
for all $x,y,z \in \d$. 
The resulting expression for $\RS^{+}_{\cD^{0}}$ is 
\begin{equation}
\begin{split}
\RS_{\cD^{0}}^{+} = & \ \frac{1}{6} \chi(t_{i}^{+},t_{j}^{+},t_{k}^{+}) \chi( t^{i+}_{\gm}, t^{j+}_{\gm}, t^{k+}_{\gm})\,+ \\
& + \frac{1}{6} \chi(t_{i}^{-},t_{j}^{-},t_{k}^{-}) \chi( t^{i-}_{\gm}, t^{j-}_{\gm}, t^{k-}_{\gm})\,- \\
& - \frac{1}{2} \chi(t_{i}^{-}, t_{j}^{+}, t_{k}^{+}) \chi(t^{i-}_{\gm}, t^{j+}_{\gm}, t^{k+}_{\gm})\,- \\
& - \frac{1}{2} \chi(t_{i}^{+}, t_{j}^{-}, t_{k}^{-}) \chi(t^{i+}_{\gm}, t^{j-}_{\gm}, t^{k-}_{\gm}).
\end{split}
\end{equation}
First, note that $\RS^{+}_{\cD^{0}}$ indeed does not depend on the choice of the divergence-free connection $\cD^{0}$. There are now several ways how to rewrite $\RS^{+}_{\cD^{0}}$, none of them particularly elegant. For example, if we define $\<\chi,\chi\>_{\gm} = \frac{1}{6} \chi( t_{\mu}, t_{\nu}, t_{\lambda}) \chi( t^{\mu}_{\gm}, t^{\nu}_{\gm}, t^{\lambda}_{\gm})$, one has 
\begin{equation}
\begin{split}
\<\chi,\chi\>_{\gm} = & \ \frac{1}{6} \chi(t_{i}^{+},t_{j}^{+},t_{k}^{+}) \chi( t^{i+}_{\gm}, t^{j+}_{\gm}, t^{k+}_{\gm}) \,+\\
& + \frac{1}{6} \chi(t_{i}^{-},t_{j}^{-},t_{k}^{-}) \chi( t^{i-}_{\gm}, t^{j-}_{\gm}, t^{k-}_{\gm})\,+ \\
& + \frac{1}{2} \chi(t_{i}^{-}, t_{j}^{+}, t_{k}^{+}) \chi(t^{i-}_{\gm}, t^{j+}_{\gm}, t^{k+}_{\gm})\,+ \\
& + \frac{1}{2} \chi(t_{i}^{+}, t_{j}^{-}, t_{k}^{-}) \chi(t^{i+}_{\gm}, t^{j-}_{\gm}, t^{k-}_{\gm}).
\end{split}
\end{equation}
We see that the last two terms come with the different sign. One thus finds
\begin{equation}
\begin{split}
\RS^{+}_{\cD^{0}} = & \ \<\chi,\chi\>_{\gm} - \chi(t_{i}^{-},t_{j}^{+}, t_{k}^{+}) \chi( t^{i-}_{\gm}, t^{j+}_{\gm}, t^{k+}_{\gm})\,- \\
& - \chi(t_{i}^{+}, t_{j}^{-}, t_{k}^{-}) \chi(t^{i+}_{\gm}, t^{j-}_{\gm}, t^{k-}_{\gm}).
\end{split}
\end{equation}
This can be mildly simplified further, we can write
\begin{equation}
\RS^{+}_{\cD^{0}} = \<\chi,\chi\>_{\gm} - \gm( [t_{i}^{-}, t_{j}^{+}]_{\d}, [t^{i-}_{\gm}, t^{j+}_{\gm}]_{\d}). 
\end{equation}
We will use this expression as it contains no unnecessary prefactors. Finally, let us turn our attention to the tensorial part of the equations. Again using the split basis for the summation in (\ref{eq_Ric0eqfinal}), for all $x,y \in \d$, one immediately obtains 
\begin{equation}
\Ric_{\cD^{0}}(x_{+},y_{-}) = -\chi( t_{i}^{-}, t_{j}^{+}, x_{+}) \chi(t^{i-}_{\gm}, t^{j+}_{\gm}, y_{-}). 
\end{equation}
See that the right-hand side does not depend on the choice of divergence-free $\cD^{0}$ at all, as we claimed before. Let us summarize this subsection with the proposition. 
\begin{tvrz}
The system of algebraic equations of Theorem \ref{thm_main2} can be rewritten as 
\begin{subequations}
\begin{align}
\label{eq_eom1} 0 = & \ \RS_{\cD^{0}}^{+} =  \<\chi,\chi\>_{\gm} - \gm( [t_{i}^{-},t_{j}^{+}]_{\d}, [t^{i-}_{\gm}, t^{j+}_{\gm}]_{\d}), \\
\label{eq_eom2} 0 = & \ \Ric_{\cD^{0}}(x_{+},y_{-}) = -\chi(t_{i}^{-},t_{j}^{+},x_{+}) \chi( t^{i-}_{\gm}, t^{j+}_{\gm}, y_{-}), 
\end{align}
\end{subequations}
where $\gm \in S^{2}(\d^{\ast})$ is the positive-definite metric on $\d$ induced by the generalized metric $\E_{+} \subset \d$, $\chi \in \Lambda^{3}(\d^{\ast})$ is the Cartan $3$-form $\chi(x,y,z) = \<[x,y]_{\d},z \>_{\d}$ and $x_{\pm}$ are projections onto $\E_{\pm}$. We use the bases $(t_{i}^{\pm})_{i=1}^{\dim(\g)}$ and $(t^{i \pm}_{\gm})_{i=1}^{\dim(\g)}$ of $\E_{\pm}$ defined above. There also holds an auxiliary equation
\begin{equation}
0 = \< \chi, \chi \>_{\d}. 
\end{equation}
The numbers $\<\chi,\chi\>_{\gm}$ and $\<\chi,\chi\>_{\d}$ are defined via the standard (pseudo)scalar products on $\Lambda^{3}(\d^{\ast})$ induced by the metrics $\gm$ and $\d$. 
\end{tvrz}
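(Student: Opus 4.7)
The plan is to translate the three defining properties of $\cD^{0}$ (metric compatibility, torsion-freeness, divergence-freeness) into linear algebraic constraints on the tensor $\fk(x,y,z) = \< \cD^{0}_{x}(y), z\>_{\d}$, and then to carry out the trace computation for $\Ric_{\cD^{0}}$ using a basis adapted to the decomposition $\d = \E_{+} \oplus \E_{-}$. First I would record that $\cD^{0}(g_{\d})=0$ forces $\fk \in \d^{\ast} \otimes \Lambda^{2} \d^{\ast}$, that preservation of $\E_{+}$ gives $\fk(x,y_{+},z_{-})=0$, that torsion-freeness collapses to the cyclic identity $\fk_{a} = -\tfrac{1}{3} \chi$, and that the assumed $\Div_{\cD^{0}}=0$ kills the contraction $\fk(t_{\mu},z,t^{\mu}_{\d})$. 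These are the only facts about $\cD^{0}$ that will be used.

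Next I would plug the defining formula for $R_{\cD}$ (applied to the bracket $-[\cdot,\cdot]_{\d}$) into the trace that defines $\Ric_{\cD^{0}}$. The $\fK$-term symmetrization together with the $R^{(0)}$ terms produces six contributions; the divergence-free condition eliminates the one proportional to $\fk'$, and repeated use of the cyclic identity $\fk(a,b,c)+\fk(b,c,a)+\fk(c,a,b) = -\chi(a,b,c)$ fuses the remaining five into the compact formula
\begin{equation*}
\Ric_{\cD^{0}}(x,y) = \tfrac{1}{2}\,\fk(t_{\lambda},t_{\mu},y)\bigl(\fk(t^{\lambda}_{\d},t^{\mu}_{\d},x)-2\fk(t^{\mu}_{\d},t^{\lambda}_{\d},x)\bigr).
\end{equation*}
Tracing with $g_{\d}$ and using the cyclic identity once more, every $\fk$ should drop out and leave $\RS_{\cD^{0}} = \<\chi,\chi\>_{\d}$; this is the auxiliary equation and is independent of $\E_{+}$.

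For the main equations I would now insert a basis of $\d$ split as $(t_{i}^{+},t_{j}^{-})$ with $\gm$-duals $(t^{i+}_{\gm},t^{j-}_{\gm})$, noting the sign subtlety $\<t^{k\pm}_{\gm},t^{\pm}_{l}\>_{\d} = \pm\delta^{k}_{l}$. The compatibility condition kills every $\fk$ of type $(\cdot,+,-)$, so in $\Ric_{\cD^{0}}(x_{+},y_{-})$ only the mixed-index slots survive and the torsion-free identity $\fk(x_{+},y_{-},z_{-})=-\chi(x_{+},y_{-},z_{-})$ (and its $+\leftrightarrow-$ counterpart) converts every remaining $\fk$ into a $\chi$. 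A short bookkeeping yields
\begin{equation*}
\Ric_{\cD^{0}}(x_{+},y_{-}) = -\chi(t_{i}^{-},t_{j}^{+},x_{+})\,\chi(t^{i-}_{\gm},t^{j+}_{\gm},y_{-}),
\end{equation*}
manifestly independent of $\cD^{0}$. For $\RS^{+}_{\cD^{0}} = \Tr_{\gm}(\Ric_{\cD^{0}})$ the same substitution generates four types of terms, $(+\!+\!+)$, $(-\!-\!-)$, $(-\!+\!+)$, $(+\!-\!-)$; comparing with the expansion of $\<\chi,\chi\>_{\gm}$ one sees that the pure blocks agree while the mixed blocks carry opposite signs, giving
\begin{equation*}
\RS^{+}_{\cD^{0}} = \<\chi,\chi\>_{\gm} - \gm\bigl([t_{i}^{-},t_{j}^{+}]_{\d},[t^{i-}_{\gm},t^{j+}_{\gm}]_{\d}\bigr).
\end{equation*}

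The main obstacle is bookkeeping rather than conceptual: keeping the $\pm$ signs from $\<t^{k\pm}_{\gm},t^{\pm}_{l}\>_{\d}=\pm\delta^{k}_{l}$ straight while repeatedly invoking the cyclic identity to eliminate the non-canonical pieces of $\fk$ in favor of the canonical Cartan form $\chi$. Once this bookkeeping is done carefully, the independence of $\cD^{0}$ is automatic, since the final expressions contain only $\chi$ and $\gm$, and the statement of the proposition follows.
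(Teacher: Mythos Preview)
Your proposal is correct and follows essentially the same route as the paper: encode $\cD^{0}$ by the tensor $\fk$, record the four constraints (skew-symmetry in the last two slots, $\fk(x,y_{+},z_{-})=0$, $\fk_{a}=-\tfrac{1}{3}\chi$, $\fk'=0$), derive the compact Ricci formula $\Ric_{\cD^{0}}(x,y)=\tfrac{1}{2}\fk(t_{\lambda},t_{\mu},y)(\fk(t^{\lambda}_{\d},t^{\mu}_{\d},x)-2\fk(t^{\mu}_{\d},t^{\lambda}_{\d},x))$, then pass to a basis adapted to $\E_{+}\oplus\E_{-}$ and exploit the mixed torsion-free identities $\fk(x_{+},y_{-},z_{-})=-\chi(x_{+},y_{-},z_{-})$, $\fk(x_{-},y_{+},z_{+})=-\chi(x_{-},y_{+},z_{+})$ together with the sign $\<t^{k\pm}_{\gm},t^{\pm}_{l}\>_{\d}=\pm\delta^{k}_{l}$ to reduce everything to $\chi$ and $\gm$. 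Your observation that the pure $(\pm\pm\pm)$ blocks of $\RS^{+}_{\cD^{0}}$ match those of $\<\chi,\chi\>_{\gm}$ while the mixed blocks flip sign is exactly how the paper organizes the final step.
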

\subsection{Equations in the convenient splitting}
So far, we have not used the richer algebraic structure of the Manin pair $(\d,\g)$. To do so, we will now conveniently choose an isotropic splitting of the sequence (\ref{eq_ManinSES}). Let $E_{0} \in \Hom(\g^{\ast},\g)$ be the invertible map associated to the given isotropic splitting $\tj: \g^{\ast} \rightarrow \d$ as in (\ref{eq_defE0map}). We can uniquely decompose it into its symmetric and skew-symmetric part as $E_{0} = g_{0}^{-1} + \theta_{0}$, where $g_{0} \in S^{2}(\g^{\ast})$ is a positive-definite metric on $\g$ and $\theta_{0} \in \Lambda^{2}(\g)$. It follows that $\gm \in S^{2}(\d^{\ast})$ can be, with respect to the isomorphism $\d \cong \g \oplus \g^{\ast}$, induced by $\tj$ written as a product of block matrices:
\begin{equation} \label{eq_fgtj}
\begin{split}
\gm_{\tj} = & \  \bma{1}{0}{\theta_{0}}{1} \bma{g_{0}}{0}{0}{g_{0}^{-1}} \bma{1}{-\theta_{0}}{0}{1} \\
= & \ \bma{g_{0}}{-g_{0} \theta_{0}}{\theta_{0} g_{0}}{g_{0}^{-1} - \theta_{0} g_{0} \theta_{0}}.
\end{split}
\end{equation}
Note that $g_{0}$ is precisely the pullback of $\gm$ by the inclusion $\ti: \g \rightarrow \d$. Let $\tj'$ be a new isotropic splitting obtained by choosing $\theta = \theta_{0}$ in (\ref{eq_splitchange}). As already noted in (\ref{eq_splitchangedilaton}), this results in $E'_{0} = g_{0}^{-1}$, in particular
\begin{equation}
\gm_{\tj'} = \bma{g_{0}}{0}{0}{g_{0}^{-1}}. 
\end{equation}
This shows that we can always choose a unique isotropic splitting $\tj$ so that $\gm_{\tj}$ is block diagonal. We assume that this is the case in the remainder of this subsection.

As a warm up exercise, let us prove that if $\g \subseteq \d$ is unimodular, one has $\<\chi,\chi\>_{\d} = 0$. As we fixed our splitting, we may identify $\d$ with $\g \oplus \g^{\ast}$ equipped with the bracket $(\ref{eq_doublebracket})$ and the canonical pairing $\<\cdot,\cdot\>_{\d}$. It is then natural to consider a basis of $\g \oplus \g^{\ast}$ in the form
\begin{equation}
(t_{\mu})_{\mu=1}^{\dim(\d)} = (t_{k},0)_{k=1}^{\dim(\g)} \cup (0,t^{k})_{k=1}^{\dim(\g)},
\end{equation}
where $(t_{k})_{k=1}^{\dim(\g)}$ is any basis of $\g$ and $(t^{k})_{k=1}^{\dim(\g)}$ is the corresponding dual one. It is then straightforward to arrive to the expression 
\begin{equation}
\< \chi, \chi \>_{\d} = \< [t_{i},t_{j}]_{\g}, [t^{i},t^{j}]_{\g^{\ast}} \>.
\end{equation}
As noted below (\ref{eq_doublebracket}), there is a compatibility condition between the bracket $[\cdot,\cdot]_{\g}$ and the cobracket $\delta$. Namely,
\begin{equation}
\delta([x,y]_{\g}) = \ad^{(2)}_{x}( \delta(y)) - \ad^{(2)}_{y}( \delta(x)), 
\end{equation}
for all $x,y \in \g$, where $\delta(x)(\xi,\eta) = \<[\xi,\eta]_{\g^{\ast}},x\> $ and $\ad^{(2)}$ is the adjoint representation of $\g$ on $\Lambda^{2}(\g)$. By evaluating this equation on $\xi,\eta \in \g^{\ast}$, one finds
\begin{equation} \label{eq_duality}
\begin{split}
\<[x,y]_{\g}, [\xi,\eta]_{\g^{\ast}} \> = & \ \< \ad^{\ast}_{\xi}(y), \ad^{\ast}_{x}(\eta)\>\,+ \\
& + \< \ad^{\ast}_{y}(\xi), \ad^{\ast}_{\eta}(x) \>\,- \\
& - \< \ad^{\ast}_{\eta}(y), \ad^{\ast}_{x}(\xi)\>\,- \\
& - \< \ad^{\ast}_{y}(\eta), \ad^{\ast}_{\xi}(x) \>.
\end{split}
\end{equation}
Choosing $(x,y,\xi,\eta) = (t_{i},t_{j},t^{i},t^{j})$ then gives 
\begin{equation}
\< \chi,\chi\>_{\d} = 2 \< \falpha, \fa \> = \< (\fa,\falpha), (\fa, \falpha) \>_{\d},
\end{equation}
where $\falpha(x) = \Tr(\ad_{x})$ and $\xi(\fa) = \Tr(\ad_{\xi})$ for all $x \in \g$ and $\xi \in \g^{\ast}$. In particular, if $\g \subseteq \d$ is unimodular, we indeed obtain $\<\chi,\chi\>_{\d} = 0$. 

Next, we have to analyze the right-hand side of (\ref{eq_eom1}). Let us start with $\<\chi,\chi\>_{\gm}$. We have
\begin{equation}
\<\chi,\chi\>_{\gm} = \frac{1}{6} \gm( [t_{\mu},t_{\nu}]_{\d}, [t^{\mu}_{\gm}, t^{\nu}_{\gm}]_{\d}). 
\end{equation} 
In the remainder of this subsection, let us use the notation $e_{k} = g_{0}(t_{k})$ and $e^{k} = g_{0}^{-1}(t^{k})$. By plugging into (\ref{eq_doublebracket}) and using the fact that $\gm$ is block diagonal, one finds
\begin{equation}
\begin{split}
\<\chi,\chi\>_{\gm} = & \ \frac{1}{2} g_{0}( [t_{i},t_{j}]_{\g}, [e^{i},e^{j}]_{\g})\,+ \\
& + \frac{1}{2} g_{0}^{-1}( [t^{i},t^{j}]_{\g^{\ast}}, [e_{i},e_{j}]_{\g^{\ast}}) + \< \mu, \mu\>_{g_{0}}.
\end{split}
\end{equation}
To treat the second term on the right-hand side of (\ref{eq_eom1}), choose the basis vectors as 
\begin{equation}
\begin{split}
t_{k}^{+} = & \ (t_{k},e_{k}), \; \; t_{k}^{-} = (t_{k},-e_{k}), \\
t^{k+}_{\gm} = & \  \frac{1}{2} (e^{k},t^{k}), \; \; t^{k-}_{\gm} = \frac{1}{2}(e^{k},-t^{k}).
\end{split}
\end{equation}
Using this basis and a series of straightforward manipulations, we arrive to the expression
\begin{equation}
\begin{split}
\gm([t_{i}^{-},t_{j}^{+}]_{\d} &, [t^{i-}_{\gm}, t^{j+}_{\gm}]_{\d}) =\\
= & \ \frac{3}{4} g_{0}([t_{i},t_{j}]_{\g}, [e^{i},e^{j}]_{\g})\,+ \\
& + \frac{3}{4} g_{0}^{-1}([t^{i},t^{j}]_{\g^{\ast}}, [e_{i},e_{j}]_{\g^{\ast}})\,+ \\
& + \frac{3}{2} \<\mu,\mu\>_{g_{0}} + \frac{1}{2} \Tr_{g_{0}}(c_{\g}) + \frac{1}{2} \Tr_{g_{0}}(c_{\g^{\ast}})\,- \\
& - \frac{1}{2} \mu(t^{i},t^{j}, g_{0}([t_{i},t_{j}]_{\g})),
\end{split}
\end{equation}
where $c_{\g}$ is the Killing form of $\g$ and $c_{\g^{\ast}}$ is the symmetric bilinear form on $\g^{\ast}$ obtained by the same formula from the bracket $[\cdot,\cdot]_{\g^{\ast}}$ (remember that this is not a Lie algebra in general). Plugging the two partial results into the right-hand side of (\ref{eq_eom1}) leads to the formula
\begin{equation}
\begin{split}
\RS_{\cD^{0}}^{+} = & \ - \frac{1}{4} g_{0}([t_{i},t_{j}]_{\g}, [e^{i},e^{j}]_{\g})\,- \\
& - \frac{1}{4} g_{0}^{-1}( [t^{i},t^{j}]_{\g^{\ast}}, [e_{i},e_{j}]_{\g^{\ast}})\,- \\
& - \frac{1}{2}\<\mu,\mu\>_{g_{0}} - \frac{1}{2} \Tr_{g_{0}}(c_{\g}) - \frac{1}{2} \Tr_{g_{0}}(c_{\g^{\ast}})\,+ \\
& + \frac{1}{2} \mu(t^{i},t^{j}, g_{0}([t_{i},t_{j}]_{\g})). 
\end{split}
\end{equation}
Finally, to analyze the equation (\ref{eq_eom2}), we will introduce a tensor $\Ric^{0}$ on $\g$ via the formula
\begin{equation}
\Ric^{0}(x,y) = \Ric_{\cD^{0}}((x,g_{0}(x)), (y,-g_{0}(y)),
\end{equation}
for all $x,y \in \g$. The condition (\ref{eq_eom2}) is then equivalent to $\Ric^{0} = 0$. This tensor can be uniquely decomposed as $\Ric^{0} = \Ric^{0}_{s} + \Ric^{0}_{a}$ where $\Ric^{0}_{s} \in S^{2}(\g^{\ast})$ and $\Ric^{0}_{a} \in \Lambda^{2}(\g^{\ast})$. The symmetric part is given by the formula
\begin{equation}
\begin{split}
\Ric^{0}_{s}(x,y) = & \ \frac{1}{4} g_{0}([t_{i},t_{j}]_{\g},x) \cdot g_{0}([e^{i},e^{j}]_{\g},y)\,- \\
& - \frac{1}{4} \< [t^{i},t^{j}]_{\g^{\ast}},x\> \cdot \< [e_{i},e_{j}]_{\g^{\ast}},y \>\,- \\
& - \frac{1}{4} \mu(t^{i},t^{j},g_{0}(x)) \cdot g_{0}([t_{i},t_{j}]_{\g},y)\,- \\
& - \frac{1}{4} \mu(t^{i},t^{j},g_{0}(y)) \cdot g_{0}([t_{i},t_{j}]_{\g}, x)\,+ \\
& + \frac{1}{2} g_{0}^{-1}( [g_{0}(x),t^{i}]_{\g^{\ast}}, [g_{0}(y),e_{i}]_{\g^{\ast}})\,- \\
& - \frac{1}{2} g_{0}([x,t_{i}]_{\g}, [y,e^{i}]_{\g})\,+ \\
& + \frac{1}{2} c_{\g^{\ast}}( g_{0}(x), g_{0}(y)) - \frac{1}{2} c_{\g}(x,y)\,+ \\
& + \frac{1}{4} \mu(t^{i},t^{j},g_{0}(x)) \cdot \mu(e_{i},e_{j},g_{0}(y)).
\end{split}
\end{equation}
For the skew-symmetric part, one finds
\begin{equation}
\begin{split}
\Ric^{0}_{a}(x,y) = & \ \frac{1}{4} g_{0}([t_{i},t_{j}]_{\g}, x ) \cdot \< [t^{i},t^{j}]_{\g^{\ast}}, y \>\,- \\
& - \frac{1}{4} g_{0}([t_{i},t_{j}]_{\g}, y ) \cdot \< [t^{i},t^{j}]_{\g^{\ast}}, x \>\,+ \\
& + \frac{1}{4} \< [t^{i},t^{j}]_{\g^{\ast}}, x \> \cdot \mu(e_{i},e_{j},g_{0}(y))\,- \\
& - \frac{1}{4} \< [t^{i},t^{j}]_{\g^{\ast}}, y \> \cdot \mu(e_{i},e_{j},g_{0}(x))\,+ \\
& + \frac{1}{2} \< [ g_{0}(x),t^{i}]_{\g^{\ast}}, [y,t_{i}]_{\g} \>\,- \\
& - \frac{1}{2} \< [g_{0}(y),t^{i}]_{\g^{\ast}}, [x,t_{i}]_{\g} \>\,+ \\
& + \frac{1}{2} \< e^{i}, [g_{0}(x), g_{0}( [y,t_{i}]_{\g})]_{\g^{\ast}} \>\,- \\
& - \frac{1}{2} \< e^{i}, [g_{0}(y), g_{0}( [x,t_{i}]_{\g})]_{\g^{\ast}} \>.
\end{split}
\end{equation}
Some of the terms can now be slightly rearranged using the duality condition (\ref{eq_duality}). One has
\begin{equation}
\begin{split}
g_{0}([t_{i},t_{j}]_{\g},x) \cdot \<[t^{i},t^{j}]_{\g^{\ast}}, y \> = & \ \<y, [\falpha, g_{0}(x)]_{\g^{\ast}} \>\,+ \\
& + g_{0}(x, [\fa,y]_{\g}). 
\end{split}
\end{equation}
In particular, for unimodular $\g$ the first term disappears. We conclude with the formula
\begin{equation}
\begin{split}
\Ric^{0}_{a}(x,y) = & \ \frac{1}{4} g_{0}(x, [\fa,y]_{\g}) - \frac{1}{4} g_{0}(y, [\fa,x]_{\g})\,+ \\
& + \frac{1}{4} \< [t^{i},t^{j}]_{\g^{\ast}}, x \> \cdot \mu(e_{i},e_{j},g_{0}(y))\,- \\
& - \frac{1}{4} \< [t^{i},t^{j}]_{\g^{\ast}}, y \> \cdot \mu(e_{i},e_{j},g_{0}(x))\,+ \\
& + \frac{1}{2} \< [ g_{0}(x),t^{i}]_{\g^{\ast}}, [y,t_{i}]_{\g} \>\,- \\
& - \frac{1}{2} \< [g_{0}(y),t^{i}]_{\g^{\ast}}, [x,t_{i}]_{\g} \>\,+ \\
& + \frac{1}{2} \< e^{i}, [g_{0}(x), g_{0}( [y,t_{i}]_{\g})]_{\g^{\ast}} \>\,- \\
& - \frac{1}{2} \< e^{i}, [g_{0}(y), g_{0}( [x,t_{i}]_{\g})]_{\g^{\ast}} \>.
\end{split}
\end{equation}
Next, note that we can take the trace of the symmetric tensor $\Ric_{s}^{0}$ using the metric $g_{0}$. We thus obtain another scalar (which depends on the splitting) $\RS^{0}_{\cD^{0}} = \Tr_{g_{0}}( \Ric^{0}_{s})$. Explicitly, one finds
\begin{equation}
\begin{split}
\RS^{0}_{\cD^{0}} = & \ - \frac{1}{4} g_{0}([t_{i},t_{j}]_{\g}, [e^{i},e^{j}]_{\g})\,+ \\
& + \frac{1}{4} g_{0}^{-1}([ t^{i},t^{j}]_{\g^{\ast}}, [e_{i},e_{j}]_{\g^{\ast}})\,+ \\
& + \frac{3}{2}\<\mu,\mu\>_{g_{0}} - \frac{1}{2} \Tr_{g_{0}}(c_{\g}) + \frac{1}{2} \Tr_{g_{0}}(c_{\g^{\ast}})\,- \\
& - \frac{1}{2} \mu(t^{i},t^{j}, g_{0}([t_{i},t_{j}]_{\g})).
\end{split}
\end{equation}
This shows that the scalar equation $\RS^{+}_{\cD^{0}} = 0$ is not independent of the tensorial equation $\Ric^{0} = 0$. This is not very surprising as exactly the same thing happens for (super)gravity equations of motion. In fact, by adding and subtracting the two scalars, we obtain
\begin{subequations}
\begin{align}
\RS_{\cD^{0}}^{+} + \RS_{\cD^{0}}^{0} = & \ - \frac{1}{2} g_{0}([t_{i},t_{j}]_{\g}, [e^{i},e^{j}]_{\g})\,- \\
&  - \Tr_{g_{0}}(c_{\g}) + \<\mu,\mu\>_{g_{0}}, \nonumber \\ 
\RS_{\cD^{0}}^{+} - \RS_{\cD^{0}}^{0} = & \ - \frac{1}{2} g_{0}^{-1}( [t^{i},t^{j}]_{\g^{\ast}}, [e_{i},e_{j}]_{\g^{\ast}})\,- \\
 & - \Tr_{g_{0}}(c_{\g^{\ast}})  - 2 \<\mu,\mu\>_{g_{0}}\,+ \\
 & + \mu(t^{i},t^{j}, g_{0}([t_{i},t_{j}]_{\g})). \nonumber
\end{align}
\end{subequations}
We will make use of these two scalar equations in the examples.
\subsection{An arbitrary splitting}
The derivation of equations in the previous subsection has one serious defect. We have calculated everything in the special isotropic splitting  making the induced metric on $\g \oplus \g^{\ast}$ block diagonal. However, we do not know $\E_{+}$ in advance to find it. Fortunately, there is an easy way to make it right. Hence assume that $\tj: \g^{\ast} \rightarrow \d$ is an \textit{arbitrary but fixed} isotropic splitting of (\ref{eq_ManinSES}). 

As discussed in previous subsection, the sought for generalized metric $\E_{+}$ can be uniquely described by a pair $(g_{0}, \theta_{0})$ such that the induced fiber-wise metric $\gm_{\tj}$ on $\g \oplus \g^{\ast}$ takes the form (\ref{eq_fgtj}). 
Let $(\g,\delta,\mu)$ be the Lie quasi-bialgebra induced by the choice of the splitting $\tj$. Our goal is to write the equations for $\E_{+}$ in terms of $(\g,\delta,\mu)$ with unknown $(g_{0},\theta_{0})$.

Now, let $\tj': \g^{\ast} \rightarrow \d$ be the isotropic splitting defined by (\ref{eq_splitchange}) where we choose $\theta = \theta_{0}$. This is the (sought for) splitting where $\gm_{\tj'}$ is block diagonal. Let $(\g,\delta', \mu')$ be the Lie quasi-bialgebra induced by $\tj'$. In the previous subsection, we have found the equations of motion in terms of $(\g, \delta', \mu')$ for unknown $g_{0}$. The second variable $\theta_{0}$ is thus introduced only through the induced objects $\delta'$ and $\mu'$, as we will now demonstrate. First, we find
\begin{equation} \label{eq_deltachangeofsplitting}
\begin{split}
\delta'(x)(\xi,\eta) = & \ \< [\tj'(\xi), \tj'(\eta)]_{\d}, \io(x) \>_{\d} \\
= & \ (\delta(x)+ \ad^{(2)}_{x}(\theta_{0}))(\xi,\eta),
\end{split}
\end{equation}
where $\ad^{(2)}$ denotes the adjoint representation of $\g$ on $\Lambda^{2}(\g)$. Recall that $\delta$ is a $1$-cocycle in the Chevalley--Eilenberg complex $\frc^{\bullet}(\g, \Lambda^{2}(\g))$ with the coboundary operator $\Delta$. Then $\delta' = \delta - \Delta(\theta_{0})$, where $\theta_{0} \in \frc^{0}(\g,\Lambda^{2}(\g))$ is viewed as a $0$-cochain. On the level of brackets, we find 
\begin{equation}
[\xi,\eta]'_{\g^{\ast}} = [\xi,\eta]_{\g^{\ast}} + \ad^{\ast}_{\theta_{0}(\xi)}(\eta) - \ad^{\ast}_{\theta_{0}(\eta)}(\xi). 
\end{equation}
The expression regarding the $3$-vector $\mu'$ can be, for all $\xi,\eta,\zeta \in \g^{\ast}$, written as 
\begin{equation} \begin{split}
\mu'(\xi,\eta,\zeta) = & \ \mu(\xi,\eta,\zeta)  + \{ \theta_{0}([\xi,\eta]_{\g^{\ast}}, \zeta)\,+ \\
& + \< [\theta_{0}(\xi), \theta_{0}(\eta)]_{\g}, \zeta \> +  \cyc(\xi,\eta,\zeta) \} \\
= & \ ( \mu - d_{\ast}(\theta_{0}) + \frac{1}{2} [\! [ \theta_{0}, \theta_{0} ] \! ]_{\g} )(\xi,\eta,\zeta).
\end{split}
\end{equation}
Here $d_{\ast}: \Lambda^{\bullet}(\g) \rightarrow \Lambda^{\bullet+1}(\g)$ is the "almost" differential induced by the bracket $[\cdot,\cdot]_{\g^{\ast}}$ given by
\begin{equation}\begin{split} \label{eq_dastalmostdif}
(d_{\ast}(\nu))(\xi_{1},\dots,\xi_{p}) =\\
= \sum_{i < j} (-1)^{i+j} \nu( [\xi_{i},\xi_{j}]_{\g^{\ast}}, \dots, \hat{\xi}_{i}, \dots, \hat{\xi}_{j}, \dots, \xi_{p}),
\end{split}
\end{equation}
for any $\nu \in \Lambda^{p}(\g)$ and $\xi_{1},\dots,\xi_{p} \in \g^{\ast}$. Note that in general $d_{\ast}^{2} \neq 0$. Finally, $[\![\theta_{0},\theta_{0}]\!]_{\g} \in \Lambda^{3}(\g)$ is the algebraic Schouten-Nijenhuis bracket defined for all $\xi,\eta,\zeta \in \g^{\ast}$ as
\begin{equation} \label{eq_aSNbracket}
\begin{split}
[\![\theta_{0},\theta_{0}]\!]_{\g}(\xi,\eta,\zeta) = & \ 2 \< [ \theta_{0}(\xi), \theta_{0}(\eta)]_{\g}, \zeta \>\,+ \\
& + \cyc(\xi,\eta,\zeta). 
\end{split}
\end{equation}

Let us conclude this section with the final statement.

\begin{theorem}[\textbf{The algebraic equations of motion}] \label{thm_EOMalgebraic}
Let $(\d,\g)$ be any Manin pair with unimodular $\g$ and let $\tj: \g^{\ast} \rightarrow \d$ be any isotropic splitting of the sequence (\ref{eq_ManinSES}). Let $(\g,\delta,\mu)$ be the corresponding Lie quasi-bialgebra defined by (\ref{eq_defdeltamu}). Let $\E_{+} \subseteq \d$ be a generalized metric parametrized by $g_{0} \in S^{2}(\g^{\ast})$ and $\theta_{0} \in \Lambda^{2}(\g)$, such that 
\begin{equation}
\E_{+} = \{ \ti( (g_{0}^{-1} + \theta_{0})(\xi)) + \tj(\xi) \; | \; \xi \in \g^{\ast} \}. 
\end{equation}
Let $\delta' = \delta - \Delta(\theta_{0})$ and $\mu' = \mu - d_{\ast}(\theta_{0}) + \frac{1}{2} [\![ \theta_{0}, \theta_{0} ]\!]_{\g}$, where the notation is explained in the preceding paragraphs\footnote{c.f. text between equations (\ref{eq_deltachangeofsplitting}) and (\ref{eq_aSNbracket}).}. Let $[\cdot,\cdot]'_{\g^{\ast}}$ be the $\R$-bilinear skew-symmetric bracket on $\g^{\ast}$ corresponding to $\delta'$. Let $\cD^{0} \in \LC(\d,\E_{+})$ be an arbitrary divergence-free Levi-Civita connection. 

Then the generalized scalar curvature $\RS^{+}_{\cD^{0}}$ of $\cD^{0}$ with respect to $\E_{+}$ can be written as 
\begin{equation} \label{eq_EOMscalar1}
\begin{split}
\RS^{+}_{\cD^{0}} = & \ - \frac{1}{4} g_{0}([t_{i},t_{j}]_{\g}, [e^{i},e^{j}]_{\g})\,- \\
& - \frac{1}{4} g_{0}^{-1}([t^{i},t^{j}]'_{\g^{\ast}}, [e_{i},e_{j}]'_{\g^{\ast}})\,- \\
& - \frac{1}{2} \<\mu',\mu'\>_{g_{0}} - \frac{1}{2} \Tr_{g_{0}}(c_{\g}) - \frac{1}{2} \Tr_{g_{0}}(c'_{\g^{\ast}})\,+ \\
& + \frac{1}{2} \mu'(t^{i},t^{j}, g_{0}([t_{i},t_{j}]_{\g})).
\end{split}
\end{equation}
The tensorial equation $\Ric_{\cD^{0}}(\E_{+},\E_{-}) = 0 $ is equivalent to the vanishing of a tensor $\Ric^{0}$ on $\g$ whose symmetric and skew-symmetric parts are defined as
\begin{subequations}
\begin{equation} \label{eq_EOMtensor1}
\begin{split}
\Ric^{0}_{s}(x,y) = & \ \frac{1}{4} g_{0}([t_{i},t_{j}]_{\g},x) \cdot g_{0}([e^{i},e^{j}]_{\g},y)\,- \\
& - \frac{1}{4} \< [t^{i},t^{j}]'_{\g^{\ast}},x\> \cdot \< [e_{i},e_{j}]'_{\g^{\ast}},y \>\,- \\
& - \frac{1}{4} \mu'(t^{i},t^{j},g_{0}(x)) \cdot g_{0}([t_{i},t_{j}]_{\g},y)\,- \\
& - \frac{1}{4} \mu'(t^{i},t^{j},g_{0}(y)) \cdot g_{0}([t_{i},t_{j}]_{\g}, x)\,+ \\
& + \frac{1}{2} g_{0}^{-1}( [g_{0}(x),t^{i}]'_{\g^{\ast}}, [g_{0}(y),e_{i}]'_{\g^{\ast}})\,- \\
& - \frac{1}{2} g_{0}([x,t_{i}]_{\g}, [y,e^{i}]_{\g})\,+ \\
& + \frac{1}{2} c'_{\g^{\ast}}( g_{0}(x), g_{0}(y)) - \frac{1}{2} c_{\g}(x,y)\,+ \\
& + \frac{1}{4} \mu'(t^{i},t^{j},g_{0}(x)) \cdot \mu'(e_{i},e_{j},g_{0}(y)),
\end{split}
\end{equation}
\begin{equation} \label{eq_EOMtensor2}
\begin{split}
\Ric^{0}_{a}(x,y) = & \ \frac{1}{4} g_{0}(x, [\fa',y]_{\g}) - \frac{1}{4} g_{0}(y, [\fa',x]_{\g})\,+ \\
& + \frac{1}{4} \< [t^{i},t^{j}]'_{\g^{\ast}}, x \> \cdot \mu'(e_{i},e_{j},g_{0}(y))\,- \\
& - \frac{1}{4} \< [t^{i},t^{j}]'_{\g^{\ast}}, y \> \cdot \mu'(e_{i},e_{j},g_{0}(x))\,+ \\
& + \frac{1}{2} \< [ g_{0}(x),t^{i}]'_{\g^{\ast}}, [y,t_{i}]_{\g} \>\,- \\
& - \frac{1}{2} \< [g_{0}(y),t^{i}]'_{\g^{\ast}}, [x,t_{i}]_{\g} \>\,+ \\
& + \frac{1}{2} \< e^{i}, [g_{0}(x), g_{0}( [y,t_{i}]_{\g})]'_{\g^{\ast}} \>\,- \\
& - \frac{1}{2} \< e^{i}, [g_{0}(y), g_{0}( [x,t_{i}]_{\g})]'_{\g^{\ast}} \>.
\end{split}
\end{equation}
\end{subequations}
Here $(t_{i})_{i=1}^{\dim(\g)}$ and $(t^{i})_{i=1}^{\dim(\g)}$ denote mutually dual bases of $\g$ and $\g^{\ast}$, respectively. We write $e_{i} = g_{0}(t_{i})$, $e^{i} = g_{0}^{-1}(e^{i})$. By $c_{\g}$ we denote the Killing form of $\g$ and $c'_{\g^{\ast}}$ is the symmetric bilinear form on $\g^{\ast}$ defined using the same formula and bracket $[\cdot,\cdot]'_{\g^{\ast}}$. Finally, $\xi(\fa') = \Tr(\ad'_{\xi})$. 

If the tensorial equation holds together with the scalar condition $\RS_{\cD^{0}}^{+} = 0 $, the following two scalar equations must hold as well:
\begin{subequations}
\begin{align}
\label{eq_EOMscalar2} 0 = & \ - \frac{1}{2} g_{0}([t_{i},t_{j}]_{\g}, [e^{i},e^{j}]_{\g}) - \Tr_{g_{0}}(c_{\g}) + \<\mu',\mu'\>_{g_{0}}, \\
\label{eq_EOMscalar3} 0 = & \ - \frac{1}{2} g_{0}^{-1}( [t^{i},t^{j}]'_{\g^{\ast}}, [e_{i},e_{j}]'_{\g^{\ast}})  - \Tr_{g_{0}}(c'_{\g^{\ast}})\,-  \\
& \ - 2 \<\mu',\mu'\>_{g_{0}} + \mu'(t^{i},t^{j}, g_{0}([t_{i},t_{j}]_{\g})). \nonumber
\end{align}
\end{subequations}
\end{theorem}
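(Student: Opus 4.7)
The plan is to reduce the general case to the block-diagonal situation already treated in Subsection B.2, using the freedom to change the isotropic splitting. Given the arbitrary splitting $\tj : \g^{\ast} \to \d$ and the pair $(g_0,\theta_0)$ parametrizing $\E_+$, I would pass to the shifted splitting $\tj'(\xi) = \tj(\xi) + \ti(\theta_0(\xi))$ obtained from (\ref{eq_splitchange}) with $\theta = \theta_0$. By the very definition of this parametrization, the map $E_0$ associated to $\tj'$ is $E'_0 = g_0^{-1}$, which is symmetric; hence $\gm_{\tj'}$ is block diagonal in the form already analyzed in Subsection B.2. All formulas derived there for $\RS_{\cD^0}^+$, for $\Ric_s^0$, for $\Ric_a^0$, and for the auxiliary combinations $\RS_{\cD^0}^+ \pm \RS_{\cD^0}^0$ therefore apply \emph{verbatim} with the structural data $([\cdot,\cdot]_{\g^{\ast}}, \mu, c_{\g^{\ast}}, \fa)$ replaced by the primed data $([\cdot,\cdot]'_{\g^{\ast}}, \mu', c'_{\g^{\ast}}, \fa')$ coming from the Lie quasi-bialgebra $(\g,\delta',\mu')$ of the splitting $\tj'$. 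This already supplies formulas (\ref{eq_EOMscalar1}), (\ref{eq_EOMtensor1}), (\ref{eq_EOMtensor2}) \emph{provided} the primed data matches the claimed expressions in terms of $\theta_0$ and the unprimed Lie quasi-bialgebra of $\tj$.

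The main remaining task is thus to verify the change-of-splitting formulas $\delta' = \delta - \Delta(\theta_0)$ and $\mu' = \mu - d_{\ast}(\theta_0) + \tfrac{1}{2}[\![\theta_0,\theta_0]\!]_{\g}$ by direct substitution into (\ref{eq_defdeltamu}). For $\delta'$, plugging $\tj' = \tj + \ti\circ\theta_0$ into the first equation of (\ref{eq_defdeltamu}) and using both the invariance of $\<\cdot,\cdot\>_{\d}$ under $[\ti(\cdot),\cdot]_{\d}$ and the fact that $\tj(\g^{\ast})$ is Lagrangian, one finds the extra contribution to equal $\xi([x,\theta_0(\eta)]_{\g}) - \eta([x,\theta_0(\xi)]_{\g})$, which is exactly $-(\Delta \theta_0)(x)(\xi,\eta)$ by the definition of the Chevalley--Eilenberg coboundary of the $0$-cochain $\theta_0 \in \frc^0(\g,\Lambda^2\g)$. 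Dualizing yields the claimed formula for $[\cdot,\cdot]'_{\g^{\ast}}$. For $\mu'$, the analogous substitution in the second equation of (\ref{eq_defdeltamu}) generates four classes of terms: the untouched $\mu$; three mixed terms with one $\ti\theta_0$ and two $\tj$'s, which combine after cyclic symmetrization into $-d_{\ast}(\theta_0)(\xi,\eta,\zeta)$ using (\ref{eq_dastalmostdif}); three terms with two $\ti\theta_0$'s and one $\tj$, giving $\tfrac{1}{2}[\![\theta_0,\theta_0]\!]_{\g}(\xi,\eta,\zeta)$ by (\ref{eq_aSNbracket}); and one triple term with three $\ti\theta_0$'s, which vanishes because $\ti(\g)$ is Lagrangian. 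This bookkeeping (tracking the various factors of $\tfrac{1}{2}$ introduced by the skew-symmetrization conventions of $d_{\ast}$ and $[\![\cdot,\cdot]\!]_{\g}$) is where I expect the only nontrivial care to be required.

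The auxiliary scalar identities (\ref{eq_EOMscalar2}) and (\ref{eq_EOMscalar3}) are then immediate: taking the $g_0$-trace of $\Ric_s^0$ yields the scalar $\RS_{\cD^0}^0$ computed at the end of Subsection B.2 with primed data, and adding it to or subtracting it from $\RS_{\cD^0}^+ = 0$ reproduces the two equations after cancellation of the terms $-\tfrac{1}{2}\mu'(t^i,t^j, g_0([t_i,t_j]_{\g}))$ and $\Tr_{g_0}(c_{\g} \mp c'_{\g^{\ast}})$. The unimodularity assumption on $\g$ is used exactly where it was in Subsection B.2, namely to drop the term $\<y,[\falpha,g_0(x)]_{\g^{\ast}}\>$ arising through the duality identity (\ref{eq_duality}) when rewriting $\Ric_a^0$, so that only the $\fa'$-term survives in (\ref{eq_EOMtensor2}). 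Granted the change-of-splitting computations of the second paragraph, which are purely algebraic and local to the Lie quasi-bialgebra structure, Theorem \ref{thm_EOMalgebraic} is then a direct transcription of the results of Subsection B.2 to an arbitrary splitting.
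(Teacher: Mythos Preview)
Your proposal is correct and follows essentially the same approach as the paper: the paper's proof of this theorem is literally the one-line observation that it suffices to take the results of the block-diagonal subsection and replace $[\cdot,\cdot]_{\g^{\ast}}$, $\mu$ by their primed versions, the change-of-splitting formulas $\delta' = \delta - \Delta(\theta_{0})$ and $\mu' = \mu - d_{\ast}(\theta_{0}) + \tfrac{1}{2}[\![\theta_{0},\theta_{0}]\!]_{\g}$ having been derived in the text immediately preceding the theorem. Your sketch of how those formulas arise from expanding $\tj' = \tj + \ti\circ\theta_{0}$ in (\ref{eq_defdeltamu}) and your remarks on where unimodularity and the trace identities enter are accurate and slightly more detailed than the paper's own treatment.
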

\begin{proof}
As already discussed, it suffices to take the results of the previous subsection and everywhere replace $[\cdot,\cdot]_{\g^{\ast}}$ and $\mu$ by $[\cdot,\cdot]'_{\g^{\ast}}$ and $\mu'$, respectively. 
\end{proof}
\end{appendices}

\bibliographystyle{prop2015}
\bibliography{allbibtex}

\end{document}